\lstdefinestyle{mystyle}{
  basicstyle=\ttfamily,
  columns=fullflexible,
  keepspaces=true,
  keywordstyle=\color{green},
}
\newcommand{\execution}[2]{
\scalebox{0.9}{
  \begin{tikzpicture}%
    \foreach \x in {1,...,#1}
    \node[right] at (1.5*\x+0.2,0.25) {$t_{\x}$};
    \draw (1.2,0) -- (#1*1.5+1.2,0);%
    \pgfmathsetmacro{\y}{1};%
    #2%
    \draw (1.2,0) -- (1.2,-0.4*\y);%
    \draw (#1*1.5+1.2,0) -- (#1*1.5+1.2,-0.4*\y);%
    \foreach \x in {2,...,#1}
    \draw[dashed] (1.5*\x-0.3,0) -- (1.5*\x-0.3,-0.4*\y);%
    \draw (1.2,-0.4*\y) -- (#1*1.5+1.2,-0.4*\y);%
  \end{tikzpicture}
}
}
\newcommand{\smallexecution}[2]{
\scalebox{0.7}{
  \begin{tikzpicture}%
    \foreach \x in {1,...,#1}
    \node[right] at (\x,0.25) {$t_{\x}$};
    \draw (1.2-0.3,0) -- (#1*1.2,0);%
    \pgfmathsetmacro{\y}{1};%
    #2%
    % \draw (1.2-0.3,0) -- (1.2-0.3,-0.4*\y);%
    % \draw (#1*1.3,0) -- (#1*1.3,-0.4*\y);%
    \foreach \x in {2,...,#1}
    \draw[dashed] (1*\x-0.3,0) -- (1*\x-0.3,-0.4*\y);%
    \draw (1.2,-0.4*\y) -- (#1*1.2,-0.4*\y);%
  \end{tikzpicture}
}
}
\newcommand{\figev}[2]{
\pgfmathsetmacro{\y}{\y+1};
\pgfmathsetmacro{\y}{\y-1};
\node [left] at (1.25,-0.4*\y)  {\pgfmathprintnumber{\y}};%
\node at (#1*1.5 + 0.45,-0.4*\y) { #2 };%
\pgfmathsetmacro{\y}{\y+1};
}
\newcommand{\smallfigev}[2]{
\pgfmathsetmacro{\y}{\y+1};
\pgfmathsetmacro{\y}{\y-1};
\node [left] at (1,-0.4*\y)  {\pgfmathprintnumber{\y}};%
\node at (#1+0.3,-0.4*\y) { #2 };%
\pgfmathsetmacro{\y}{\y+1};
}
\newcommand{\executionfull}[9]{
\scalebox{#3}{
  \begin{tikzpicture}
    \foreach \x in {1,...,#1}
    \node[right] at (#4*\x+#6, #8) {$t_{\x}$}; % Print tids
    \draw (#7,0) -- (#1*#4+#7,0); % Draw top boundary
    \pgfmathsetmacro{\y}{1};
    #2 % Draw the execution
    \draw (#7,0) -- (#7,-#5*\y); % Draw left boundary
    \draw (#1*#4+#7,0) -- (#1*#4+#7,-#5*\y); % Draw right boundary
    \draw (#7,-#5*\y) -- (#1*#4+#7,-#5*\y); % Draw  bottom boundary
    \ifthenelse{#9 = 1}{
      \foreach \x in {2,...,#1}
      \draw[dashed] (#4*\x+#7-#4,0) -- (#4*\x+#7-#4,-#5*\y); % Draw inter-thread boundaries
    }{}
  \end{tikzpicture}
}
}
\newcommand{\figevfull}[9]{
\ifthenelse{#7 = 1}{
  \ifthenelse{#8 = -1}{
    \node [left] at (#5,(-#4*\y))  {\pgfmathprintnumber{\y}};%
  }{
    \node [left] at (#5,(-#4*\y))  {#9};%
  }
}{}
\node at (#1*#3 + #6,(-#4*\y)) {$ #2 $};%
\pgfmathsetmacro{\y}{\y+1};
}
\DeclareFontFamily{OMX}{MnSymbolE}{}
\DeclareSymbolFont{MnLargeSymbols}{OMX}{MnSymbolE}{m}{n}
\DeclareFontShape{OMX}{MnSymbolE}{m}{n}{
    <-6>  MnSymbolE5
   <6-7>  MnSymbolE6
   <7-8>  MnSymbolE7
   <8-9>  MnSymbolE8
   <9-10> MnSymbolE9
  <10-12> MnSymbolE10
  <12->   MnSymbolE12
}{}
\DeclareFontShape{OMX}{MnSymbolE}{b}{n}{
    <-6>  MnSymbolE-Bold5
   <6-7>  MnSymbolE-Bold6
   <7-8>  MnSymbolE-Bold7
   <8-9>  MnSymbolE-Bold8
   <9-10> MnSymbolE-Bold9
  <10-12> MnSymbolE-Bold10
  <12->   MnSymbolE-Bold12
}{}
\let\llangle\@undefined
\let\rrangle\@undefined
\DeclareMathDelimiter{\llangle}{\mathopen}%
                     {MnLargeSymbols}{'164}{MnLargeSymbols}{'164}
\DeclareMathDelimiter{\rrangle}{\mathclose}%
                     {MnLargeSymbols}{'171}{MnLargeSymbols}{'171}
\newcommand{\figlabel}[1]{\label{fig:#1}}
\newcommand{\figref}[1]{Figure~\ref{fig:#1}}
\newcommand{\seclabel}[1]{\label{sec:#1}}
\newcommand{\secref}[1]{Section~\ref{sec:#1}}
\newcommand{\exlabel}[1]{\label{ex:#1}}
\newcommand{\exref}[1]{Example~\ref{ex:#1}}
\newcommand{\deflabel}[1]{\label{def:#1}}
\newcommand{\defref}[1]{Definition~\ref{def:#1}}
\newcommand{\thmlabel}[1]{\label{thm:#1}}
\newcommand{\thmref}[1]{Theorem~\ref{thm:#1}}
\newcommand{\lemlabel}[1]{\label{lem:#1}}
\newcommand{\lemref}[1]{Lemma~\ref{lem:#1}}
\newcommand{\equlabel}[1]{\label{eq:#1}}
\newcommand{\equref}[1]{Equation~(\ref{eq:#1})}
\newcommand{\applabel}[1]{\label{app:#1}}
\newcommand{\appref}[1]{Appendix~\ref{app:#1}}
\newcommand{\tablabel}[1]{\label{tab:#1}}
\newcommand{\tabref}[1]{Table~\ref{tab:#1}}
\newcommand{\set}[1]{\{#1\}}
\newcommand{\setpred}[2]{\set{#1 \,|\, #2}}
\newcommand{\proj}[2]{#1|_{#2}}
\renewcommand{\emptyset}{\varnothing}
\newcommand{\tr}{\sigma}
\newcommand{\ev}[2]{{#2}^{#1}}
\newcommand{\creorder}[1]{\textsf{CReorderings}(#1)}
\newcommand{\events}[1]{\textsf{Events}_{#1}}
\newcommand{\trord}[1]{<_{#1}}
\newcommand{\rf}[2]{\textsf{rf}_{#1}(#2)}
\newcommand{\alphabet}{\Sigma}
\newcommand{\rwl}{\mathsf{RWL}}
\newcommand{\wf}{\mathsf{WF}}
\newcommand{\lbl}[1]{\mathsf{lab}(#1)}
\newcommand{\threads}{\mathcal{T}}
\newcommand{\vars}{\mathcal{X}}
\newcommand{\locks}{\mathcal{L}}
\newcommand{\lk}{\ell}
\newcommand{\rd}{\mathtt{r}}
\newcommand{\wt}{\mathtt{w}}
\newcommand{\acq}{\mathtt{acq}}
\newcommand{\rel}{\mathtt{rel}}
\newcommand{\op}{\mathsf{op}}
\newcommand{\dep}{\mathbb{D}}
\newcommand{\wdth}{\alpha}
\newcommand{\mazeq}[1]{\sim_{#1}}
\newcommand{\eqcl}[2]{\llbracket#2\rrbracket_{#1}}
\newcommand{\dstrong}{\mathbb{S}}
\newcommand{\dweak}{\mathbb{W}} 
\newcommand{\strongqo}[2]{\preccurlyeq^{#2}_{#1}}
\newcommand{\idealqo}[1]{\sqsubseteq_{#1}}
\newcommand{\nidealqo}[1]{\not\!\sqsubseteq_{#1}}
\newcommand{\strongdcl}[3]{\langle\mkern-5mu\langle{#1}|\mkern-3mu|^{#3}_{#2}}
\newcommand{\idealdcl}[2]{\llbracket{#1}|\mkern-3mu|_{#2}}
\newcommand{\rfqo}{\trianglelefteq_{\mathsf{rf}}}
\newcommand{\rfdcl}[1]{\langle\mkern-5mu\langle{#1}|\mkern-3mu|_{\mathsf{rf}}}
\newcommand{\syncp}[1]{\texttt{SyncP}(#1)}
\newcommand{\confp}[1]{\texttt{ConfP}(#1)}
\definecolor{eventColor}{rgb}{0.86, 0.86, 0.95} % light blue
\definecolor{babyblueeyes}{rgb}{0.63, 0.79, 0.95}
\definecolor{lightskyblue}{rgb}{0.7, 0.81, 0.98}
\definecolor{lavender}{rgb}{0.9, 0.9, 0.98}
\definecolor{classColor}{rgb}{0.13, 0.53, 0.69}
\definecolor{methodColor}{rgb}{0.01, 0.16, 0.48}
\newcommand{\patt}[2]{\alphabet^*{#1}_1\alphabet^*\dots\alphabet^*{#1}_{#2}\alphabet^*}
\newcommand{\seq}[2]{{#1}_1, \dots, {#1}_{#2}}
\newcommand{\tsan}{\textsc{ThreadSanitizer}\xspace}
\newcommand{\ucomment}[1]{\textcolor{red}{#1}}
\newcommand{\zacomment}[1]{}
\begin{document}

%% Title information
\title{Predictive Monitoring with Strong Trace Prefixes}    

\author{}
\institute{}

\author{Zhendong Ang \and
Umang Mathur}

\authorrunning{Z. Ang and U. Mathur}

% \email{zhendong.ang@u.nus.edu} \and
% \email{umathur@comp.nus.edu.sg}     
\institute{National University of Singapore, Singapore}

\maketitle

% \vspace{-0.5in}
%!TEX root=./main.tex

\begin{abstract}
Runtime predictive analyses enhance coverage of traditional dynamic
analyses based bug detection techniques by identifying a space of
feasible reorderings of the observed execution and determining if any of these witnesses
the violation of some desired safety property.
% The space of feasible reorderings considered by the analysis is key in determining its
% running time as well as its exhaustiveness (or \emph{predictive power}).
% Reasoning based on Mazurkiewicz traces (or simply \emph{traces}) --- where the space of reorderings is
% those obtained by repeatedly commuting neighboring
% independent actions --- has emerged to be a simple yet means to model the class of reorderings,
% and has been key to practical performance of predicive techniques.
The most popular approach for modelling the space of feasible reorderings 
is through Mazurkiewicz's trace equivalence.
The simplicity of the framework also gives rise to efficient predictive
analyses, and has been the de facto means for obtaining space and time efficient 
algorithms for monitoring concurrent programs.

In this work, we investigate how to enhance the predictive power 
of trace-based reasoning, while still retaining the algorithmic benefits it offers.
Towards this, we extend trace theory by naturally embedding a class of prefixes, 
which we call \emph{strong trace prefixes}. 
We formally characterize strong trace prefixes using an enhanced dependence relation,
study its predictive power and establish a tight connection to the previously proposed
notion of synchronization preserving correct reorderings
developed in the context of data race and deadlock prediction.
We then show that despite the enhanced predictive power,
strong trace prefixes continue to enjoy the algorithmic
benefits of Mazurkiewicz traces in the context of prediction against
co-safety properties, and derive new algorithms for synchronization preserving data races
and deadlocks with better asymptotic space and time usage.
We also show that strong  trace prefixes can capture more violations
of pattern languages.
We implement our proposed algorithms 
% for predictive monitoring using prefix traces,
and our evaluation confirms the practical utility of reasoning based on strong prefix traces.
\end{abstract}

% \keywords{concurrency, runtime verification, dynamic analysis, trace languages}

% \setlength{\belowcaptionskip}{-5pt}

%!TEX root=./main.tex

\section{Introduction}
\seclabel{intro}

Dynamic analysis has emerged as a popular class of techniques for 
ensuring reliability of large scale software, owing to their 
scalability and soundness (no false positives).
At a high level, such techniques solve the membership problem ---
given an execution $\tr$, typically modelled as a sequence of events, 
does $\tr$ belong to $L_{\text{bug}}$, a chosen set of executions 
that exhibit some undesired behaviour.
In the context of concurrent software, however, such a naive testing paradigm
suffers from poor coverage, since
even under the ideal input, the execution $\tr$ observed at the time of testing,
may not reveal the presence of bug (membership in $L_{\text{bug}}$),
because of the non-determinism due to thread scheduling.
Runtime prediction, which is also the subject this work,
has emerged as a systematic approach to enhance
vanilla dynamic analyses~\cite{wcp2017,RVPredict2014,tuncc2023sound}.
Instead of solving the vanilla membership problem ($\tr \in L_{\textsf{bug}}$),
runtime predictive techniques solve the \emph{predictive membership} or
\emph{predictive monitoring} problem ---
they generalize the 
observed execution $\tr$ to a larger set of executions $S_\tr$ and check if 
there is some execution in $S_\tr$ that belongs to $L_{\text{bug}}$.
% To guarantee soundness, the analysis must ensure that $S_\tr$ only consists of 
% feasible executions, no matter
% what underlying program $\tr$ was observed from.
% \ucomment{Give an example.}

The predictive power (how often real bugs are identified) as well as the
speed of a runtime predictive analysis (or predictive monitoring),
often conflicting goals, crucially depend upon 
the space $S_\tr$ that the analysis reasons about.
In the most general case, $S_\tr$ can be the set all
executions that preserve the control and data flow of $\tr$,
namely \emph{correct reorderings}~\cite{cp2012} of $\tr$.
Analyses that exhaustively reason about the entire space of
correct reorderings have the highest prediction power in theory~\cite{MaximalCausalModel2013},
but quickly become intractable even for very simple classes of bugs~\cite{Mathur2020,wcp2017}.
On the other extreme is the class of trivial analyses which consider 
$S_\tr = \set{\tr}$ but offer no predictive power.
Analyses based on Mazurkiewicz's \emph{trace equivalence} theory~\cite{Mazurkiewicz1987}
opt for middleground and balance predictive 
power with moderate computational complexity of the predictive monitoring question.

In the framework of trace theory, 
one fixes a concurrent alphabet $(\alphabet, \dep)$ consisting of 
a finite set of labels $\alphabet$,
and a symmetric, reflexive dependence relation $\dep \subseteq \alphabet\times\alphabet$.
Now, each string $w \in \alphabet^*$ can be generalized to its
equivalence class $\eqcl{\dep}{w}$, comprising all those strings $w'$
which can be obtained from $w$ by repeatedly swapping neighbouring 
events when their labels are not dependent.
The corresponding predictive monitoring question under trace equivalence 
then translates to the disjointness check $\eqcl{\dep}{\tr} \cap L_{\text{bug}} \neq \emptyset$.
Consider, for example, the execution $\tr_1$ in \figref{race-trace}
consisting of $6$ events $\set{e_i}_{i \leq 6}$ performed by threads $t_1$ and $t_2$.
Consider the \emph{sound }dependence relation $\dep_\rwl$ that
marks pairs of events of the same thread, and pairs of events that write to 
the same memory location or the same lock as dependent.
Here, we say $\dep$ is sound if one can only infer correct reorderings from $\dep$,
i.e., for every well-formed execution $\tr$, $\eqcl{\dep}{\tr} \subseteq \creorder{\tr}$.
It is easy to conclude that $\tr_1$ is equivalent
to the reordering $\rho_1 = e_1e_4e_2e_3e_5e_6$,
i.e., $\rho_1 \in \eqcl{\dep_\rwl}{\tr}$ and thus $\eqcl{\dep}{\tr}$ is not disjoint from 
the set of executions where two $\wt(x)$ events are consecutive.
For a large class of languages 
$L_{\text{bug}}$~\cite{Ochmanski85}, 
this question can, in fact, be answered in a one pass streaming
constant space algorithm,
the holy grail of runtime monitoring,
and has been instrumental in the success of 
industrial strength concurrency bug 
detectors~\cite{threadsanitizer,muehlenfeld2007fault,jannesari2009helgrind+}.
% The algorithmic efficiency in reasoning with Mazurkiewicz traces
% stems from the context insensitivity nature of the orderings
% that a dependence induces --- given two events $e_1$ $e_2$
% in $\tr$, whether or not they can be reordered in an equivalent execution
% solely depends upon the labels of events that occur between them,
% independent of which events precede $e_1$ or succeed $e_2$.

Despite the simplicity and algorithmic efficiency of reasoning
with commutativity of \emph{individual events}, 
trace theory falls short in accurately reason about
commutativity of \emph{atomic} blocks of events in 
executions of concurrent programs. 
Consider, for example, the execution $\tr_2$ from \figref{no-race-trace}.
Here, under the dependence $\dep_\rwl$ described above,
the events $e_1 = \ev{t_1}{\wt(x)}$ and $e_6 = \ev{t_2}{\wt(x)}$
are ordered through the chain of dependence $(e_1, e_3), (e_3, e_4), (e_4, e_6)$.
However, the reordering $\rho'_2 = e_4e_5e_1e_6e_2e_3$ is
a correct reordering of $\tr_2$ and also witnesses the two write events consecutively.
In other words, the equivalence induced by a dependence relation 
can be conservative since commutativity on individual events maybe insufficient
to determine when two blocks commute.
Observe that simply relaxing the dependence relation $\dep_\rwl$
to a smaller set (say by removing dependence on locks)
may be detrimental to soundness as one may infer that the ill-formed execution
$\rho''_2 = e_1e_2e_4e_3e_5e_6$ is equivalent to $\tr_2$.
Indeed, one can show that $\dep_\rwl$ is the most relaxed sound dependence relation.
At the same time, we remark that the efficiency of the algorithms based on
trace equivalence~\cite{goldilocks2007,Djit1999,farzan2008monitoring,fasttrack2009,mathur2020atomicity}
crucially stems from reasoning about commutativity of individual events
(instead of blocks of events).

%!TEX root=../main.tex

\begin{figure}[t]
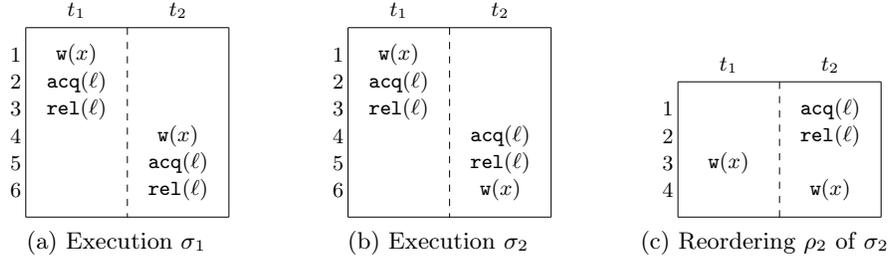

\centering
\subfloat[Execution $\tr_1$ \figlabel{race-trace}]{%
	\execution{2}{
		\figev{1}{$\wt(x)$}
		\figev{1}{$\acq(\lk)$}
		\figev{1}{$\rel(\lk)$}
		\figev{2}{$\wt(x)$}
		\figev{2}{$\acq(\lk)$}
		\figev{2}{$\rel(\lk)$}
	}
}
\hfill
\subfloat[Execution $\tr_2$ \figlabel{no-race-trace}]{%
	\execution{2}{
		\figev{1}{$\wt(x)$}
		\figev{1}{$\acq(\lk)$}
		\figev{1}{$\rel(\lk)$}
		\figev{2}{$\acq(\lk)$}
		\figev{2}{$\rel(\lk)$}
		\figev{2}{$\wt(x)$}
	}
}
\hfill
\subfloat[Reordering $\rho_2$ of $\tr_2$ \figlabel{race-correct-reordering}]{
	\execution{2}{
		\figev{2}{$\acq(\lk)$}
		\figev{2}{$\rel(\lk)$}
		\figev{1}{$\wt(x)$}
		\figev{2}{$\wt(x)$}
	}
}
\caption{Execution $\tr_1$ has a predictable data race, and can be exposed with trace equivalence.
Execution $\tr_2$ has a predictable data race (witnessed by $\rho_2$) which cannot be
exposed under trace equivalence, but can be exposed by strong trace prefixes.}
\figlabel{trace-limitation-race}
\vspace{-0.2in}
\end{figure}

% \ucomment{At the same time, this D is the smallest possible dependence relation as otherwise you can break soundness.}
% \ucomment{Indeed, full-fledged reasoning about commutativity of blocks of code without interleaving them is in general intractable [CITE]}.
% \ucomment{Say that Maz traces give a very syntactic (aka context-free of sorts) treatment of dependencies, that can be inferred just by looking at the labels. We would like to emulate the same feature because it is the main reason to get algorithmic efficiency.}

In this work, we propose taking a different route for 
enhancing the predictive power of trace based reasoning. 
Instead of allowing flexibility for commuting individual
\emph{blocks} of events, 
we observe that we can nevertheless enhance predictive power
by sticking to commutativity of events
but allowing for \emph{greater flexibility in selecting
events that participate in these reorderings.}
Consider, for example, the execution $\rho_2$ in \figref{race-correct-reordering},
which is a correct reordering of $\tr_2$ and also witnesses that
the two write events are consecutive.
Intuitively, one can obtain $\rho_2$ from $\tr_2$
by first dropping the earlier critical section in $t_1$,
thereby unblocking the critical section in $t_2$
so that it can commute to the beginning of the execution using event based commutativity.

In this work, we argue that the above style of reasoning can be formalized
as a simple extension of the classic trace theory and no sophisticated
algebraic formulation may be required.
The dependence relation $\dep$ plays a dual 
role --- ({\bf downward-closure}) for each event $e$ in some reordering, 
all events dependent before $e$ must be present in 
the reordering, and 
 ({\bf order-preservation}) amongst the set of events present in the reordering,
the relative order of all dependent events must be preserved.
Towards this, we propose to make this distinction explicit.
Reflecting on the example above, a key tool we employ here is to
\emph{stratify} the
dependence in $\dep_\rwl$ based on their strength.
On one hand, we have \emph{strong} dependencies,
such as program order, for which both the roles 
({\bf downward-closure}) and ({\bf order-preservation})
must be respected and cannot be relaxed.
On the other hand, we have dependence between lock events,
for which ({\bf order-preservation}) must be kept intact, but
nevertheless the first role ({\bf downward-closure}) can be relaxed.
We formalize this notion, in \secref{def-prefix},
using two sets of dependence relations, a strong dependence 
$\dstrong$ and a weak one $\dweak$, and the resulting 
notion of a \emph{strong trace prefix} of an execution, whose set of
events is downward closed with respect to $\dstrong$ and further,
the relative order on the residual events in it respects the order
induced by $\dstrong \cup \dweak$.

Our generalization of traces to strong trace prefixes
has important advantages.
First, and the most obvious one, is the enhanced  predictive power 
when monitoring
against a language $L_{\text{bug}}$, as we illustrated above.
The second consequence of the 
explicit stratification of the dependence relation, 
is that we can predict against new, previously impossible, 
languages such as those
for deadlock prediction~\cite{tuncc2023sound}.
% \ucomment{We elaborate more on this in ...}
Third, the simplicity of our strong trace prefixes framework and its proximity to the original trace-theoretic framework implies that 
the predictive monitoring question in this new setting is solvable in essentially the same time and space complexity as in the trace-theory setting,
despite the enhanced predictive power it unveils.
We present a unified scheme, in \secref{complexity}, to translate
any predictive algorithm that works under trace equivalence against some language $L_{\text{bug}}$
to one that works under strong trace prefixes (for the same language $L_{\text{bug}}$)
with additional non-determinism (but similar time and space usage)
or alternatively with a polynomial multiplicative blowup in time.
Thus, when the predictive question can be answered in constant space for 
Mazurkiewicz traces (as with data races~\cite{ang2023predictive}), 
it continues to be solvable in constant space for strong trace prefixes.

In \secref{rf} we further shorten the gap between
commutativity style reasoning (aka strong trace prefixes) and
the full semantic space (aka correct reorderings).
% by offloading the burden of soundness from the dependence
% onto the algorithms for predictive monitoring.
In particular, we show that we can further relax the
dependence on conflicting memory locations
($(\rd, \wt)$, $(\wt, \rd)$),
that otherwise ensure soundness, and regain soundness back
by baking in extra \emph{reads-from} constraints in the prefixes.
We define \emph{strong reads-from prefixes} to formalize
the resulting space of reorderings, and show that predictive monitoring
under them can also be done with same time and space complexity as with strong trace prefixes.
Next, in \secref{syncp} we draw an interesting connection between strong reads-from prefixes 
and the class of \emph{synchronization-preserving}
data races~\cite{SyncP2021} and deadlocks~\cite{tuncc2023sound}
which are the fastest and most predictive practical algorithms 
for detecting these concurrency bugs.
We show that while synchronization preserving reorderings are a
larger class of reorderings, strong reads-from prefixes are nevertheless
sufficient to capture the corresponding this class of data races and deadlocks.
As a consequence, we obtain constant space algorithms
for predicting class of bugs, improving the previously known linear space algorithms.

We put the new formalism to test by 
implementing the algorithms that follow from our results,
for various specifications such as data races, 
deadlocks and pattern languages~\cite{ang2023predictive}.
We evaluated them on benchmark program traces derived from real world 
software applications and demonstrate the effectiveness of our formalism
through its enhanced prediction power.
% \ucomment{ADD MORE DETAILS.}

%!TEX root=./main.tex

\section{Predictive Monitoring and Trace Theory}
\seclabel{prelim}

Here we discuss some preliminary background on the predictive monitoring problem,
trace theory and some limitations when applying the latter in the context of the former. 

\myparagraph{Events, executions and monitoring}{
We model an execution as a finite sequence $\tr = e_1, e_2, \ldots, e_k$ 
of events where each event $e_i$ is labelled with a letter $a_i = \lbl{e_i} \in \alphabet$
from a fixed alphabet $\alphabet$.
We will use $\events{\tr}$ to denote the set of events of $\tr$
and use the notation $e_1 \trord{\tr} e_2$ to denote that the event
$e_1$ appears before $e_2$ in the sequence $\tr$.
We will often use the custom alphabet $\alphabet_\rwl$ to label events of 
shared memory multithreaded programs.
For this, we fix sets $\threads$, $\locks$ and $\vars$
of thread, lock, and memory location identifiers.
Then, $\alphabet_\rwl = \setpred{\ev{t}{\op(d)}}{t \in \threads, \op(d) \in \set{\rd(x), \wt(x), \acq(\lk), \rel(\lk)}_{x \in \vars, \lk \in \locks}}$
consists of labels denoting
read/write of
memory locations $\vars$ or acquire/release of locks $\locks$,
each being performed by some thread $t \in \threads$.
Executions of multithreaded programs are assumed to be well-formed, i.e., belong to the regular language $L_\wf \subseteq \alphabet_\rwl^*$ 
that contains all strings where each release event $e$ has a unique matching acquire event $e'$ on the same lock and same thread, and no two critical sections on the same lock overlap.
Our focus here is the runtime monitoring problem against a property 
$L \subseteq \alphabet^*$ --- `given an execution $\tr$, does $\tr \in L$?'
% and is also the focus of our work.
}

\myparagraph{Predictive monitoring and correct reorderings}{
% The effectiveness of runtime monitoring for testing concurrent programs
% can often be adversely affected by sensitivity to thread scheduling ---
% even when the program in consideration has a buggy execution, the
% uncontrolled thread scheduling may never expose the right execution that
% witnesses the buggy behavior.
% Thanks to non-deterministic thread scheduling,
Vanilla dynamic analyses that answer the membership question `$\tr \in L$?' 
often miss bugs thanks to non-deterministic thread interleaving.
Nevertheless, even when an execution $\tr$ does not belong to the target language $L$,
it may still be possible to \emph{predict} bugs in alternate executions that can be inferred from $\tr$.
% that is infer the presence of bugs in the underlying programs
% by considering alternate interleavings of $\tr$.
Here, one first defines the set $\creorder{\tr}$ of 
\emph{correct reorderings}~\cite{cp2012,MaximalCausalModel2013}
of $\tr$ comprising executions similar to $\tr$ in the following
precise sense --- every program $P$ that can generate $\tr$, will also generate
all executions in $\creorder{\tr}$.
% Having access to the set $\creorder{\tr}$ allows one to answer the 
% \emph{predictive monitoring} question against a 
% language $L$ --- `given an execution $\tr$, is $\creorder{\tr} \cap L \neq \emptyset$?'.
For an execution $\tr \in L_\wf \subseteq \alphabet^*_\rwl$ 
of a multithreaded program, $\creorder{\tr}$ 
can be defined to be the set of all executions $\rho$ 
of $\tr$ such that (1) $\events{\rho} \subseteq \events{\tr}$,
(2) $\rho$ is well-formed, i.e., $\rho \in L_\wf$,
(3) $\rho$ is downward-closed with respect to the \emph{program-order} of $\tr$, 
i.e., for any two events $e_1 \trord{\tr} e_2$, if $e_2 \in \events{\rho}$, then $e_1 \in \events{\rho}$, and
(4) for any read event $e_{\rd}$ labelled $\ev{t}{\rd(x)}$, 
the write event $e_{\wt}$
that $e_{\rd}$ reads-from ($e_{\rd} \in \rf{\tr}{e_{\wt}}$)
must also be in $\rho$.
Here, we say that $e_{\rd} \in \rf{\tr}{e_{\wt}}$
if $e_{\rd}$ and $e_{\wt}$ access the same memory location $x$
and there is no other write event $e'_{\wt}$ such that $e_{\wt} \trord{\tr} e'_{\wt} \trord{\tr} e_\wt$.
The \emph{predictive monitoring} question against a 
language $L$ can now be formalized as `given an execution $\tr$, is $\creorder{\tr} \cap L \neq \emptyset$?'.
% Clearly, every $\rho \in \creorder{\tr}$ passes the same control flow
% as $\tr$ and thus any program that generates $\tr$ can also generate $\rho$.
Observe that any witness $\rho \in \creorder{\tr} \cap L$ is a true positive
since every execution in $\creorder{\tr}$ passes the same control flow
as $\tr$ and thus can be generated by any program that generates $\tr$.
In general, this predictive monitoring question
does not admit a tractable solution, even for the simplest class of (regular) languages,
such as the class of executions that contain a data race~\cite{Mathur2020}, 
and has been shown to admit super-linear space hardness even for $2$ threads~\cite{grain2024}.
Practical and sound algorithms for solving the predictive monitoring 
problem~\cite{wcp2017,SHB2018,SyncP2021,Pavlogiannis2019,cp2012} 
often weaken predictive power in favour of soundness by considering a smaller space $S_\tr$ of reorderings.
A set $S_\tr \subseteq \alphabet^*_\rwl$ is said to be
\emph{\underline{sound} for a given execution $\tr \in L_\wf$} if 
$S_\tr \subseteq \creorder{\tr}$, an algorithm that 
restricts its search of reorderings to $S_\tr$
will not report false positives.
}

\myparagraph{Mazurkiewicz traces}{
Trace theory, proposed by Antoni Mazurkiewicz~\cite{Mazurkiewicz1987},
offers a tractable solution to the otherwise intractable predictive monitoring problem,
by characterizing a simpler subclass of reorderings.
Here, one identifies a reflexive and symmetric \emph{dependence} 
relation $\dep \subseteq \alphabet \times \alphabet$, and deems an execution
$\rho$ equivalent to $\tr$ if one can obtain $\rho$ from $\tr$ by
repeatedly swapping neighbouring events when they are not dependent.
Together, $(\alphabet, \dep)$ constitute a \emph{concurrent alphabet}.
Formally, the \emph{trace equivalence} $\mazeq{\dep}$ of the concurrent alphabet $(\alphabet, \dep)$ 
is the smallest
equivalence on $\alphabet^*$ such that for every $w_1, w_2 \in \alphabet^*$
and for every $(a, b) \in \alphabet \times \alphabet \setminus \dep$,
we have 
% \[w_1 \cdot a \cdot b \cdot w_2 \mazeq{\dep} w_1 \cdot b \cdot a \cdot w_2.\]
$w_1 \cdot a \cdot b \cdot w_2 \mazeq{\dep} w_1 \cdot b \cdot a \cdot w_2.$
We use $\eqcl{\dep}{w} = \setpred{w'}{w \mazeq{\dep} w'}$ to denote the equivalence
class of $w \in \alphabet^*$.
}

\myparagraph{Modelling shared-memory concurrency using traces}{
% Recall that the set of correct reorderings satisfy certain well-formedness
% and control-flow preserving properties.
Let us see how traces can (conservatively) model 
a class of correct reorderings, with an appropriate choice of dependence over
$\alphabet_\rwl$.
The dependence $\dep_{\locks} = 
\setpred{(\ev{t_1}{\op_1(\lk)}, \ev{t_2}{\op_2(\lk)})}{\lk \in \locks}$ 
can be used enforce mutual exclusion of critical sections --- for every 
$\rho \in \eqcl{\dep_{\locks}}{\tr}$,
the order of locking events is the same as in $\tr$, and thus
if $\tr$ is well-formed, then so is $\rho$.
Likewise, the dependence $\dep_{\threads} = \setpred{(\ev{t}{\op_1(d_1)}, \ev{t}{\op_2(d_2)})}{t \in \threads}$
is such that every $\rho \in \eqcl{\dep_{\threads}}{\tr}$ preserves the
program order of $\tr$.
Indeed, the dependence $\dep_{\mathsf{HB}} = \dep_{\threads} \cup \dep_{\locks}$
is the classic \emph{happens-before} dependence
employed in modern data race detectors such as \tsan~\cite{threadsanitizer}.
Finally, the dependence
$\dep_{\rwl} = \dep_{\threads} \cup \dep_{\locks} \cup \dep_{\mathsf{conf}}$,
where
$\dep_{\mathsf{conf}} = \setpred{(\ev{t_1}{\op_1(x)}, \ev{t_2}{\op_2(x)})}
{
x \in \vars, \land (\op_1, \op_2) \in \set{(\wt, \rd), (\rd, \wt), (\wt, \wt)}
}$ ensures that for a well-formed execution $\tr$,
we have $\eqcl{\dep_{\rwl}}{\tr} \subseteq \creorder{\tr}$.
The inclusion of $\dep_{\threads}$ ensures that program order is preserved,
$\dep_{\locks}$ ensures well-formedness, while the remaining dependencies
preserve the order of all conflicting pairs of events, and thus the reads-from relation.
Indeed, $\dep_\rwl$ is the smallest dependence that ensures soundness.
Here, we say that $\dep \subseteq \alphabet_\rwl \times \alphabet_\rwl$ is 
sound if for every $\tr \in \alphabet^*$, $\eqcl{\dep}{\tr} \subseteq \creorder{\tr}$,
}

\myparagraph{Predictive monitoring with Traces}{
The predictive monitoring
question under trace equivalence 
induced by a generic concurrent alphabet 
$(\alphabet, \dep)$ becomes --- `given an execution $\tr$, is $\eqcl{\dep}{\tr} \cap L \neq \emptyset$?'.
In general, even when $L$ is regular, this problem cannot be solved faster than $O(|\tr|^\wdth)$.
Here, $\wdth$ is the degree of concurrency in $\dep$, or
the size of the largest set without containing pairwise dependent events~\cite{ang2023predictive}.
For the subclass of \emph{star-connected} regular languages~\cite{Ochmanski85},
this problem can be solved using a constant-space linear-time algorithm.
Star-connected languages include the class of languages that 
can encode data races~\cite{Djit1999,goldilocks2007},
and the class of \emph{pattern languages}~\cite{ang2023predictive} that capture
other temporal bugs.
 % in concurrent software.
}

% \vspace{-0.1in}
\begin{example}
\exlabel{expose-races-traces}
Let $L_{\textsf{race}} = \alphabet_{\rwl}^* \ev{t_1}{\wt(x)} \ev{t_2}{\wt(x)} \alphabet_{\rwl}^*$
be the set of executions that witness a race between
two write accesses on memory location $x$
between threads $t_1$ and $t_2$.
Consider the execution $\tr_1$ illustrated in~\figref{race-trace}
and recall from \secref{intro} that 
$\eqcl{\dep_\rwl}{\tr_1} \cap L_{\textsf{race}} \neq \emptyset$.
% Let us use $e_i$ to denote the $i^{\text{th}}$ event from the top.
% Observe that $\tr \not\in L_{\textsf{race}}$.
% However,
% $\rho_1 = \ev{t_1}{\wt(x)} \ev{t_2}{\wt(x)} \ev{t_1}{\acq(\lk)} \ev{t_1}{\rel(\lk)} \ev{t_2}{\acq(\lk)} \ev{t_2}{\rel(\lk)}$ belongs to
% $L_{\textsf{race}}$, and further
% $\rho_1 \in \eqcl{\dep_\rwl}{\tr_1} \subseteq \creorder{\tr_1}$.
% In other words, $\eqcl{\dep_\rwl}{\tr_1} \cap L_{\textsf{race}} \neq \emptyset$
% and thus, trace equivalence can be used to expose the data race in $\tr_1$.
Further, recall that for the trace $\tr_2$ from \figref{no-race-trace},
we have $\eqcl{\dep_\rwl}{\tr_2} \cap L_{\textsf{race}} = \emptyset$,
even though $\creorder{\tr_2} \cap L_{\textsf{race}} \neq \emptyset$.
In other words, data race prediction based on
trace equivalence may have strictly less predictive power
than prediction based on correct reorderings.
\end{example}
% \begin{example}
% \exlabel{miss-races-traces}
% Let us now consider the execution $\tr_2$ from~\figref{no-race-trace}.
% Here again, $\tr_2 \not\in L_{\textsf{race}}$.
% However observe that
% $\eqcl{\dep_{\rwl}}{\tr} \cap L = \emptyset$.
% This is because the consecutive pairs of labels
% $(\lbl{e_i}, \lbl{e_{i+1}}) \in \dep_\threads \cup \dep_\locks$ for each $1\leq i\leq 5$.
% While trace equivalence cannot be used to infer the presence of a data race,
% the execution $\rho_2 \in L_{\textsf{race}}$ from \figref{race-correct-reordering}
% nevertheless witnesses the predictable race $(\ev{t_1}{\wt(x)}, \ev{t_2}{\wt(x)})$
% because  $\rho_2 \in \creorder{\tr_2}$.
% In other words, data race prediction based on
% trace equivalence may have strictly less predictive power
% than prediction based on the more general set of correct reorderings.
% \end{example}
%!TEX root=../main.tex

\begin{figure}[t]
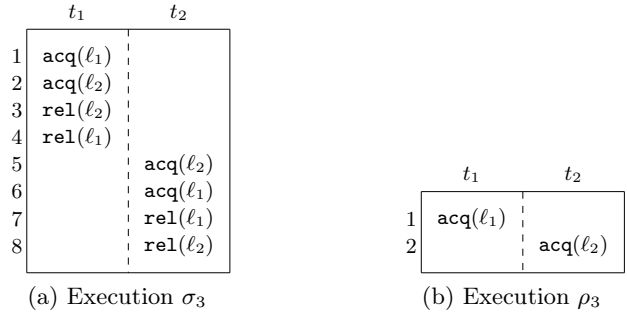

\centering
\subfloat[Execution $\tr_3$ \figlabel{no-deadlock-trace}]{%
	\execution{2}{
		\figev{1}{$\acq(\lk_1)$}
		\figev{1}{$\acq(\lk_2)$}
		\figev{1}{$\rel(\lk_2)$}
		\figev{1}{$\rel(\lk_1)$}
		\figev{2}{$\acq(\lk_2)$}
		\figev{2}{$\acq(\lk_1)$}
		\figev{2}{$\rel(\lk_1)$}
		\figev{2}{$\rel(\lk_2)$}
	}
}
\hfil
\subfloat[Execution $\rho_3$ \figlabel{deadlock-correct-reordering}]{%
	\execution{2}{
		\figev{1}{$\acq(\lk_1)$}
		\figev{2}{$\acq(\lk_2)$}
	}
}
\caption{Execution $\tr_3$ has a predictable deadlock, as witnessed
by the correct reordering $\rho_3$, but cannot be exposed by
without violating $\dep_\locks$.}
\figlabel{trace-limitation-deadlock}
\vspace{-0.3in}
\end{figure}

\vspace{-0.45cm}
\begin{example}
\exlabel{cannot-model-deadlock-traces}
While trace equivalence can expose some data races (as with $\tr_1$ from~\figref{race-trace}),
it can fundamentally not model deadlock prediction.
Consider the execution $\tr_3$ in \figref{no-deadlock-trace}.
It consists of two nested critical sections in inverted order of acquisition.
Any program that generates $\tr_3$ is prone to a deadlock, as witnessed by the correct
reordering $\rho_3$ in \figref{deadlock-correct-reordering}
that acquires $\lk_1$ in $t_1$ and then immediately
switches context to $t_2$ in which lock $\lk_2$ is acquired.
Clearly, the underlying program is deadlocked at this point.
Since $(\lbl{e_3}, \lbl{e_5})$ and $(\lbl{e_4}, \lbl{e_6}) \in \dep_\locks$,
trace equivalence cannot predict this deadlock.
Indeed,
nested critical sections,
acquired in a cyclic order,
can never be reordered to actually expose the deadlock
without violating the dependence between earlier release
events and later acquire events induced by $\dep_\locks$.
\end{example}
%!TEX root=./main.tex

\section{Strong Trace Prefixes}
\seclabel{def-prefix}

Observe that for both the executions
$\tr_2$ (\exref{expose-races-traces}) and 
$\tr_3$ (\exref{cannot-model-deadlock-traces}), 
the correct reordering that exposes
the bug in question can be obtained by relaxing the order of
two events that were otherwise ordered by the dependence relation,
in particular $\dep_\locks$.
Since the dependence $\dep_\locks$
enforces mutual exclusion, it cannot be ignored altogether
without compromising soundness.
For example, setting $\dep_\locks = \emptyset$,
would deem $\rho_2' = \ev{t_1}{\wt(x)} \ev{t_1}{\acq(\lk)}  \ev{t_2}{\acq(\lk)} \ev{t_1}{\rel(\lk)}  \ev{t_2}{\rel(\lk)} \ev{t_2}{\wt(x)}$ to be equivalent to $\tr_2$,
even though $\rho'_2 \not\in \creorder{\tr_2}$.
Nevertheless, both these examples
illustrate a key insight behind how we generalize 
the trace-theoretic framework --- the dependence due to locks is \emph{weak}.
That is, let $e_1 = \ev{t_1}{\rel(\lk)} \trord{\tr} e_2 = \ev{t_2}{\acq(\lk)}$
be events of an execution $\tr$.
If they both appear in a reordering $\rho$ of $\tr$, 
then, under commutativity-style reasoning, we demand that their 
relative order must be $e_1 \trord{\rho} e_2$. 
However, reorderings that drop the entire critical section of $e_1$
may nevertheless be allowed and may not compromise well-formedness.
This is in contrast with \emph{strong} dependence such as those induced  
due to $\dep_\threads$ or \emph{reads-from} --- any reordering must be downward closed
with respect to them.

Building on these insights, we formalize strong trace prefixes by distinguishing
dependencies that are absolutely necessary, i.e., \emph{strong} dependence,
from \emph{weak} dependence, which do not affect \emph{causality}, but only offer
convenience for modelling
 constructs like mutual exclusion in a swap-based equivalence 
like trace equivalence.
We present the formal definition of strong trace prefixes next.

\begin{definition}[Dual Concurrent Alphabet]
\deflabel{dual-conc-alphabet}
A \emph{\underline{dual concurrent alphabet}} is a tuple
$(\alphabet, \dstrong, \dweak)$, where
$\alphabet$ is a finite alphabet,
$\dstrong \subseteq \alphabet \times \alphabet$ is a reflexive and symmetric
\emph{strong} dependence relation, and
$\dweak \subseteq \alphabet \times \alphabet$ is an irreflexive symmetric
\emph{weak} dependence relation.
\end{definition}

\begin{definition}[Strong Trace Prefix]
\deflabel{strong-trace-prefix}
The \emph{\underline{strong trace prefix order}}  induced by
the dual alphabet $(\alphabet, \dstrong, \dweak)$,
denoted $\strongqo{\dstrong}{\dweak}$, is the smallest 
reflexive and transitive binary relation on $\alphabet^*$ that satisfies:
\vspace{-0.2cm}
\begin{enumerate}
	\item $\mazeq{\dstrong \cup \dweak} \; \subseteq \; \strongqo{\dstrong}{\dweak}$, and
	\item for every $u, v \in \alphabet^*$ and for every $a \in \alphabet$,
	if for every $b \in v$, $(a, b) \not\in \dstrong$, then we have
	$u \cdot v \strongqo{\dstrong}{\dweak} u \cdot a \cdot v$
\end{enumerate}
\vspace{-0.2cm}
We say that $w' \in \alphabet^*$ is a \emph{\underline{strong trace prefix}} of $w$
if $w' \strongqo{\dstrong}{\dweak} w$.
We use $\strongdcl{w}{\dstrong}{\dweak} = \setpred{w'\in\alphabet^*}{w' \strongqo{\dstrong}{\dweak} w}$ to denote the \emph{\underline{strong downward closure}} of $w$.
\end{definition}

\noindent
Let us also recall the classical notion of ideal based prefixes
using the above.
For a reflexive symmetric dependence relation 
$\dep \subseteq \alphabet \times \alphabet$, 
we use the notation $\idealqo{\dep}$ to denote the \emph{ideal prefix} relation
$\strongqo{\dep}{\emptyset}$, and call $w_1$ an \emph{\underline{ideal prefix}} of $w_2$
if $w_1 \idealqo{\dep} w_2$.
We use $\idealdcl{w}{\dep} = \setpred{w' \in \alphabet^*}{w' \idealqo{\dep} w}$ to 
denote the \emph{\underline{ideal downward closure}} of $w$. 

A few observations about \defref{strong-trace-prefix} are in order.
First, the relations $\strongqo{\dstrong}{\dweak}$ and $\idealqo{\dep}$ defined here
are not equivalence relations (unlike $\mazeq{\dep}$)
but only quasi orders and relate executions of different lengths (namely strong (or ideal) prefixes).
Second, in the case $\dweak \subseteq\dstrong$, the strong trace prefix
order gives the ideal prefix order $\idealqo{\dstrong \cup \dweak}$.
% i.e., $w_1 \strongqo{\dstrong}{\emptyset} w_2$ iff there is a
% $w'_2$ such that $w'_2 \mazeq{\dstrong} w_2$ and $w_1$ is a prefix of $w'_2$ (i.e., there is a $v$ such that $w'_2 = w_1 \cdot v$).
Third, in general, strong prefixes are more permissive than
ideal prefixes, i.e.,  $\strongqo{\dstrong \cup \dweak}{\emptyset} \subseteq \strongqo{\dstrong}{\dweak}$,
and is key to enhancing the predictive power  of commutativity-style reasoning.

\begin{example}
% Let us illustrate strong trace prefixes using a concrete example.
Consider the alphabet $\alphabet = \set{a, b, c}$.
Let
$\dstrong = \set{(a, a), (b, b), (c, c), (b, c), (c, b)}$
be the strong dependence 
and let $\dweak = \set{(a, b), (b, a)}$ be the weak dependence.
Let $\dep = \dstrong \cup \dweak$ be a traditional Mazurkiewicz-style dependence.
Now, consider the string $w = abacba$.
First, observe the simple equivalence $w \mazeq{\dep} w' = abcaba$. 
Indeed, no other strings in $\alphabet^*$ are $\mazeq{\dep}$-equivalent to $w$.
The ideal prefixes of $w$
$\idealdcl{w}{\dep} = \set{\epsilon, a, ab, aba, abac, abacb, abacba, abc, abca, abcab, abcaba}$
is precisely the set of (string) prefixes of the two strings $w$ and $w'$.
The set of strong trace prefixes induced by $(\alphabet, \dstrong, \dweak)$
is larger though.
First, consider the string $w_1 = abcb$ and observe that
$w_1 \strongqo{\dstrong}{\dweak} w$.
This follows because 
% (1) $w = \underbrace{abcab}_{u_1}a\underbrace{\epsilon}_{v_1}$,
(1) $w = abcab\cdot a\cdot\epsilon$,
and thus $abcab \strongqo{\dstrong}{\dweak} w$, 
% (2) $abcab = \underbrace{abc}_{u_2}a\underbrace{b}_{v_2}$
(2) $abcab = abc\cdot a\cdot b$
and $(a, b) \not\in \dstrong$ and thus $abcb \strongqo{\dstrong}{\dweak} abcab$,
and finally (3) due to transitivity, we have $w_1 \strongqo{\dstrong}{\dweak} w$.
Consider now the string $w_2 = bcb$, and observe that
% $abcb = \underbrace{\epsilon}_{u_3}a\underbrace{bcb}_{v_3}$, 
$abcb = \epsilon\cdot a\cdot bcb$,
giving us $w_2 \strongqo{\dstrong}{\dweak} w_1$
since $\set{(a,b), (a,c)} \cap \dstrong = \emptyset$.
Thus, $w_2 \strongqo{\dstrong}{\dweak} w$.
On the other hand, observe that $w_1 \nidealqo{\dep} w$ and $w_2 \nidealqo{\dep} w$.
\end{example}

\subsection{Modelling correct reorderings with strong trace prefixes}

% Let us now demonstrate how to instantiate
% the abstract framework of strong trace prefixes,
% in the context of shared memory multi-threaded programs. % for predictive monitoring.
Recall that $\dep_\rwl$ ordered events of the same thread,
same locks and conflicting events of a given memory location
allowing us to soundly represent a class of correct reorderings of an execution $\tr$ 
as the equivalence class $\eqcl{\dep_\rwl}{\tr}$.
Here, we identify a finer gradation of $\dep_\rwl$,
to allow for a larger subset of correct reorderings.
Specifically, we define the strong and weak dependence on $\alphabet_\rwl$ as:
\vspace{-0.3cm}
\begin{align}
\begin{array}{c}
% \dweak_{\wt(\vars)} 
\dweak_{\wt} 
= \setpred{(\ev{t_1}{\wt(x)}, \ev{t_2}{\wt(x)})}
{x \in \vars, t_1\neq t_2 \in \threads},
\quad
% &\quad
% &
\dweak_\locks = \setpred{(a^{t_1}, b^{t_2}) \in \dep_\locks}{t_1\neq t_2}\\
\dweak_{\rwl} = \dweak_\locks \cup \dweak_{\wt},
% &\quad
% &
\quad\quad
\dstrong_\rwl = \dep_\rwl \setminus \dweak_\rwl
\end{array}
\vspace{-0.2cm}
\equlabel{rwl}
\end{align} 
% \[\dweak_{\wt(\vars)} = \setpred{(\ev{t_1}{\wt(x)}, \ev{t_2}{\wt(x)})}
% {x \in \vars\text{ and }t_1 \neq t_2} \text{ and}\]
% \[\dweak_\locks = \setpred{(a, b) \in \dep_\locks}{a \neq b},\]
% \[\dweak_{\rwl} = \dweak_\locks \cup \dweak_{\wt(\vars)}.\]
% Similarly, let $\dstrong_\rwl = \dep_\rwl \setminus \dweak_\rwl$.
In other words, the dual concurrent alphabet 
$(\alphabet_\rwl, \dstrong_\rwl, \dweak_\rwl)$
relaxes the `hard ordering' between writes to the same memory location (i.e., `conflicting writes')
as well as that between critical sections of the same lock (i.e., `conflicting lock events').
We next explain the intuition behind the above relaxations.

\myparagraph{Weakening dependence on writes}{
Let us begin by arguing about $\dweak_{\wt}$.
When an execution contains two \emph{consecutive} write events 
$e_1, e_2$ with
$\lbl{e_1} = \ev{t_1}{\wt(x)}$ and $\lbl{e_2} =\ev{t_2}{\wt(x)}$ 
on the same memory location $x \in \vars$, 
then, clearly, there is no event reading from the 
first write event $e_1$ since it is immediately overwritten by $e_2$.
In this case, while flipping the order of $e_1$ and $e_2$ may 
violate the read-from relation of a read event reading from $e_2$, 
observe that $e_1$ can be completely \emph{dropped}  
(in absence of later $\dstrong_\rwl$-dependent events after $e_1$)
without dropping $e_2$ and without affecting any control flow.
In other words, the presence of $e_2$ does not mandate the presence of $e_1$,
but when both are present, the conservative choice of placing $e_1$ before $e_2$
ensures that the reads-from relation is preserved.
% Owing to this insight, $\rho'_5$ in \figref{trace-limitation-ww} can be obtained from $\tr_5$ by excluding $\ev{t_1}{\wt(x)}$ from the prefix because $(\ev{t_1}{\wt(x)}, \ev{t_2}{\wt(x)})\in \dweak_\rwl$.
}
% \zacomment{Example here}

\myparagraph{Weakening dependence on lock events}{
% Let us now discuss why we can weaken the dependence on locks.
Recall that the primary role of the dependence $\dep_\locks$ 
was to ensure mutual exclusion, i.e., two critical sections of the same lock do not overlap
in any execution obtained by repeatedly swapping neighboring independent events.
We identify that this is not a strong dependence, 
in that one can possibly \emph{drop} an earlier critical section entirely,
while retaining a later critical section on the same lock in a 
candidate correct reordering. 
The correct reordering $\rho_2$ of $\tr_2$ in
\figref{race-correct-reordering} %(from \exref{miss-races-traces})
can be obtained by leveraging this insight.
Indeed, $\rho_2 \in \strongdcl{\tr_2}{\dstrong_\rwl}{\dweak_\rwl}$
because $(\ev{t_1}{o_1(\lk)}, \ev{t_2}{o_2(\lk)}) \in \dweak_\rwl$
for $o_1, o_2 \in \set{\acq, \rel}$.
Moreover, in the deadlock example (\figref{trace-limitation-deadlock}), $\rho_3 \in \strongdcl{\tr_3}{\dstrong_\rwl}{\dweak_\rwl}$, 
since the critical section of $l_2$ in thread $t_1$ can be completely dropped without affecting the presence of $\acq(l_2)^{t_2}$.
}

\myparagraph{Well-formedness}{
% Recall, however, that completely ignoring $\dep_\locks$ may 
% result in strong trace prefixes where two complete critical sections on the same lock overlap.
The weak dependence $\dweak_\locks$ ensures that no two \emph{complete} 
critical sections on the same lock overlap in a strong trace prefix 
(provided they did not overlap in the original execution).
However, simply marking lock dependencies as \emph{weak} 
still does not forbid strong trace prefixes where an 
earlier \emph{incomplete} critical section overlaps 
with a later \emph{complete} critical section.
Consider for example, the (ill-formed) execution 
$\rho'_{2} = \ev{t_1}{\wt(x)}\ev{t_1}{\acq(\lk)}\ev{t_2}{\acq(\lk)}\ev{t_2}{\rel(\lk)}\ev{t_2}{\wt(x)}$.
Observe that $\rho'_2$ is a strong trace prefix of
$\tr_2$ under $\dstrong_\rwl$ and $\dweak_\rwl$.
% since the set of its events is downward closed with respect
% to $\dstrong_\rwl$ and both strong and weak dependencies are preserved on whatever events remain.
As we will show in \secref{well-formedness}, we can remedy
this mild peculiarity in the predictive monitoring algorithm.
% based on the observation that
% the set $L_\wf$ is (a) regular, and 
% (b) closed under strong prefixes, i.e.,
% for every $\tr \in L_\wf$, we have $\strongdcl{\tr}{\dstrong_\rwl}{\dweak_\rwl} \subseteq L_\wf$,
% and algorithms for predictive monitoring can be easily augmented
% to reason about the set 
% $\strongdcl{\tr}{\dstrong_\rwl}{\dweak_\rwl}\cap L_\wf$.
% This mild shortcoming about incomplete critical sections,
% however, can be remedied because of the observation that
% the set $L_\wf$ is regular, and closed under strong prefixes 
% (i.e., for every $w \in L_\wf$, $\strongdcl{L_\wf}{\dstrong_\rwl}{\dweak_\rwl} \subseteq L_\wf$).
% by equipping our predictive monitoring algorithms
% so that the strong trace prefixes these algorithms reason about
% are also necessarily well-formed.
% In view of this, we will consider the set 
% $\strongdcl{\tr}{\dstrong_\rwl}{\dweak_\rwl}\cap L_\wf$ in the context of 
% predictive monitoring.
}

\myparagraph{Soundness and Precision Power}{
% Let us now comment about the soundness of our choice of dual concurrent alphabet.
% Recall that a set $S_\tr \subseteq \alphabet^*_\rwl$ is
% \emph{sound for a given execution $\tr \in L_\wf$} if 
% $S_\tr \subseteq \creorder{\tr}$.
Strong trace prefixes retain soundness (as long as they are well-formed)
while enjoying higher predictive power:
 % than trace equivalence and ideal prefixes:
\begin{restatable}{theorem}{soundnessStrongPrefix}{\it\bf(Soundness and Precision Power)}.
\thmlabel{soundness}
	For each well-formed execution $\tr\in L_\wf$, we have: 
	\[
	\eqcl{\dstrong_\rwl\cup \dweak_\rwl}{\tr} \subseteq \idealdcl{\tr}{\dstrong_\rwl\cup \dweak_\rwl} \subseteq \strongdcl{\tr}{\dstrong_\rwl}{\dweak_\rwl} \cap L_\wf \subseteq \creorder{\tr}.
	\]
Moreover, there is a $\tr\in L_\wf$ for which each of the subset relationships are strict.
\end{restatable}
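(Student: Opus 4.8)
\medskip
The plan is to prove the three inclusions in the chain from left to right, and then, separately, to exhibit a single execution (namely $\tr_2$ from \figref{no-race-trace}) that makes all three strict. The leftmost inclusion $\eqcl{\dstrong_\rwl\cup\dweak_\rwl}{\tr}\subseteq\idealdcl{\tr}{\dstrong_\rwl\cup\dweak_\rwl}$ is immediate from \defref{strong-trace-prefix}: $\idealqo{\dep}$ is by definition $\strongqo{\dep}{\emptyset}$, which by clause~1 contains $\mazeq{\dep}$, so every $\rho$ with $\rho\mazeq{\dstrong_\rwl\cup\dweak_\rwl}\tr$ is already an ideal prefix of $\tr$.

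For the middle inclusion I would write $\dep_\rwl=\dstrong_\rwl\cup\dweak_\rwl$ and split the goal into (i) $\idealdcl{\tr}{\dep_\rwl}\subseteq\strongdcl{\tr}{\dstrong_\rwl}{\dweak_\rwl}$ and (ii) $\idealdcl{\tr}{\dep_\rwl}\subseteq L_\wf$. Part~(i) is the general ``strong prefixes are more permissive than ideal prefixes'' fact: the relation $\strongqo{\dstrong_\rwl}{\dweak_\rwl}$ contains $\mazeq{\dep_\rwl}$ and satisfies the deletion clause defining $\idealqo{\dep_\rwl}$ (whenever no letter of $v$ is $\dep_\rwl$-dependent with $a$, a fortiori none is $\dstrong_\rwl$-dependent with $a$), so by minimality of $\idealqo{\dep_\rwl}$ we get the containment. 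For part~(ii) I would first note that each ideal prefix of $\tr$ arises from $\tr$ by finitely many trace-equivalence swaps and deletions of $\dep_\rwl$-maximal events, and then induct on this sequence: swaps preserve $L_\wf$ because $\dep_\threads\cup\dep_\locks\subseteq\dep_\rwl$ keeps the per-thread order and the per-lock order of all events intact, hence preserves matchings and non-overlap of critical sections; and deleting a $\dep_\rwl$-maximal event $a$ preserves $L_\wf$ by a short case analysis --- if $a$ is a memory access it is irrelevant to locking; if $a=\ev{t}{\acq(\lk)}$ then maximality forces $a$ to have no matching release, so removing an open acquire is harmless; and if $a=\ev{t}{\rel(\lk)}$ then its matching acquire simply becomes an open critical section which, again by maximality of $a$, is followed by no further $\lk$-event and hence overlaps nothing.

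The rightmost inclusion $\strongdcl{\tr}{\dstrong_\rwl}{\dweak_\rwl}\cap L_\wf\subseteq\creorder{\tr}$ is the technical heart, and I expect it to be the main obstacle. Given $\rho\strongqo{\dstrong_\rwl}{\dweak_\rwl}\tr$, I would prove by induction on a derivation of this relation (again: swaps, and deletions of $\dstrong_\rwl$-maximal events) the following invariants about the survivors of $\tr$ in $\rho$: (a) $\rho$'s positions embed into $\tr$'s preserving labels, so $\events{\rho}\subseteq\events{\tr}$; (b) any two surviving events that are $(\dstrong_\rwl\cup\dweak_\rwl)$-dependent keep their relative $\tr$-order in $\rho$ --- the point being that neither a swap (which never transposes dependent neighbours) nor a deletion (which never reorders) can break this, so \emph{weak} dependence still rigidly fixes order and only relaxes downward-closure; and (c) the surviving set is downward-closed under the $\dstrong_\rwl$-order of $\tr$ (here invariant~(b) is what lets ``$a$ is $\dstrong_\rwl$-maximal in the current string'' translate back to ``$a$ is $\dstrong_\rwl$-maximal among the survivors in $\tr$''). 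Granting this lemma together with the hypothesis $\rho\in L_\wf$, the four clauses defining $\creorder{\tr}$ follow: clause (1) from (a); clause (2) is the hypothesis; clause (3) from (c) since $\dep_\threads\subseteq\dstrong_\rwl$; and clause (4) because every conflicting pair of accesses to a location lies in $\dep_\rwl\setminus\dweak_\rwl=\dstrong_\rwl$ (or, if same-thread, in $\dep_\threads\subseteq\dstrong_\rwl$), so by (c) the write that a surviving read reads-from in $\tr$ survives, and by (b) no other write of that location can slip between them in $\rho$ --- so $\rho$ in fact preserves the reads-from relation. Getting invariant~(b) stated and deployed correctly, and seeing that (b) together with downward-closure is exactly what pins down reads-from, is where the real work lies; the rest is routine bookkeeping about the label-preserving embedding.

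Finally, for strictness I would take $\tr=\tr_2=\ev{t_1}{\wt(x)}\,\ev{t_1}{\acq(\lk)}\,\ev{t_1}{\rel(\lk)}\,\ev{t_2}{\acq(\lk)}\,\ev{t_2}{\rel(\lk)}\,\ev{t_2}{\wt(x)}$ and give three separators. The empty string $\varepsilon$ is an ideal prefix of $\tr_2$ but is not $\mazeq{\dep_\rwl}$-equivalent to it (different length), so the first inclusion is strict. The reordering $\rho_2$ of \figref{race-correct-reordering} is a well-formed strong trace prefix of $\tr_2$ (as already observed in \secref{def-prefix}) but not an ideal prefix, since its set of survivors $\set{\ev{t_1}{\wt(x)},\ev{t_2}{\acq(\lk)},\ev{t_2}{\rel(\lk)},\ev{t_2}{\wt(x)}}$ is not downward-closed under $\dep_\rwl$ (it omits $\ev{t_1}{\acq(\lk)}$, which is $\dep_\locks$-dependent with the earlier surviving event $\ev{t_2}{\acq(\lk)}$); hence the second inclusion is strict. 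And the full reversal $\rho=\ev{t_2}{\acq(\lk)}\,\ev{t_2}{\rel(\lk)}\,\ev{t_2}{\wt(x)}\,\ev{t_1}{\wt(x)}\,\ev{t_1}{\acq(\lk)}\,\ev{t_1}{\rel(\lk)}$ is a correct reordering of $\tr_2$ (well-formed, program-order downward-closed, no reads) that transposes the $\dep_\rwl$-dependent pair of $\wt(x)$-events, so by invariant~(b) it is not a strong trace prefix of $\tr_2$, and the third inclusion is strict.
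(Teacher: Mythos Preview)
Your proof is correct and follows essentially the same approach as the paper's sketch: induction over the derivation (swaps and $\dstrong_\rwl$-maximal deletions), showing each step preserves thread order, the relative order of all $\dep_\rwl$-dependent survivors, and $\dstrong_\rwl$-downward-closure of the survivor set --- exactly what the paper phrases as ``each step in the fixpoint preserves thread-order, reads-from and well-formedness.'' One small slip to fix: your claim that ``every conflicting pair of accesses to a location lies in $\dep_\rwl\setminus\dweak_\rwl=\dstrong_\rwl$'' is false for cross-thread write--write pairs (those lie in $\dweak_\wt\subseteq\dweak_\rwl$), but your actual argument does not rely on this --- invariant~(c) only needs the $(\wt,\rd)$ pair in $\dstrong_\rwl$ to ensure the writer survives, and invariant~(b) only needs conflicting pairs (including $(\wt,\wt)$) to be in $\dep_\rwl$, which they all are, to rule out an intervening write.
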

}

\myparagraph{Maximality}{
Our choice of the dual concurrent alphabet $(\alphabet_\rwl, \dstrong_\rwl, \dweak_\rwl)$
is also the best one amongst the space of sound dual concurrent alphabets
obtained by stratifying $\dep_\rwl$.
Formally,
% We formalize this \emph{maximality} property as follows.
\begin{restatable}{theorem}{maximality}{\it\bf(Maximality)}.
\thmlabel{maximality}
Let $(\alphabet_\rwl, \dstrong, \dweak)$ be a dual concurrent alphabet such that
$\dep_\rwl \subseteq \dstrong \cup \dweak$, and,
for every $\tr \in \alphabet^*_\rwl$, $\strongdcl{\tr}{\dstrong}{\dweak} \cap L_\wf \subseteq \creorder{\tr}$.
Then, $\dstrong_\rwl \subseteq \dstrong$, and thus,
for every $\tr$,
$\strongdcl{\tr}{\dstrong}{\dweak} \subseteq \strongdcl{\tr}{\dstrong_\rwl}{\dweak_\rwl}$
\end{restatable}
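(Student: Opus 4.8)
The plan is to show that any strong dependence $\dstrong$ that is sound (together with some $\dweak$ and subject to $\dep_\rwl \subseteq \dstrong \cup \dweak$) must already contain $\dstrong_\rwl$. Since $\dstrong_\rwl = \dep_\rwl \setminus \dweak_\rwl = \dep_\threads \cup \dep_{\mathsf{conf}}' \cup (\text{same-thread / self pairs})$ — more precisely $\dstrong_\rwl$ keeps program order, the read/write-conflict pairs $(\rd,\wt)$ and $(\wt,\rd)$, and self-loops, while dropping only conflicting-write pairs and cross-thread lock pairs — it suffices to argue, for each pair $(a,b) \in \dstrong_\rwl$, that if $(a,b) \notin \dstrong$ then soundness fails. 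I would split on the type of the pair. For program-order pairs $(\ev{t}{\op_1(d_1)}, \ev{t}{\op_2(d_2)})$: if such a pair were absent from $\dstrong$, then by clause (2) of \defref{strong-trace-prefix} one could append (or, combined with clause (1) commutations, reorder) so as to obtain a prefix that contains a later event of thread $t$ without its program-order predecessor, which violates downward-closure with respect to program order — one of the defining conditions (3) of $\creorder{\tr}$. Hence such a strong trace prefix would not be a correct reordering, contradicting soundness. For the read-from-conflict pairs I would build a small witness execution, along the lines of $\tr_2$ in \figref{no-race-trace}, in which dropping the $\dstrong$-order between a write $e_\wt$ and a read $e_\rd$ that reads from it lets the read be retained without the write, or lets the read appear before the write, again violating condition (4) (reads-from preservation) of correct reorderings. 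The key background facts I rely on are \thmref{soundness} (showing $\dstrong_\rwl,\dweak_\rwl$ is indeed sound, so the statement is nonvacuous) and the explicit description of $\creorder{\tr}$ and of $\dep_\rwl$'s minimality from \secref{prelim}.

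The concluding implication — that $\dstrong_\rwl \subseteq \dstrong$ yields $\strongdcl{\tr}{\dstrong}{\dweak} \subseteq \strongdcl{\tr}{\dstrong_\rwl}{\dweak_\rwl}$ for every $\tr$ — should follow by a monotonicity argument on \defref{strong-trace-prefix}: enlarging the strong dependence only restricts clause (2) (fewer insertions allowed) and, together with $\dstrong \cup \dweak \supseteq \dep_\rwl = \dstrong_\rwl \cup \dweak_\rwl$ ensuring clause (1) is no larger on the $\dstrong_\rwl,\dweak_\rwl$ side in the relevant sense, one gets containment of the generated quasi-orders. I would phrase this as: every generating step of $\strongqo{\dstrong}{\dweak}$ is also a valid step of $\strongqo{\dstrong_\rwl}{\dweak_\rwl}$, hence the smallest reflexive–transitive relation closed under the former steps is contained in that closed under the latter. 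Some care is needed because the weak relations $\dweak$ and $\dweak_\rwl$ need not be comparable; the argument must route through the fact that $\dstrong \cup \dweak \supseteq \dstrong_\rwl \cup \dweak_\rwl$, so that $\mazeq{\dstrong_\rwl \cup \dweak_\rwl} \supseteq \mazeq{\dstrong \cup \dweak}$, handling clause (1), while $\dstrong \supseteq \dstrong_\rwl$ handles clause (2).

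The main obstacle I anticipate is the read-from-conflict case, specifically making the witness executions genuinely ill-behaved: I must produce, for a hypothetically-missing strong pair, a concrete $\tr \in L_\wf$ and a string $w$ with $w \strongqo{\dstrong}{\dweak} \tr$, $w \in L_\wf$, yet $w \notin \creorder{\tr}$ — and the well-formedness side condition $\cap L_\wf$ in the hypothesis means I cannot simply exploit lock-ordering pathologies, so the violation must be of conditions (3) or (4), not (2), of $\creorder{\tr}$. For the $(\wt,\rd)$ direction I can use the read appearing before its writer; for the $(\rd,\wt)$ direction I need the read retained while a different, later write (the one it should read from) is absent but an earlier write is present, flipping which write is observed. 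Assembling these witnesses so that they lie in $L_\wf$ and are reachable via the generating steps of $\strongqo{\dstrong}{\dweak}$ is the delicate part; the program-order case and the final monotonicity step are comparatively routine.
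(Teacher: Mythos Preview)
Your approach is correct and is precisely the natural unpacking of the paper's one-line sketch (``Follows straightforwardly from \defref{strong-trace-prefix} and \equref{rwl}''): a case analysis on the pairs in $\dstrong_\rwl$ together with the monotonicity argument you outline for the final containment. One simplification you may have missed: since $\dstrong$ is symmetric by definition of a dual concurrent alphabet, the $(\rd,\wt)$ case you flag as delicate is immediate from the $(\wt,\rd)$ case --- if $(\ev{t_1}{\rd(x)},\ev{t_2}{\wt(x)})\notin\dstrong$ then also $(\ev{t_2}{\wt(x)},\ev{t_1}{\rd(x)})\notin\dstrong$, and your two-event witness $\tr=\ev{t_2}{\wt(x)}\cdot\ev{t_1}{\rd(x)}$ (drop the write, keep the read, violating reads-from) already handles it.
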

% \begin{theorem}[Maximality]
% \thmlabel{maximality}
% Let $(\alphabet_\rwl, \dstrong, \dweak)$ be a dual concurrent alphabet such that
% $\dep_\rwl \subseteq \dstrong \cup \dweak$, and,
% for every $\tr \in \alphabet^*_\rwl$, $\strongdcl{\tr}{\dstrong}{\dweak} \cap L_\wf \subseteq \creorder{\tr}$.
% Then, $\dstrong_\rwl \subseteq \dstrong$, and thus,
% for every $\tr$,
% $\strongdcl{\tr}{\dstrong}{\dweak} \subseteq \strongdcl{\tr}{\dstrong_\rwl}{\dweak_\rwl}$
% \end{theorem}
}

%!TEX root=./main.tex

\section{Complexity of Predictive Monitoring}
\seclabel{complexity}

In this section we investigate the impact of generalizing Mazurkiewicz 
traces to strong trace prefixes, on the predictive monitoring question.
% against a language $L \subseteq \alphabet^*$.
% Our main result is a meta-reduction that relates
% predictive monitoring, against an arbitrary language $L$, under 
% Mazurkiewicz traces equivalence to those predictive monitoring
% strong trace prefixes against the same language $L$.
% We first establish that with additional non-determinism,
% predictive monitoring under strong trace prefixes can be done 
% in essentially the same time and space complexity as Mazurkiewicz traces (\secref{non-det-complexity}).
% We then show, in \secref{det-poly-blowup}, that we can also perform
% predictive monitoring in a deterministic manner
% with only polynomial multiplicative blow-up in time.
We present two \emph{schemes} to translate
arbitrary Turing machines for predictive monitoring under trace equivalence against a language $L$
to one for predictive monitoring under strong trace prefixes against the same language $L$.
The first scheme (\secref{non-det-complexity}), 
uses additional non-determinism (but same time and space usage),
and the second (\secref{det-poly-blowup}) employs
polynomial multiplicative blow-up in time complexity.

\subsection{Non-deterministic predictive monitoring}
\seclabel{non-det-complexity}

\begin{comment}
Let us fix a language $L$ and let $M$ be a deterministic Turing machine
that answers the predictive monitoring question against the language $L$, i.e., $L(M) = \setpred{w}{\eqcl{\dep}{w}\cap L \neq \emptyset}$.
Our first result says that 
there is a nondeterministic Turing machine $M'$ that recognizes the following language $L(M') = \setpred{w}{\strongdcl{w}{\dstrong}{\dweak} \cap L \neq \emptyset}$
and is of the same time and space complexity as $M$.
This result follows from the following observation.
$M'$ nondeterministically pick a subsequence, 
check if the subsequence is a strong trace prefix
and run the Turing machine $M$ on the subsequence.
In order to achieve this, we need only constant extra space but nondeterminism.
\end{comment}

We first show that an algorithm that solves the vanilla predictive monitoring problem
($\eqcl{\dep}{\tr} \cap L \neq \emptyset$) can be transformed into an algorithm
for predictive monitoring against strong trace prefixes
with similar resource (time and space) usage, albeit with use of non-determinism.

\begin{restatable}{theorem}{nondeterminism}
\thmlabel{nondet}
    Let $L \subseteq \alphabet^*$ 
    and let $M$ be a deterministic Turing machine, that uses time $T(n)$ and space $S(n)$, 
    such that $L(M) = \setpred{w}{\eqcl{\dep}{w}\cap L \neq \emptyset}$.
    There is a nondeterministic Turing machine $M'$ that uses time 
    $T(n) + O(n)$ and space $S(n) + O(n)$, 
    such that $L(M') = \setpred{w}{\strongdcl{w}{\dstrong}{\dweak} \cap L \neq \emptyset}$.
    Moreover, if $M$ runs in one-pass,
    then $M'$ uses space $S(n) + c$ (for some constant $c$).
\end{restatable}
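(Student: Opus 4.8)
The plan is to build $M'$ from $M$ by having $M'$ nondeterministically guess a witnessing subsequence and then delegate the actual monitoring to $M$. Concretely, given input $w$ of length $n$, the machine $M'$ scans $w$ left to right and, at each position $i$, nondeterministically decides whether to \emph{keep} event $e_i$ or \emph{drop} it; let $u$ be the resulting subsequence. On the kept events $M'$ simulates $M$, feeding it $u$ rather than $w$. If $M$ accepts, $M'$ must additionally verify that $u$ is indeed a strong trace prefix of $w$, i.e.\ $u \strongqo{\dstrong}{\dweak} w$; if both checks pass, $M'$ accepts. Soundness and completeness of this construction rest on the fact that $\strongdcl{w}{\dstrong}{\dweak} \cap L \neq \emptyset$ iff there exists a subsequence $u$ of $w$ with $u \strongqo{\dstrong}{\dweak} w$ and $\eqcl{\dep}{u} \cap L \neq \emptyset$ --- but here one must be careful: $M$ only tells us about $\eqcl{\dep}{u}$, and we need $u \in \strongdcl{w}{\dstrong}{\dweak}$ to guarantee that everything $\mazeq{\dep}$-equivalent to $u$ is also a strong trace prefix of $w$. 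This follows from clause (1) of \defref{strong-trace-prefix} ($\mazeq{\dstrong\cup\dweak} \subseteq \strongqo{\dstrong}{\dweak}$) together with transitivity, so $\eqcl{\dep}{u} \subseteq \strongdcl{w}{\dstrong}{\dweak}$ whenever $u \strongqo{\dstrong}{\dweak} w$; conversely, every element of $\strongdcl{w}{\dstrong}{\dweak}$ is $\mazeq{\dep}$-equivalent to some subsequence of $w$ obtained purely by the "drop" rule (2), which gives the characterization.

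The key remaining step is to verify $u \strongqo{\dstrong}{\dweak} w$ in linear time and space (and, crucially, in \emph{one pass} with only constant extra space when $M$ is one-pass). I would show that $u$ is a strong trace prefix of $w$ exactly when (a) the kept set $\events{u}$ is downward closed under the $\dstrong$-induced order of $w$ --- equivalently, no dropped event is $\dstrong$-before some kept event --- and (b) the order of $u$ agrees with the $\dstrong\cup\dweak$-induced order on $\events{u}$, which is automatic if $M'$ keeps events in their $w$-order (it does, since it scans left to right and never reorders). So the only genuine check is the downward-closure condition (a). This can be maintained streaming: for each event label $a$ we track, in finitely much state, whether some $\dstrong$-dependent event has already been kept; upon dropping an event with label $b$, we must ensure that no later kept event has a label $\dstrong$-dependent with $b$. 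Since $\dstrong$ depends only on labels and $\alphabet$ is finite, "for each label that has been dropped so far, has a $\dstrong$-dependent label since been kept?" is a bounded amount of information --- but it is a guess about the future, so instead $M'$ maintains the set of labels of dropped events and, whenever it keeps an event with label $a$, it checks that no previously-dropped label is $\dstrong$-dependent with $a$; if so it rejects. This is $O(1)$ extra state (a subset of the finite set $\{(a,b): (a,b)\in\dstrong\}$ of "forbidden" label pairs, or simply the finite set of dropped labels).

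Putting it together: $M'$ runs the simulation of $M$ on the kept subsequence (time $T(n)$ for the portion $M$ sees, which is at most $n$ events, plus $O(n)$ overhead for scanning and the bookkeeping; space $S(n) + O(1)$ for the bookkeeping) and in parallel performs the downward-closure check described above. When $M$ is one-pass, $M'$ interleaves: on reading $e_i$, it guesses keep/drop, updates the dropped-label set and performs the forbidden-pair check in $O(1)$, and if the event is kept, advances $M$'s one-pass computation by one step. Total space is $S(n) + c$. I do not expect the main difficulty to lie in the construction itself but in the correctness argument --- specifically, pinning down the equivalence "$\strongdcl{w}{\dstrong}{\dweak}\cap L\neq\emptyset$ iff $\exists$ subsequence $u$ with $u\strongqo{\dstrong}{\dweak}w$ and $\eqcl{\dep}{u}\cap L\neq\emptyset$" and, within it, the lemma that every strong trace prefix of $w$ is $\mazeq{\dep}$-equivalent to a subsequence of $w$ that is reachable from $w$ using \emph{only} the drop rule (2) of \defref{strong-trace-prefix}. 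This is essentially a normal-form statement for the quasi-order $\strongqo{\dstrong}{\dweak}$ --- every derivation can be rearranged so that all "drop" steps come first (at the level of $w$) and all "swap" steps come after --- and it is the one place where I would need to do a careful induction on the length of a $\strongqo{\dstrong}{\dweak}$-derivation, commuting a swap step past a preceding drop step.
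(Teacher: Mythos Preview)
Your proposal is correct and follows essentially the same approach as the paper: nondeterministically guess a subsequence of $w$, verify (in one streaming pass using a constant-size set of dropped labels) that it is downward closed with respect to $\dstrong$, and run $M$ on it. The paper's correctness argument hinges on exactly the normal-form observation you isolate --- every strong trace prefix is $\mazeq{\dstrong\cup\dweak}$-equivalent to a subsequence of $w$ reachable by drop steps alone --- though the paper states this only as a one-line observation rather than proving it by the swap/drop commutation induction you outline.
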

Observe that, in the above, $S(n) + c \in O(S(n))$.
Further, when $T(n) \in \Omega(n)$, then 
$T(n) + O(n) \in O(T(n))$.
Thus, the time and space usage of the non-deterministic machine $M'$
in \thmref{nondet} are essentially the same as those of $M$.

The proof of \thmref{nondet} relies on the observation that
any strong prefix $u$ of a string $w$ is equivalent 
(according to trace equivalence using $\mazeq{\dstrong\cup \dweak}$) to a subsequence $w'$ of $w$,
such that $w'$ is downward closed with respect to strong dependencies.
The non-deterministic Turing machine $M'$ 
first non-deterministically guesses a subsequence $w'$ of the input execution $w$,
then, using constant space and an additional forward streaming pass, 
ensures that $w'$ is downward closed with respect to $\dstrong$,
and finally invokes the Turing machine $M$ on the string $w'$.

It follows from~\thmref{nondet} that when the language of $M$ 
is regular, so is the language of $M'$.
% , since
% a non-deterministic (or determinisitc) Turing machine $M'$ that runs in 
% constant space is an acceptor of a regular language.
% In other words, there is a deterministic finite automaton $M''$(obtained by determinising $M'$) that also has the
% same asymptotic time and space utilization.
This means, that when a language $L$ can be predictively monitored
in constant space under trace equivalence (for example data races, deadlocks, or pattern languages~\cite{ang2023predictive}),
then it can also be predictively monitored in constant space,
yet with higher predictive power, under strong trace prefixes!
% In fact, for the purpose of monitoring the class of languages we 
% will discuss in this paper (namely data races, deadlocks and pattern languages~\cite{ang2023predictive}), 
% $M$ will inadvertently be a deterministic finite automaton, i.e., 
% the language $L$ can be predictively monitored in constant space and linear time,
% and thus, we can also predictively monitor $L$ with higher coverage under strong trace prefixes.
% \begin{corollary}
%     Given a language $L\subseteq\alphabet^*$. 
%     If $L$ can be predictively monitored in space $O(1)$ and time $O(n)$ under trace equivalence,
%     then $L$ can also be predictively monitored in space $O(1)$ and time $O(n)$ under strong trace prefixes.
%     \ucomment{Should we make this more formal?}
% \end{corollary}

\subsection{Deterministic predictive monitoring}
\seclabel{det-poly-blowup}

While \thmref{nondet} illustrates that
the predictive monitoring question with strong trace prefixes becomes decidable 
(assuming the analogous problem for Mazurkiewicz traces is decidable),
we remark that the use of nondeterminism may lead to exponential blow-ups 
in time and space when translating it to a deterministic machine that can then be used
in a practical predictive testing setup.
Here, in this section, we establish that one can tactfully avoid this blow-up.
In fact, we show that only allowing a polynomial multiplicative blow-up is sufficient
to do predictive monitoring under strong prefixes starting with a
deterministic Turing machine for that works under trace equivalence.

Our result is inspired by prior works on predictive 
monitoring under trace languages~\cite{Bertoni1989}.
Here, one identifies strong ideals\footnote{A strong ideal of an execution $\tr = e_1, \ldots, e_k$  is a subset of $\set{e_1, e_2, ..., e_k}$ such that for every $i < j$, if $(\lbl{e_i}, \lbl{e_j}) \in \dstrong$ and if $e_j \in X$, then $e_i \in X$}, i.e., sets of events that are
downward closed with respect to the \emph{strong} dependence relation,
and checks whether there is a linearization of one of them 
that respects both strong and weak dependence and also belongs to the target language $L$.
A parameter that crucially determines the time complexity here
is the \emph{width} of the concurrency alphabet.
In our setting, the width $\alpha_\dstrong$ is the size of the largest subset of $\alphabet$,
that contains no two letters which are dependent according to $\dstrong$.

\begin{restatable}{theorem}{detalpha}
    \thmlabel{detalpha} 
    Fix a language $L$. 
    Let $M$ be a deterministic Turing machine that uses time $T(n)$ and space $S(n)$, 
    such that $L(M) = \setpred{w}{\eqcl{\dep}{w}\cap L \neq \emptyset}$.
    Then, there exists a deterministic Turing machine $M'$ 
    that runs in time
    % $T(n) \cdot n^{\alpha_\dstrong}$ \ucomment{Perhaps $O((n+T(n)) \cdot n^{\alpha_\dstrong})$}
    $O((n+T(n)) \cdot n^{\alpha_\dstrong})$
    and uses space 
    % $S(n) \cdot n^{\alpha_\dstrong}$ \ucomment{Shouldn't this be $S(n) + O(n)$}, 
    $S(n) + O(n)$,
    such that $L(M') = \setpred{w}{\strongdcl{w}{\dstrong}{\dweak} \cap L \neq \emptyset}$.
\end{restatable}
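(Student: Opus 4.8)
The plan is to derandomize the non-deterministic machine $M'$ from \thmref{nondet} by enumerating, rather than guessing, all the relevant strong ideals — but doing so in a streaming fashion so that we never store more than $O(n)$ information at once, and so that the enumeration costs only an $n^{\alpha_\dstrong}$ multiplicative factor. The starting observation, inherited from \thmref{nondet}, is that $\strongdcl{w}{\dstrong}{\dweak} \cap L \neq \emptyset$ if and only if there is a subsequence $w'$ of $w$ that is (i) downward closed with respect to $\dstrong$ and (ii) satisfies $\eqcl{\dstrong \cup \dweak}{w'} \cap L \neq \emptyset$, i.e. $w' \in L(M)$. So it suffices to iterate over all $\dstrong$-downward-closed subsequences $w'$ of $w$, run $M$ on each, and accept if any run accepts. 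The entire content of the theorem is that this set of subsequences, while exponentially large in general, can be traversed with a pointer/frontier representation of size $O(\alpha_\dstrong \log n)$ and hence in time $n^{O(\alpha_\dstrong)}$ per... more precisely $O(n^{\alpha_\dstrong})$ distinct "antichain frontiers" each extended by a linear streaming pass.

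First I would set up the combinatorial core. Given $w = e_1 \cdots e_n$, a $\dstrong$-downward-closed subset $X \subseteq \events{w}$ is determined by its "frontier": since $\dstrong$ is reflexive and symmetric, restrict attention to the partial order $\prec$ on $\events{w}$ that is the transitive closure of $\{(e_i,e_j) : i<j, (\lbl{e_i},\lbl{e_j})\in\dstrong\}$. A downward-closed $X$ is exactly a down-set of $\prec$. A standard fact (Dilworth-style / Bertoni) is that each down-set is uniquely encoded by the set of $\prec$-maximal elements it contains, which forms a $\prec$-antichain; and every $\prec$-antichain has size at most $\alpha_\dstrong$, because an antichain in $\prec$ projects to a set of pairwise-$\dstrong$-independent letters (two events with independent labels can never be $\prec$-comparable via a single-step edge, though I will need the small lemma that incomparable events can be reordered past each other, so the label-independence bound really does cap antichain size by $\alpha_\dstrong$). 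Consequently the number of such down-sets is at most $\sum_{k\le\alpha_\dstrong}\binom{n}{k} = O(n^{\alpha_\dstrong})$, and each is representable in $O(\alpha_\dstrong\log n)=O(\log n)$ bits.

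Next I would describe the deterministic machine $M'$. It enumerates all antichains $A$ of $\prec$ of size at most $\alpha_\dstrong$ (by iterating over all $\le\alpha_\dstrong$-tuples of positions and, in a streaming pass over $w$, checking the $\prec$-antichain condition and computing the induced down-set $X_A = \{e : e \preceq a \text{ for some } a\in A\}$ — comparability $e\preceq a$ is itself checkable with a forward pass maintaining an $O(\alpha_\dstrong)$-sized frontier). For each such $A$, it performs one further streaming pass that emits the subsequence $w'$ consisting exactly of the events of $X_A$ in their original order, feeding these letters one at a time into a simulation of $M$; a parallel constant-space check verifies that $X_A$ is genuinely $\dstrong$-downward-closed (this is automatic from the construction, but must be re-verified to handle the well-formedness subtlety flagged in \secref{well-formedness}, namely that $w'$ may contain an incomplete critical section — here we additionally drop any dangling unmatched $\acq$ events, or handle them as that section prescribes). $M'$ accepts iff $M$ accepts on some $w'$. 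Correctness is the equivalence above: every strong trace prefix of $w$ is $\mazeq{\dstrong\cup\dweak}$-equivalent to exactly one such $X_A$ read in order, and $M$ being invariant under $\mazeq{\dstrong\cup\dweak}$ means $M$'s verdict on $X_A$ detects membership in $L$ of the whole $\mazeq{\dstrong\cup\dweak}$-class, which is what the target language of $M'$ asks. For the resource bounds: space is $S(n)$ for the simulation of $M$ plus $O(n)$ for the streaming pointers, the current antichain, and the input head bookkeeping — giving $S(n)+O(n)$; time is $O(n^{\alpha_\dstrong})$ choices of $A$, each costing $O(n)$ for the antichain/down-set check plus $O(T(n))$ for the run of $M$ on a string of length $\le n$ (padding $T$ to be monotone), i.e. $O((n+T(n))\cdot n^{\alpha_\dstrong})$ overall.

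The main obstacle I anticipate is making the enumeration step both \emph{complete} (hitting every $\dstrong$-down-set) and \emph{cheap in space}: a naive approach would materialize each $X_A$ as an $n$-bit characteristic vector, which is fine for the $O(n)$ space budget, but one must be careful that iterating over $\binom{n}{\le\alpha_\dstrong}$ antichains is done with only $O(\log n)$ extra state beyond the $X_A$ vector, and that the "is this tuple a $\prec$-antichain" and "does $e\preceq a$" predicates are decidable in the streaming model without storing $\prec$ explicitly (which would be $\Theta(n^2)$ space). The fix is that $\prec$-comparability can be decided online: scanning left to right and maintaining, for the candidate elements of $A$, the set of positions currently $\prec$-below them — this frontier has size $O(\alpha_\dstrong)$ by the same antichain bound. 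The second subtlety, already signposted by the authors, is the well-formedness gap ($\strongdcl{\cdot}{\dstrong}{\dweak}$ can contain ill-formed strings with a half-open critical section): since $M$ is only promised to behave correctly on well-formed inputs, $M'$ must post-process each $X_A$ to repair or discard it, exactly as \secref{well-formedness} does in the constant-space setting; I would simply invoke that construction as a black box. Everything else is routine bookkeeping.
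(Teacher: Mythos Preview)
Your proposal is correct and follows essentially the same approach as the paper: enumerate all $\leq\alpha_\dstrong$-tuples of events, compute the $\dstrong$-downward closure $X$ of each, project $w$ to $X$, and run $M$ on the resulting subsequence. The paper's proof sketch is terser (it does not spell out the Dilworth-style antichain bound or the streaming implementation of the closure computation that you work through), but the algorithm and the complexity accounting are the same.

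One minor remark: your digression about well-formedness and ``dangling unmatched $\acq$ events'' is out of place here. \thmref{detalpha} is stated for an arbitrary dual alphabet $(\alphabet,\dstrong,\dweak)$ and an arbitrary $M$; the well-formedness concern is specific to $(\alphabet_\rwl,\dstrong_\rwl,\dweak_\rwl)$ and is handled separately in \thmref{well-formedness}, so you should not fold it into this proof.
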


The above complexity bounds follow because
one can systematically enumerate
those subsequences of the input $w$ 
which are downward closed with respect to $\dstrong$,
by in turn enumerating the space of strong ideals.
The set of strong ideals is, in turn, bounded by 
$n^{\alpha_\dstrong}$ where $n$ is the length of the input string $w$. 

\subsection{Ensuring well-formedness  and soundness}
\seclabel{well-formedness}

Recall that the dual concurrent alphabet $(\alphabet_\rwl, \dstrong_\rwl, \dweak_\rwl)$
is not sufficient by itself for ensuring that the strong trace prefixes of an
execution $\tr \in L_\wf$ are also well-formed.
Well-formedness can nevertheless be retrofitted in the predictive monitoring
algorithm with same additional time, space and non-determinism.
\thmref{well-formedness} formalizes this and follows from the observation that
the set $L_\wf$ is (a) regular, and 
(b) closed under strong prefixes, i.e.,
for every $\tr \in L_\wf$, we have $\strongdcl{\tr}{\dstrong_\rwl}{\dweak_\rwl} \subseteq L_\wf$,
and algorithms for predictive monitoring can be easily augmented
to reason about the set 
$\strongdcl{\tr}{\dstrong_\rwl}{\dweak_\rwl}\cap L_\wf$.

\begin{restatable}{theorem}{wellformedness}
\thmlabel{well-formedness}
    Let $L \subseteq \alphabet_\rwl^*$ 
    and let $M$ be a deterministic Turing machine, that uses time $T(n)$ and space $S(n)$, 
    such that $L(M) = \setpred{\tr \in L_\wf}{\eqcl{\dep_\rwl}{\tr} \cap L_\wf \cap L \neq \emptyset}$.
    There is a nondeterministic Turing machine $M'$ (resp. deterministic Turing machine $M''$) that uses time 
    $T(n) + O(n)$ (resp. $O((n+T(n)) \cdot n^{\alpha_\dstrong})$)
    and space $S(n) + O(n)$ 
    such that $L(M') ( = L(M'')) = \setpred{\tr \in L_\wf }{\strongdcl{\tr}{\dstrong}{\dweak} \cap L_\wf \cap L \neq \emptyset}$.
    Moreover, if $M$ runs in one-pass,
    $M'$ uses space $S(n) + c$ (for some constant $c$).
\end{restatable}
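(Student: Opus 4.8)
The plan is to obtain this statement as a direct corollary of \thmref{nondet} and \thmref{detalpha}, by pushing the well-formedness requirement into the target language and adding a cheap pre-filter on the input. The only facts one needs about $L_\wf$ are that it is (a) regular, hence recognised by a constant-space deterministic finite automaton run in a single left-to-right pass, and (b) invariant under $\mazeq{\dep_\rwl}$: since $\dep_\threads \cup \dep_\locks \subseteq \dep_\rwl$, no commutation step of $\mazeq{\dep_\rwl}$ ever swaps two events of the same thread or two events on the same lock, and membership in $L_\wf$ depends only on the thread-annotated subsequence of lock events of a string; hence $w \mazeq{\dep_\rwl} w'$ and $w \in L_\wf$ force $w' \in L_\wf$.

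First I would use fact (b) to rewrite the language of the hypothesised machine $M$. Since $\tr \in \eqcl{\dep_\rwl}{\tr}$ always, and $\eqcl{\dep_\rwl}{\tr} \subseteq L_\wf$ whenever $\tr \in L_\wf$ (by (b)), one gets $\tr \in L_\wf \iff \eqcl{\dep_\rwl}{\tr} \cap L_\wf \neq \emptyset$, and therefore
\[
L(M) \;=\; \setpred{\tr \in L_\wf}{\eqcl{\dep_\rwl}{\tr} \cap L_\wf \cap L \neq \emptyset} \;=\; \setpred{w}{\eqcl{\dep_\rwl}{w} \cap \widehat{L} \neq \emptyset}, \qquad \widehat{L} := L_\wf \cap L .
\]
In other words $M$ is already a trace-based predictive monitor, in exactly the sense required by \thmref{nondet} and \thmref{detalpha}, for the language $\widehat{L}$ over the concurrent alphabet $(\alphabet_\rwl, \dep_\rwl)$; recall that $\dstrong_\rwl \cup \dweak_\rwl = \dep_\rwl$ by \equref{rwl}.

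Next I would invoke \thmref{nondet} (resp.\ \thmref{detalpha}) verbatim with $\widehat{L}$ in place of $L$ and the dual alphabet $(\alphabet_\rwl, \dstrong_\rwl, \dweak_\rwl)$, obtaining a nondeterministic machine $M_0'$ (resp.\ a deterministic machine $M_0''$) with the claimed time bound $T(n)+O(n)$ (resp.\ $O((n+T(n)) \cdot n^{\alpha_\dstrong})$), space $S(n)+O(n)$, and space $S(n)+c$ in the one-pass case, recognising $\setpred{w}{\strongdcl{w}{\dstrong_\rwl}{\dweak_\rwl} \cap \widehat{L} \neq \emptyset} = \setpred{w}{\strongdcl{w}{\dstrong_\rwl}{\dweak_\rwl} \cap L_\wf \cap L \neq \emptyset}$. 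Finally, to match the stated language, which restricts to well-formed inputs, I would have $M'$ (resp.\ $M''$) first run the constant-space DFA for $L_\wf$ on its input, reject if the input is not well-formed, and otherwise simulate $M_0'$ (resp.\ $M_0''$); by fact (a) this pre-filter adds $O(1)$ space, $O(n)$ time, and preserves the single-pass property, so all resource bounds are untouched, and $L(M') = L(M'') = L_\wf \cap L(M_0') = \setpred{\tr \in L_\wf}{\strongdcl{\tr}{\dstrong_\rwl}{\dweak_\rwl} \cap L_\wf \cap L \neq \emptyset}$.

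The step I expect to carry the real weight is fact (b) and the rewriting it enables: it is exactly what lets one replace ``the witnessing reordering must itself be well-formed'' by ``$L_\wf$ is folded into the target language, and the input is well-formed''. This is how the argument side-steps the phenomenon stressed in \secref{def-prefix} — that a strong trace prefix of a well-formed execution need not be well-formed — without ever having to close $\strongdcl{\cdot}{\dstrong_\rwl}{\dweak_\rwl}$ under well-formedness directly; inside the \thmref{nondet}/\thmref{detalpha} construction the $L_\wf$ test is instead applied to each $\dstrong_\rwl$-downward-closed subsequence that the machine guesses or enumerates, which is the right place for it. Everything else is a black-box use of the two earlier theorems together with standard closure properties of regular languages.
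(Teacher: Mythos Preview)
Your argument is correct, and it is a cleaner, more modular variant of what the paper does. The paper's proof re-opens the constructions of \thmref{nondet} and \thmref{detalpha} and inserts an explicit DFA-based well-formedness check on each guessed (resp.\ enumerated) $\dstrong$-downward-closed subsequence $\tr'$ before invoking $M$. You achieve the same effect without touching those constructions: your key observation --- that $L_\wf$ is invariant under $\mazeq{\dep_\rwl}$ --- lets you rewrite $L(M)$ so that it already has the exact form required by the hypotheses of \thmref{nondet} and \thmref{detalpha} (with target language $\widehat{L}=L_\wf\cap L$), and then the well-formedness filter on the guessed prefix is performed \emph{implicitly} by $M$ itself, since an ill-formed $\tr'$ satisfies $\eqcl{\dep_\rwl}{\tr'}\cap L_\wf=\emptyset$. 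The only residual work is the constant-space input pre-filter, which you handle correctly (and which, as you note, can be run in parallel to preserve the one-pass bound). The payoff of your route is that it is a genuine black-box corollary; the paper's route is more hands-on but makes no use of the invariance fact and instead relies only on regularity of $L_\wf$.
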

%!TEX root=./main.tex

\section{Strong Reads-from Prefixes}

\seclabel{rf}

Strong trace prefixes generalize Mazurkiewicz traces and
can enhance precision of predictive monitoring algorithms. 
In this section, we propose further generalizations
in the context of $\alphabet_\rwl$, bringing the power of 
commutativity-based reasoning
further close to correct reorderings. 
Towards this, we observe that the key constraints that correct reorderings
must satisfy are only thread-order and reads-from relation, and
thus $\dstrong_\rwl$ may be relaxed further 
by removing the dependence between writes and reads.

%!TEX root=../main.tex

\begin{figure}[t]
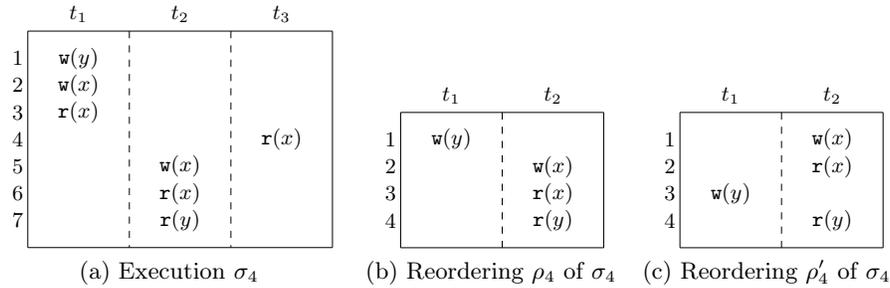

\centering
\subfloat[Execution $\tr_4$ \figlabel{rwwr-race-trace}]{%
    \execution{3}{
        \figev{1}{$\wt(y)$}
        \figev{1}{$\wt(x)$}
        \figev{1}{$\rd(x)$}
        \figev{3}{$\rd(x)$} 
        \figev{2}{$\wt(x)$}
        \figev{2}{$\rd(x)$}
        \figev{2}{$\rd(y)$}
    }
}
\hfill
\subfloat[Reordering $\rho_4$ of $\tr_4$ \figlabel{rwwr-no-race-trace}]{%
    \execution{2}{
        \figev{1}{$\wt(y)$}
        \figev{2}{$\wt(x)$}
        \figev{2}{$\rd(x)$}
        \figev{2}{$\rd(y)$}
    }
}
\hfill
\subfloat[Reordering $\rho'_4$ of $\tr_4$ \figlabel{rwwr-race-correct-reordering}]{
    \execution{2}{
        \figev{2}{$\wt(x)$}
        \figev{2}{$\rd(x)$}
        \figev{1}{$\wt(y)$}
        \figev{2}{$\rd(y)$}
    }
}
% \hfill
% \subfloat[Reordering $\rho_2$ of $\tr_2$ \figlabel{rwwr-race-correct-reordering}]{
%     \execution{3}{
%         \figev{1}{$\wt(y)$}
%         \figev{3}{$\rd(x)$} 
%         \figev{2}{$\wt(x)$}
%         \figev{2}{$\rd(x)$}
%         \figev{2}{$\rd(y)$}
%     }
% }
\begin{comment}
\subfloat[Execution $\tr_1$ \figlabel{rwwr-race-trace}]{%
    \smallexecution{3}{
        \smallfigev{1}{$\wt(y)$}
        \smallfigev{1}{$\wt(x)$}
        \smallfigev{1}{$\rd(x)$}
        \smallfigev{3}{$\rd(x)$} 
        \smallfigev{2}{$\wt(x)$}
        \smallfigev{2}{$\rd(x)$}
        \smallfigev{2}{$\rd(y)$}
    }
}
\hfill
\subfloat[Execution $\tr_2$ \figlabel{rwwr-no-race-trace}]{%
    \smallexecution{3}{
        \smallfigev{1}{$\wt(y)$}
        \smallfigev{2}{$\wt(x)$}
        \smallfigev{2}{$\rd(x)$}
        \smallfigev{2}{$\rd(y)$}
    }
}
\hfill
\subfloat[Reordering $\rho_2$ of $\tr_2$ \figlabel{rwwr-race-correct-reordering}]{
    \smallexecution{3}{
        \smallfigev{2}{$\wt(x)$}
        \smallfigev{2}{$\rd(x)$}
        \smallfigev{1}{$\wt(y)$}
        \smallfigev{2}{$\rd(y)$}
    }
}
\hfill
\subfloat[Reordering $\rho_2$ of $\tr_2$ \figlabel{rwwr-race-correct-reordering}]{
    \smallexecution{3}{
        \smallfigev{1}{$\wt(y)$}
        \smallfigev{3}{$\rd(x)$} 
        \smallfigev{2}{$\wt(x)$}
        \smallfigev{2}{$\rd(x)$}
        \smallfigev{2}{$\rd(y)$}
    }
}
\end{comment}
\caption{Execution $\tr_1$ has a predictable data race, and can be exposed with trace equivalence.
Execution $\tr_2$ has a predictable data race (witnessed by correct reordering $\rho_2$), but cannot be
exposed by a trace prefix.}
\figlabel{trace-limitation-rwwr}
\vspace{-0.1in}
\end{figure}

Consider the trace $\tr$ in \figref{trace-limitation-rwwr}. 
Here, the only strong prefixes of $\tr$ 
(under $\dstrong_\rwl$ and $\dweak_\rwl$)
are its own (string) prefixes.
That is, strong trace prefixes cannot be used to argue that there is 
a reordering (namely $\rho'_4$ in \figref{rwwr-race-correct-reordering}) 
in which $\wt(y)$ and $\rd(y)$ are next to each other. 
% In fact, Mazurkiewicz style commutativity reasoning can infer together with prefix reasoning. 
Intuitively, one can first obtain the 
intermediate $\rho_4$ (\figref{rwwr-no-race-trace}) from $\tr_4$
by dropping all events in the block of events containing 
$e_2$ labelled $\ev{t_1}{\wt(x)}$ together with all its read 
events $\rf{\tr_4}{e_2} = \set{e_3, e_4}$,
and then obtain $\rho'_4$ from $\rho_4$ using Mazurkiewicz between
independent events.
We remark however that, neither $\rho_4$ nor $\rho'_3$ is a strong trace prefix
of $\tr_4$ because $(\ev{t_3}{\rd(x)}, \ev{t_2}{\wt(x)}) \in \dstrong_\rwl$.
However, observe that one cannot obtain this reordering $\rho_1$ in the presence of $\dstrong_\rwl$. 
% Moreover, $\rho_1$ is Mazurkiewicz equivalent to trace $\rho_2$ which belongs to the data race language.

The above example illustrates the possibility of relaxing $\dstrong_\rwl$ 
by removing the dependencies between reads and writes. 
However, an incautious relaxation 
(such as removing $(\ev{t_1}{\wt(x)}, \ev{t_3}{\rd(x)})$ from $\dstrong_\rwl$) 
may result into a prefix like 
$\rho''_4 = \ev{t_1}{\wt(y)}\ev{t_3}{\rd(x)}\ev{t_2}{\wt(x)}\ev{t_2}{\rd(x)}\ev{t_2}{\rd(y)}$
which is not a correct reordering of $\tr_4$.
In other words, while $(\rd, \wt)$ and $(\wt, \rd)$ dependencies can be relaxed, 
the stronger semantic dependence due to reads-from must still be retained. 
As a reminder, such a relaxation cannot accurately be modelled under 
strong prefixes alone since $(\alphabet_\rwl, \dstrong_\rwl, \dweak_\rwl)$
is already the weakest alphabet (\thmref{maximality}).
We instead model this as \emph{strong reads-from prefixes} defined below:
\begin{definition}[Strong Reads-from Prefix]
    \deflabel{strong-reads-from-prefix}
    The \emph{\underline{strong reads-from prefix order}} induced by $(\alphabet_\rwl, \dstrong_\rwl, \dweak_\rwl)$,
    denoted $\rfqo$, is the smallest 
    reflexive and transitive binary relation on $\alphabet_\rwl^*$ that satisfies:
    \begin{enumerate}
        \item $\strongqo{\dstrong_\rwl}{\dweak_\rwl} \; \subseteq \; \rfqo$, and
        \item let $\tr = \tr_1 \cdot e \cdot \tr_2$,
        if $\forall e' \in \tr_2$, we have $(e, e')\not\in\dep_\threads$ and $e'\not\in\rf{\tr}{e}$, then
        $\tr_1\cdot\tr_2 \rfqo \tr_1 \cdot e \cdot \tr_2$.
    \end{enumerate}
    We say $w' \in \alphabet_\rwl^*$ is a \emph{\underline{strong reads-from prefix}} of $w$
    if $w' \rfqo w$.
    We use $\rfdcl{w} = \setpred{w'\in\alphabet_\rwl^*}{w' \rfqo w}$ to denote the \emph{\underline{strong reads-from downward closure}} of $w$.
\end{definition}
In the above example, $\rho_4$ and $\rho'_4$ now can be modelled as strong reads-from prefixes of $\tr_4$, i.e., $\rho_4, \rho'_4\in \rfdcl{\tr_4}$,
since $\ev{t_1}{w(x)}$ and $\ev{t_1}{r(x)}$ are not strong dependent with and not in the reads-from relation with any subsequent events.
The soundness and precision power of strong-reads from prefixes are clear:
\begin{restatable}{theorem}{soundnessRF}{\it\bf(Soundness and Precision Power)}.
\thmlabel{soundness}
    For each well-formed execution $\tr\in L_\wf$, we have: 
    \[
    \eqcl{\dstrong_\rwl\cup \dweak_\rwl}{\tr} \subseteq \idealdcl{\tr}{\dstrong_\rwl\cup \dweak_\rwl} \subseteq \strongdcl{\tr}{\dstrong_\rwl}{\dweak_\rwl} \cap L_\wf 
    \subseteq \rfdcl{\tr}\cap L_\wf
    \subseteq \creorder{\tr}.
    \]
Moreover, there is a $\tr\in L_\wf$ for which each of the subset relationships are strict.
\end{restatable}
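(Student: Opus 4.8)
The first three inclusions are already the content of the soundness-and-precision statement for strong trace prefixes established above, so the plan is only to (i) slot $\rfdcl{\tr}\cap L_\wf$ in between $\strongdcl{\tr}{\dstrong_\rwl}{\dweak_\rwl}\cap L_\wf$ and $\creorder{\tr}$, and (ii) supply the two new strictness witnesses. For (i), the inclusion $\strongdcl{\tr}{\dstrong_\rwl}{\dweak_\rwl}\cap L_\wf\subseteq\rfdcl{\tr}\cap L_\wf$ is immediate from clause~(1) of \defref{strong-reads-from-prefix}, which gives $\strongqo{\dstrong_\rwl}{\dweak_\rwl}\subseteq\rfqo$. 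The real work is the last inclusion $\rfdcl{\tr}\cap L_\wf\subseteq\creorder{\tr}$.

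To prove this inclusion I would induct on the length of a derivation of $\rho\rfqo\tr$, using that $\rfqo$ is the reflexive--transitive closure of $\strongqo{\dstrong_\rwl}{\dweak_\rwl}$ together with the single-event deletions of clause~(2) of \defref{strong-reads-from-prefix}. Since intermediate strings along such a derivation need not be well-formed, I would phrase the induction invariant purely at the event level: whenever $\rho'\rfqo w$, the projection of $\rho'$ onto each thread is a prefix of that of $w$, and every read of $\rho'$ reads from the same write in $\rho'$ as it does in $w$ (so in particular that write is present in $\rho'$). The first half of the invariant is essentially free: each derivation step either rearranges events that are mutually independent under $\dstrong_\rwl\cup\dweak_\rwl$ (which, since $\dep_\threads\subseteq\dstrong_\rwl$, never touches per-thread order) or deletes a single event that is last in its thread, so per-thread projections can only shrink to prefixes. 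Hence condition~(1) and the program-order downward-closure condition~(3) of $\creorder{\tr}$ hold for any $\rho\rfqo\tr$; adding the hypothesis $\rho\in L_\wf$ supplies condition~(2), and it only remains to verify condition~(4), the presence of read-from writes.

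For the reads-from half of the invariant, the base case is a single $\strongqo{\dstrong_\rwl}{\dweak_\rwl}$ step, which is exactly the argument behind the third inclusion in the strong-trace-prefix soundness theorem above: it uses only that $\dstrong_\rwl$ retains $\dep_\threads$ together with every conflicting $(\wt,\rd)$ and $(\rd,\wt)$ pair, so a read, the write it reads from, and all writes in between are pinned into any strong trace prefix and may not be reordered past one another -- hence reads-from is preserved, with no appeal to well-formedness. For the inductive step along a clause-(2) deletion, say $w=\tr_1\cdot e\cdot\tr_2$ with $(e,e')\notin\dep_\threads$ and $e'\notin\rf{w}{e}$ for every $e'\in\tr_2$, and $\rho\rfqo\tr_1\cdot\tr_2$: I would apply the inductive hypothesis to $\rho$ and $\tr_1\cdot\tr_2$ and transport the invariant across the reinsertion of $e$. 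Three things need checking: $e\notin\events{\rho}$ (as $e$ does not occur in $\tr_1\cdot\tr_2$); $e$ is read by no read of $\rho$ (every reader of the write $e$ lies strictly after $e$, hence in $\tr_2$, and is excluded by the side-condition); and $e$ cannot become the write that some read of $\rho$ reads from, since that would place $e$ strictly between a read and its read-from write in $w$, contradicting the definition of $\rf{w}{\cdot}$. Thus the reads-from function of $\rho$ relative to $w$ agrees with that relative to $\tr_1\cdot\tr_2$, closing the induction; a final appeal to $\rho\in L_\wf$ then lands $\rho$ in $\creorder{\tr}$. I expect the main obstacle to be precisely this bookkeeping of the interaction between the purely syntactic commutation steps (discharged by the already-proved soundness theorem) and the semantic deletion steps (whose side-condition refers to the current string's reads-from relation): the clause-(2) side-condition is exactly what makes the reads-from function stable under deletion, and carefully verifying this stability is the crux.

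For strictness, the witnesses for the first two inclusions are inherited from the strong-trace-prefix soundness theorem above. For $\strongdcl{\tr}{\dstrong_\rwl}{\dweak_\rwl}\cap L_\wf\subsetneq\rfdcl{\tr}\cap L_\wf$ I would use $\tr_4$ of \figref{trace-limitation-rwwr}: the reordering $\rho_4$ is reached from $\tr_4$ by applying clause~(2) repeatedly to drop $\ev{t_3}{\rd(x)}$, then $\ev{t_1}{\rd(x)}$, then $\ev{t_1}{\wt(x)}$ (each is last in its thread and, at the moment of deletion, has no reads-from successor among the retained events), followed by a Mazurkiewicz commutation to $\rho'_4$; yet $\rho_4\notin\strongdcl{\tr_4}{\dstrong_\rwl}{\dweak_\rwl}$, because $(\ev{t_3}{\rd(x)},\ev{t_2}{\wt(x)})\in\dstrong_\rwl$ and then $(\ev{t_1}{\wt(x)},\ev{t_3}{\rd(x)})\in\dstrong_\rwl$ force $\ev{t_3}{\rd(x)}$ and $\ev{t_1}{\wt(x)}$ into every strong trace prefix that retains $\ev{t_2}{\wt(x)}$. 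For $\rfdcl{\tr}\cap L_\wf\subsetneq\creorder{\tr}$ I would use a small lock-free execution in which a read $e_{\rd}$ reads from a write $e_{\wt}$ while a later write $e'_{\wt}$ to the same location runs in a third thread: the reordering that moves $e'_{\wt}$ before $e_{\rd}$ while retaining $e_{\wt}$ still satisfies conditions (1)--(4) of $\creorder{\tr}$, but cannot be a strong reads-from prefix, since along every $\rfqo$-derivation the order of the conflicting pair $(e_{\rd},e'_{\wt})\in\dstrong_\rwl$ is preserved.
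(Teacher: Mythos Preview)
Your plan is correct and matches the paper's own argument, which is literally a one-line sketch (``follows from the earlier soundness theorem and \defref{strong-reads-from-prefix}''); you have simply spelled out the induction on $\rfqo$-derivations that the paper elides, and your bookkeeping for the clause-(2) deletion step is exactly the right analysis.

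One minor caution on your witness for the last strict inclusion: it relies on reading condition~(4) in the paper's definition of $\creorder{\tr}$ literally (mere presence of the reads-from write, not preservation of the reads-from map). Under the standard, stronger reading---which the paper's informal description of correct reorderings as ``passing the same control flow'' clearly intends---moving $e'_{\wt}$ before $e_{\rd}$ changes what $e_{\rd}$ reads and your example is no longer a correct reordering. A more robust witness is an execution with two complete critical sections on the same lock in distinct threads and no memory accesses: swapping the two critical sections yields a well-formed correct reordering under any reading of the definition, yet every $\rfqo$-derivation preserves $\dweak_\locks$-order among retained events, so the swap cannot be a strong reads-from prefix.
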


We now discuss the algorithmic impact of this 
further relaxation to strong reads-from prefixes.
Here we obtain results which are analogue to \thmref{nondet} and \thmref{detalpha}.
\begin{restatable}{theorem}{readsfrom}
\thmlabel{reads-from-nondet}
    Let $L \subseteq \alphabet^*$ 
    and let $M$ be a deterministic Turing machine, that uses time $T(n)$ and space $S(n)$, 
    such that $L(M) = \setpred{w}{\eqcl{\dep}{w}\cap L \neq \emptyset}$.
    There is a nondeterministic Turing machine $M'$ (resp. deterministic Turing machine) that uses time 
    $T(n) + O(n)$ (resp. $O((n+T(n)) \cdot n^{\alpha_{\dstrong_\rwl}})$)
    and space $S(n) + O(n)$ 
    such that $L(M')  (= L(M'')) = \setpred{w}{\rfdcl{w} \cap L \neq \emptyset}$.
    Moreover, if $M$ runs in one-pass,
    then $M'$ uses space $S(n) + c$ (for some constant $c$).
\end{restatable}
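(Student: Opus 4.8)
The plan is to mimic the proofs of \thmref{nondet} and \thmref{detalpha}, replacing the characterization of strong trace prefixes by the analogous characterization of strong reads-from prefixes. The crucial structural fact to establish first is: for every $\tr \in \alphabet_\rwl^*$ and every $w' \rfqo \tr$, there is a subsequence $\hat{w}$ of $\tr$ such that (i) $\hat{w}$ is downward closed with respect to $\dstrong_\rwl$ \emph{and} with respect to the reads-from relation of $\tr$ (i.e.\ if a read $e_\rd \in \hat w$ reads from $e_\wt$ in $\tr$, then $e_\wt \in \hat w$), and (ii) $w' \mazeq{\dstrong_\rwl \cup \dweak_\rwl} \hat{w}$ (so in particular $w'$ and $\hat w$ have the same multiset of labels and a Turing machine for $\eqcl{\dep}{\cdot}\cap L \neq \emptyset$ gives the same answer on both). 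This is proved by induction on the definition of $\rfqo$: the base case $\strongqo{\dstrong_\rwl}{\dweak_\rwl} \subseteq \rfqo$ is handled by the analogous lemma already used for \thmref{nondet} (a strong trace prefix is $\mazeq{}$-equivalent to a $\dstrong_\rwl$-downward-closed subsequence, which is automatically reads-from-downward-closed because $(\ev{t}{\wt(x)}, \ev{t'}{\rd(x)}) \in \dstrong_\rwl$), and the inductive step for clause~2 of \defref{strong-reads-from-prefix} removes an event $e$ whose only potential strong or reads-from successors have already been removed, preserving both closure properties. Conversely, every such subsequence $\hat w$ satisfies $\hat w \rfqo \tr$, by deleting the events of $\tr \setminus \hat w$ one at a time in reverse order of $\tr$ — each deletion is legal under clause~2 because any not-yet-deleted $\dep_\threads$- or reads-from-successor in $\tr$ would also lie in $\hat w$ by closure, contradiction.

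Given this characterization, the non-deterministic machine $M'$ proceeds exactly as in \thmref{nondet}: in a single streaming pass over the input $w$ it non-deterministically guesses, event by event, a subsequence $\hat w$; simultaneously it checks that $\hat w$ is downward closed under $\dstrong_\rwl$ and under reads-from. Maintaining $\dstrong_\rwl$-downward-closure needs only a constant amount of state per thread/lock/variable that appears (tracking, for each such entity, whether the last relevant event was kept) — this is the same constant-space bookkeeping as in the proof of \thmref{nondet}. Checking reads-from-downward-closure is equally cheap: for each memory location $x$ we remember whether the most recent write to $x$ in $w$ was kept; when a read of $x$ is kept we demand that bit be set. The guessed subsequence $\hat w$ is fed on the fly into the machine $M$; thus if $M$ is one-pass the whole construction is one-pass and uses space $S(n)+c$, and in general $M'$ uses time $T(n)+O(n)$ and space $S(n)+O(n)$. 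Correctness is immediate from the characterization: $\strongdcl[rf]{w} \cap L \neq \emptyset$ iff some $\dstrong_\rwl$- and reads-from-closed subsequence $\hat w$ of $w$ satisfies $\eqcl{\dep}{\hat w} \cap L \neq \emptyset$, iff $M'$ has an accepting run.

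For the deterministic machine $M''$ we replace the non-deterministic guess by exhaustive enumeration, as in \thmref{detalpha}. The subsequences we must enumerate are exactly the \emph{strong ideals} of $w$ that are additionally reads-from-closed; the set of strong ideals has size at most $n^{\alpha_{\dstrong_\rwl}}$ (a strong ideal is determined by its at most $\alpha_{\dstrong_\rwl}$-many maximal events, by the same antichain argument used for \thmref{detalpha}), and reads-from-closure is only a further restriction, so the count is still $O(n^{\alpha_{\dstrong_\rwl}})$. For each candidate we run $M$ in time $T(n)$ after an $O(n)$ pass to materialize the subsequence and verify closure, giving total time $O((n + T(n)) \cdot n^{\alpha_{\dstrong_\rwl}})$ and space $S(n) + O(n)$ (reusing the working tape across candidates). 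The main obstacle is the base-case/inductive lemma connecting $\rfqo$ to reads-from-closed subsequences: one must be careful that clause~2 of \defref{strong-reads-from-prefix} speaks about the reads-from relation of the \emph{current} string in the derivation, not of the original $\tr$, so the induction hypothesis has to be stated relative to $\tr$ and one has to check that dropping $e$ does not change which write any surviving read reads from in $\tr$ — which holds precisely because $e$ has no surviving reads-from-successor and no surviving $\dep_\threads$-successor, so $e$ is not the $\tr$-latest write before any kept read. Once this bookkeeping is pinned down, everything else is a routine adaptation of the earlier two proofs.
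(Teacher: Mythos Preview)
Your central structural lemma is wrong, and the error propagates through both the nondeterministic and the deterministic constructions. You claim that every $w' \rfqo \tr$ is $\mazeq{\dstrong_\rwl\cup\dweak_\rwl}$-equivalent to a subsequence $\hat w$ of $\tr$ that is downward closed with respect to $\dstrong_\rwl$ \emph{and} reads-from. But $\dstrong_\rwl$-closure already implies reads-from-closure (write--read pairs lie in $\dstrong_\rwl$, as you yourself observe in the base case), so your characterization collapses to ``$\dstrong_\rwl$-downward-closed subsequence'' --- i.e.\ exactly the characterization of strong trace prefixes. Your $M'$ and $M''$ would therefore recognize $\setpred{w}{\strongdcl{w}{\dstrong_\rwl}{\dweak_\rwl}\cap L\neq\emptyset}$, not $\setpred{w}{\rfdcl{w}\cap L\neq\emptyset}$. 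The paper's own \figref{trace-limitation-rwwr} is a concrete counterexample: $\rho_4 = e_1 e_5 e_6 e_7$ satisfies $\rho_4 \rfqo \tr_4$, yet its event set is \emph{not} $\dstrong_\rwl$-downward closed (for instance $e_4 = \ev{t_3}{\rd(x)}$ and $e_5 = \ev{t_2}{\wt(x)}$ are a read--write conflicting pair in $\dstrong_\rwl$ with $e_4 \trord{\tr_4} e_5$, but $e_4\notin\events{\rho_4}$). No $\mazeq{}$-equivalent subsequence helps, since those labels each occur once in $\tr_4$.

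The source of the error is a misreading of clause~2 of \defref{strong-reads-from-prefix}: it lets you drop $e$ when no later event is $\dep_\threads$-dependent with $e$ or reads from $e$ --- it says nothing about $\dstrong_\rwl$. So the dropped event may well have surviving $\dstrong_\rwl$-successors (read--write or write--read conflicts on the same variable), and your inductive step ``preserving both closure properties'' fails precisely there. The right characterization replaces $\dstrong_\rwl$-closure by $\dep_\threads$-closure: the candidate subsequences are those downward closed under thread order \emph{and} under the reads-from relation of $\tr$. This is what the paper's proof sketch is pointing at when it says the only modification to the generator of \thmref{nondet} is to maintain, for each memory location $x$, whether the last write on $x$ was dropped, forcing subsequent reads of $x$ (until the next write) to be dropped as well; the $\dstrong_\rwl$-based \texttt{Drop} bookkeeping is relaxed accordingly. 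Once you fix the closure condition, both the streaming check and the enumeration go through as you outline, but note that your ``strong ideals'' count $n^{\alpha_{\dstrong_\rwl}}$ is then no longer the obvious bound on the number of candidates and needs a separate justification.
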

% \begin{theorem}
% \thmlabel{reads-from-det}
%     Fix a language $L$. 
%     Let $M$ be a deterministic Turing machine that uses time $T(n)$ and space $S(n)$, 
%     such that $L(M) = \setpred{w}{\eqcl{\dep}{w}\cap L \neq \emptyset}$.
%     Then, there exists a deterministic Turing machine $M'$ 
%     that runs in time
%     % $T(n) \cdot n^{\alpha_\dstrong}$ \ucomment{Perhaps $O((n+T(n)) \cdot n^{\alpha_\dstrong})$}
%     $O((n+T(n)) \cdot n^{\alpha_\dstrong})$
%     and uses space 
%     % $S(n) \cdot n^{\alpha_\dstrong}$ \ucomment{Shouldn't this be $S(n) + O(n)$}, 
%     $S(n) + O(n)$,
%     such that $L(M') = \setpred{w}{\rfdcl{w}\cap L \neq \emptyset}$.
% \end{theorem}
These follow because one can guess and check whether a prefix preserves thread order and reads-from relations.
In next section, we will show that such a relaxation allows us to obtain a previously known class of data races and deadlocks.
%!TEX root=./main.tex

\section{Strong Prefixes versus Synchronization Preservation}
\seclabel{syncp}

% Trace prefix is not the first notion that captures a space of feasible reorderings between correct reorderings and Mazurkiewicz trace.
% The notion of sync-preserving correct reorderings~\cite{SyncP2021} characterizes a subset of correct reorderings where the order of synchronization primitives, such as locks, are maintained. 
% These previous works propose algorithms predicting data races and deadlocks in sync-preserving correct reorderings, both of which run in linear time and linear space.
% In this section, we first show that sync-preserving correct reordering is a strict larger space than our notion, \zacomment{name}. 
% Nevertheless, we reveal the ability of our framework to report exactly the same data races and deadlock instances as in sync-preserving correct reorderings.
% According to , our approach consumes also linear time, however, only constant space. 

Recall the execution $\tr_2$ in \figref{trace-limitation-race}, where $(e_1, e_6)$ is a data race, but cannot be detected using a happens-before style detector, i.e., using the dependence $\dep_L$. 
On the other hand, trace $\rho_2$ demonstrates that this can be captured using strong prefixes (i.e., under $\dstrong_\rwl$ and $\dweak_\rwl$).
Indeed, this is a classic example of a \emph{synchronization preserving} data race proposed in~\cite{SyncP2021} 
and characterize a large class of predictable data races that can also be detected in linear time. 
The analogous notion class of synchronization preserving deadlocks captures a large class of predictable deadlocks~\cite{tuncc2023sound}, and can be detected efficiently. 
Both these classes of bugs can be predicted by looking for \emph{synchronization preserving correct reorderings}, and in this section we investigate the relationship between them and strong trace prefixes.
%for shared-memory multithreaded programs. 

\myparagraph{Synchronization preserving reorderings, data races and deadlocks}
% Let us first recall the notion of synchronization preserving data races and deadlocks. 
An execution $\rho \in \alphabet^*_\rwl$ is a synchronization preserving correct reordering of execution $\tr \in \alphabet^*_\rwl$ if
\begin{enumerate*}[label=(\alph*)]
    \item $\rho$ is a correct reordering of $\tr$, and
    \item for each pair of acquire events $a_1 \neq a_2$ (alternatively, critical sections) of $\tr$ on the same lock $\lk$, such that both $a_1$ and $a_2$ are present in $\rho$, we have, $a_1 \trord{\rho} a_2$ iff $a_1 \trord{\tr} a_2$.
\end{enumerate*}
We use $\syncp{\tr}$ to denote the set of all synchronization-preserving correct reordering of $\tr$.
A sync(hronization)-preserving data races is a pair of conflicting events $(e_1, e_2)$, such that there is a synchronization preserving correct reordering $\rho$ 
in which $e_1$ and $e_2$ are $\tr$-enabled, i.e., $e_1, e_2 \not\in \events{\rho}$ but 
all events that are thread before $e_1$ and $e_2$ are present in $\rho$.
Likewise, A sync-preserving deadlock of length $k$ is a deadlock pattern
\footnote{A deadlock pattern of size $k$ is a sequence of acquire events $D = (e_1, \dots, e_k)$ performed by $k$ distinct threads $t_1, \dots, t_k$, acquiring $k$ distinct locks $\lk_1, \dots, \lk_k$ such that
$\lk_i \in \texttt{HeldLks}(e_{i\% k + 1})$, and further, the locks held at $e_i$ and $e_j$ are disjoint ($i \neq j$).} 
$(e_1,\dots,e_k)$, 
such that there is a synchronization preserving correct reordering $\rho$ 
in which $e_1, \dots, e_k$ are $\tr$-enabled.
This class of data races and deadlocks can be detected in linear time and space~\cite{SyncP2021,tuncc2023sound}.
% \begin{theorem}
%     Given trace $\tr$, determining the existence of sync-preserving data race (deadlock) of $\tr$ can be solved in time $O(|\tr|)$ and space $O(\tr)$.
% \end{theorem}

We observe that the algorithmic 
efficiency in predicting sync-preserving data races (resp. deadlocks)
stems from the fact that 
whenever $(e_1, e_2)$ is a data race (resp. $(e_1, \dots, e_k)$ is a deadlock), 
it can be witnessed by a reordering which is not only synchronization preserving, 
but \emph{also preserves the order of conflicting read and write events},
i.e., through a \emph{conflict preserving reordering}:
\begin{definition} (Conflict-preserving Correct Reordering)
\deflabel{confp-reordering}
    A reordering $\rho$ of an execution $\tr$ is a conflict preserving correct reordering if
    \begin{enumerate*}[label=(\alph*)]
        \item $\rho$ is the correct reordering of $\tr$,
        \item for every lock $\lk$ and for any two acquire event $a_1, a_2$ labelled $\acq(\lk)$ in $\rho$, $a_1 \trord{\rho} a_2$ iff $a_1 \trord{\tr} a_2$, and 
        \item for every two conflicting events $e_1$ and $e_2$ in $\rho$, $e_1\trord{\rho} e_2$ iff $e_1 \trord{\tr} e_2$.
    \end{enumerate*} 
\end{definition}
Here, we say $(e_1, e_2)$ is a conflicting pair of events if $(\lbl{e_1}, \lbl{e_2}) \in \dep_{\mathsf{conf}}$.
We use $\confp{\tr}$ to denote all conflict-preserving correct reordering of $\tr$.
Observe that every conflict-preserving correct reordering of $\tr$ is also a synchronization-preserving correct reordering of $\tr$.

\begin{restatable}{proposition}{confpSyncpProp}
    For any execution $\tr\in \alphabet_\rwl^*$, we have $\confp{\tr}\subseteq \syncp{\tr}$.
\end{restatable}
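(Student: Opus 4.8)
The plan is to unwind the two definitions and check that every defining clause of $\syncp{\tr}$ is implied by the corresponding clause of $\confp{\tr}$. So fix an execution $\tr \in \alphabet_\rwl^*$ and an arbitrary $\rho \in \confp{\tr}$; the goal is to show $\rho \in \syncp{\tr}$. By clause (a) of \defref{confp-reordering}, $\rho$ is already a correct reordering of $\tr$, which is exactly clause (a) of the definition of a synchronization-preserving correct reordering, so nothing is needed there.

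The only real content is translating clause (b) of \defref{confp-reordering} — which talks about acquire events labelled $\acq(\lk)$ — into clause (b) of the synchronization-preserving definition — which talks about pairs of acquire events $a_1 \neq a_2$ on the same lock $\lk$. First I would observe that an event labelled $\acq(\lk)$ is, by definition of $\alphabet_\rwl$, precisely an acquire event on lock $\lk$, so the two phrasings quantify over the same set of pairs. Then, given $a_1 \neq a_2$ acquire events on lock $\lk$ that are both present in $\rho$, clause (b) of \defref{confp-reordering} directly gives $a_1 \trord{\rho} a_2$ iff $a_1 \trord{\tr} a_2$, which is what clause (b) of the synchronization-preserving definition demands. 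Hence $\rho \in \syncp{\tr}$, and since $\rho$ was arbitrary, $\confp{\tr} \subseteq \syncp{\tr}$.

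There is essentially no obstacle here: the proposition is immediate from the definitions, because conflict-preserving reorderings impose strictly more constraints (additionally clause (c) on all conflicting pairs) while sharing clauses (a) and (b). The only thing to be slightly careful about is the phrasing mismatch between ``acquire events $a_1 \neq a_2$ on the same lock $\lk$'' and ``acquire events $a_1, a_2$ labelled $\acq(\lk)$'' — one should note these describe the same pairs, and also note the implicit ``$a_1 \neq a_2$'' is harmless since for $a_1 = a_2$ both biconditionals hold trivially. I would present this as a two or three sentence proof rather than a structured argument, since it is purely definitional; the substance of this section lies in the subsequent results about when conflict-preserving witnesses exist for sync-preserving races and deadlocks, not in this containment.
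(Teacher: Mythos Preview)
Your proposal is correct and matches the paper's approach: the paper's proof sketch simply says the containment follows straightforwardly from the two definitions, since conflict-preserving reorderings impose the additional constraint on conflicting events while sharing the other clauses. Your argument is just a slightly more explicit unwinding of that same observation.
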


We now formalize our observation: 
%We identify that in fact every synchronization-preserving data race (deadlock) is also a conflict-preserving data race (deadlock).
\begin{restatable}{lemma}{syncpConfpEnabledEvents}
    \lemlabel{syncp-confp}
    Let $\tr \in \alphabet_\rwl^*$ be an execution.
    A sequence of events $(e_1, \dots, e_k)$ are $\tr$-enabled in some synchronization-preserving reordering of $\tr$
    iff they are $\tr$-enabled in some conflict-preserving reordering of $\tr$.
    Thus, sync-preserving data races and deadlocks can also be witnessed using conflict preserving reorderings.
\end{restatable}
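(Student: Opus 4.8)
The ``if'' direction is immediate from the preceding proposition: $\confp{\tr}\subseteq\syncp{\tr}$, so if $(e_1,\dots,e_k)$ are $\tr$-enabled in some conflict-preserving correct reordering of $\tr$, that reordering is in particular synchronization-preserving. For the ``only if'' direction, suppose $\rho$ is a synchronization-preserving correct reordering of $\tr$ in which $(e_1,\dots,e_k)$ are $\tr$-enabled, and set $S=\events{\rho}$. The plan is to show that the subsequence $\rho':=\proj{\tr}{S}$ of $\tr$ induced by exactly the events appearing in $\rho$ is already a conflict-preserving correct reordering in which $(e_1,\dots,e_k)$ are still $\tr$-enabled.

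Most properties of $\rho'$ come for free, because $\rho'$ reorders the \emph{same} event set $S$ as $\rho$ but along the order of $\tr$. Since $\rho$ is a correct reordering, $S$ is downward closed under thread order and, for every read $r\in S$, the write that $r$ reads from in $\tr$ lies in $S$; consequently $\rho'$ is thread-order downward closed, contains $S\subseteq\events{\tr}$, and, being a subsequence of $\tr$, both preserves the reads-from relation (in $\tr$ there is no write to the location of $r$ between $r$ and its write, hence none in $\rho'$) and trivially preserves the $\tr$-order of every conflicting pair and of every pair of acquires on the same lock. These are precisely conditions (b) and (c) of \defref{confp-reordering}, and they, together with well-formedness (next paragraph), also give condition (a). Enabledness carries over verbatim: $e_i\notin S=\events{\rho'}$, and every thread-predecessor of each $e_i$ lies in $S$ because $(e_1,\dots,e_k)$ are $\tr$-enabled in $\rho$.

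The single place where synchronization preservation is genuinely needed is establishing $\rho'\in L_\wf$. Since the matching acquire of any release in $S$ is one of its thread-predecessors and hence in $S$, $\rho'$ has no unmatched release; the only remaining risk is an \emph{incomplete} critical section, i.e., an acquire $a$ of a lock $\lk$ with $a\in S$ but matching release $\notin S$, such that some $\tr$-later (hence $\rho'$-later) acquire $a'$ of the same $\lk$ is also in $S$. To rule this out I would reason inside $\rho$: the matching release of $a$ is absent from $\events{\rho}$, so $\lk$ stays held from $a$'s position onward in $\rho$; but $\rho$ is well-formed and synchronization-preserving, so $a\trord{\tr}a'$ forces $a\trord{\rho}a'$, and then $a'$ would acquire an already-held lock in $\rho$, a contradiction. (When $a$ and $a'$ belong to the same thread, the matching release of $a$ is itself a thread-predecessor of $a'$, hence in $S$, contradicting incompleteness outright.) Therefore no two critical sections on the same lock overlap in $\rho'$, so $\rho'\in L_\wf$, and combining with the previous paragraph, $\rho'\in\confp{\tr}$ witnesses $(e_1,\dots,e_k)$ being $\tr$-enabled.

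The closing claim is then a formality: a sync-preserving data race is a conflicting pair $(e_1,e_2)$, and a sync-preserving deadlock a deadlock pattern $(e_1,\dots,e_k)$ (a notion determined by $\tr$ alone), required to be $\tr$-enabled in \emph{some} synchronization-preserving correct reordering; by the equivalence just proved we may take that reordering to be conflict-preserving. I expect the only delicate step to be the well-formedness argument for $\proj{\tr}{\events{\rho}}$ --- specifically, using synchronization preservation to forbid a stranded incomplete critical section from colliding with a later same-lock critical section --- while everything else reduces to routine bookkeeping about subsequences of $\tr$.
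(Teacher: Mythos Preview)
Your proposal is correct and follows essentially the same approach as the paper: both take the projection $\rho'=\proj{\tr}{\events{\rho}}$ of the sync-preserving witness onto $\tr$'s order and argue it is a conflict-preserving correct reordering that still enables $(e_1,\dots,e_k)$. Your treatment of well-formedness --- using synchronization preservation to rule out an incomplete critical section colliding with a later same-lock acquire inside $\rho$ --- is in fact more explicit than the paper's sketch, which asserts well-formedness more summarily.
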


The connection between synchronization preserving and strong reads-from prefixes is now straightforward because the class of conflict preserving races or conflict preserving deadlocks can be accurately modelled in our framework:
\begin{restatable}{lemma}{confPRF}
\lemlabel{confPRF}
    Let $\tr \in \alphabet_\rwl^*$ be an execution. 
    We have $\rfdcl{\tr}\cap L_\wf = \confp{\tr}$.
\end{restatable}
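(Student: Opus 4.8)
The plan is to prove the two inclusions separately, in each case peeling the transitive closure in the definition of $\rfqo$ back to its generators and matching them against the three defining clauses of conflict-preserving correct reorderings (\defref{confp-reordering}). I first record the structural facts I will lean on: (i) by \thmref{soundness} (the strong-reads-from version) we already have $\rfdcl{\tr}\cap L_\wf \subseteq \creorder{\tr}$, so every candidate in $\rfdcl{\tr}\cap L_\wf$ is automatically a well-formed correct reordering; (ii) $\rfqo$ is generated from $\strongqo{\dstrong_\rwl}{\dweak_\rwl}$ together with the single event-deletion rule of clause~2 of \defref{strong-reads-from-prefix}; and (iii) $\strongqo{\dstrong_\rwl}{\dweak_\rwl}$ is itself generated from $\mazeq{\dstrong_\rwl\cup\dweak_\rwl}$ and the event-deletion rule of \defref{strong-trace-prefix}, i.e.\ from commutations of $\dep_\rwl$-independent adjacent events and from deletions of events with no later $\dstrong_\rwl$-dependent event. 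Thus every step in a derivation $w' \rfqo \tr$ is either a $\dep_\rwl$-independent adjacent swap, a $\dstrong_\rwl$-safe deletion, or a (thread-order + reads-from)-safe deletion.

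\emph{Inclusion $\rfdcl{\tr}\cap L_\wf \subseteq \confp{\tr}$.} Let $\rho \in \rfdcl{\tr}\cap L_\wf$. Condition~(a) of \defref{confp-reordering} follows from fact~(i) above. For conditions~(b) and~(c) I argue by induction on the length of a derivation of $\rho \rfqo \tr$, showing that the invariant ``the relative order of every surviving pair of events that is either a same-lock acquire pair or a conflicting pair agrees with $\trord{\tr}$'' is preserved by each generator step. Adjacent swaps only exchange $\dep_\rwl$-independent events, and neither a same-lock acquire pair nor a conflicting pair is $\dep_\rwl$-independent (the former lies in $\dep_\locks \subseteq \dep_\rwl$, the latter in $\dep_{\mathsf{conf}} \subseteq \dep_\rwl$), so such swaps never reorder a monitored pair; deletions only remove events and hence cannot introduce an order inversion among survivors. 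Since in $\tr$ itself the invariant holds trivially, it holds in $\rho$, giving (b) and (c).

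\emph{Inclusion $\confp{\tr} \subseteq \rfdcl{\tr}\cap L_\wf$.} Let $\rho \in \confp{\tr}$. Well-formedness of $\rho$ is part of being a correct reordering, so $\rho \in L_\wf$; it remains to exhibit a derivation $\rho \rfqo \tr$. The strategy is the standard two-phase one: first delete, then permute. Let $D = \events{\tr}\setminus\events{\rho}$ be the deleted events. I claim $D$ can be removed one event at a time, always removing a currently $\rfqo$-deletable event, so that $\tr \xrightarrow{\rfqo\text{-deletions}} \tr'$ where $\events{\tr'} = \events{\rho}$ and $\tr'$ retains the $\tr$-order on its events; then, because $\rho$ and $\tr'$ have the same event set and $\rho$ (being a correct reordering) agrees with $\tr'$ on all $\dep_\threads$ pairs, on all reads-from pairs (clause (4) of correct reordering plus conflict preservation), on all same-lock lock pairs (clause (b), extended from acquires to the whole critical-section structure via well-formedness), and on all conflicting pairs (clause (c)) — i.e.\ on all of $\dep_\rwl$ — a standard trace-theoretic argument (projection onto each pair of dependent letters) shows $\tr' \mazeq{\dstrong_\rwl\cup\dweak_\rwl} \rho$, hence $\rho \strongqo{\dstrong_\rwl}{\dweak_\rwl}\tr' \rfqo \tr$. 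The crux is the deletion claim: I need that if $D\neq\emptyset$ then some $e\in D$ has no later event (in the current sequence) that is $\dep_\threads$-dependent on it or reads from it. Suppose not; then every $e\in D$ has such a successor, and since thread-order and reads-from successors of a deleted event must themselves be deleted (for thread-order, because $\rho$ is downward-closed under program order; for reads-from, because $\rho$ is a correct reordering and conflict-preserving so a surviving reader of a deleted write would contradict either clause (4) or clause (c)), we can follow these successors within $D$ forever — impossible in a finite set unless there is a cycle, and the union of program order and reads-from is acyclic. So a deletable event always exists, and the induction goes through.

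\emph{Main obstacle.} The delicate point is the deletion claim and, specifically, verifying that a surviving event can never be a reads-from or thread-order successor of a deleted event so that the ``descend inside $D$'' argument stays inside $D$: the thread-order case is immediate from program-order downward closure, but the reads-from case requires carefully combining the definition of correct reordering (clause (4): the write a surviving read reads from must be present) with conflict preservation (clause (c): that write must moreover be the same write as in $\tr$, so that reads-from is literally inherited rather than merely ``some write''). A secondary, more routine obstacle is the clean passage, in the second inclusion's permutation phase, from ``agrees on all conflicting pairs and all same-lock acquire pairs'' to ``agrees on all of $\dep_\rwl$,'' which needs the observation that well-formedness upgrades agreement on acquire order to agreement on the full $\dep_\locks$ order of lock events.
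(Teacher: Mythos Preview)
Your proposal is correct and takes essentially the same approach as the paper, which only records that the lemma ``follows straightforwardly from \defref{confp-reordering} and \defref{strong-reads-from-prefix}''; your two-inclusion argument with the delete-then-permute strategy is exactly the natural way to unpack that claim. One small remark: in your deletion phase, the reads-from obstacle is actually handled by clause~(4) of correct reordering alone (the $\tr$-writer $w$ of a surviving read must survive, hence lies in the current subsequence, and since no write to $x$ sits between $w$ and the read in $\tr$ none can in the subsequence either, forcing $e=w$), so clause~(c) is not needed there---it is only needed later, in the permutation phase, to get agreement on $\dep_{\mathsf{conf}}$.
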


Consequently, we get algorithms for detecting sync-preserving data races and deadlocks with improved space bound and same time:
\begin{restatable}{theorem}{syncPRaceDeadlockConstSpace}
    \thmlabel{syncp-constant}
    Synchronization preserving correctly synchronization races and deadlocks can be detected in linear time and constant space.
\end{restatable}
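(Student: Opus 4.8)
The plan is to reduce the detection of sync-preserving data races and of sync-preserving deadlocks to a single predictive-monitoring question of the form ``$\rfdcl{\tr}\cap L_\wf\cap L\neq\emptyset$?'' for a fixed star-connected regular language $L$, and then transport a known constant-space trace-based monitor through \thmref{reads-from-nondet}. First I would strip away the ``synchronization-preserving'' qualifier: by \lemref{syncp-confp}, a conflicting pair $(e_1,e_2)$ (resp.\ a deadlock pattern $(e_1,\dots,e_k)$) is a sync-preserving race (resp.\ deadlock) of $\tr$ iff the tuple is $\tr$-enabled in some \emph{conflict}-preserving correct reordering of $\tr$; and by \lemref{confPRF}, $\confp{\tr}=\rfdcl{\tr}\cap L_\wf$, so this is equivalent to the existence of $\rho\in\rfdcl{\tr}$ with $\rho\in L_\wf$ in which the tuple is $\tr$-enabled.

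Next I would encode the predicate ``$\rho\in L_\wf$ and some conflicting pair (resp.\ some size-$k$ deadlock pattern) is $\tr$-enabled in $\rho$'' as membership of $\rho$ in a fixed regular pattern language $L$, following \cite{ang2023predictive}: one appends to the end of $\rho$ the witnessing events in the order mandated by $\tr$, so the target becomes a finite pattern over $\alphabet_\rwl$ (for deadlocks the purely static side conditions — distinct threads and locks, held-lock disjointness, and $\lk_i\in\texttt{HeldLks}(e_{i\%k+1})$ — depend only on $\tr$ and the chosen events, so they can be pre-checked). Writing $\hat L=L\cap L_\wf$, detection of either bug thus reduces to deciding ``$\rfdcl{\tr}\cap\hat L\neq\emptyset$?''. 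Now $L$ is star-connected \cite{Ochmanski85,ang2023predictive}, so the vanilla question ``$\eqcl{\dep_\rwl}{\tr}\cap\hat L\neq\emptyset$?'' is decided by a one-pass, deterministic, constant-space machine $M$ (this is precisely the linear-time constant-space guarantee already underlying happens-before-style race and deadlock detectors, and the fact that such an $M$ need not have any predictive power under trace equivalence is irrelevant here). Feeding $M$ into \thmref{reads-from-nondet} (with the well-formedness bookkeeping of \thmref{well-formedness} folded in via $\hat L$) yields a nondeterministic machine $M'$ with $L(M')=\setpred{\tr}{\rfdcl{\tr}\cap\hat L\neq\emptyset}$ using time $O(n)$ and space $S(n)+c=O(1)$. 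Since a nondeterministic device that operates in constant space recognizes a regular language, $L(M')$ is regular and hence admits an equivalent deterministic one-pass constant-space recognizer (a DFA), which runs in linear time; composing it with the static pre-checks above gives linear-time, constant-space detectors, improving the linear-space algorithms of \cite{SyncP2021,tuncc2023sound}.

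The step I expect to be the real work is the middle one: verifying that the ``$\tr$-enabled tuple'' formulations of sync-preserving races and, especially, deadlocks are genuinely captured by star-connected pattern languages in the sense of \cite{ang2023predictive} — in particular that the size-$k$ deadlock pattern (with $k$ a fixed constant) together with its held-lock side conditions can be folded into the finite control without breaking star-connectedness or the one-pass constant-space bound, and that intersecting with $L_\wf$ for the base machine is harmless. The determinization of $M'$ and the invocations of Lemmas \lemref{syncp-confp} and \lemref{confPRF} and Theorem \thmref{reads-from-nondet} are then routine bookkeeping.
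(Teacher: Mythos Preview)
Your high-level architecture --- strip ``sync-preserving'' via \lemref{syncp-confp}, identify $\confp{\tr}$ with $\rfdcl{\tr}\cap L_\wf$ via \lemref{confPRF}, lift a one-pass constant-space base monitor through \thmref{reads-from-nondet}, then determinize the resulting constant-space nondeterministic machine --- is exactly the paper's route. The divergence, and the gap, is in your choice of the base language $L$.

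The paper does not detour through pattern languages at all. Its base property is simply ``the tuple $e_1,\dots,e_k$ is enabled in this prefix'', which depends only on the \emph{set} of events present and is therefore already invariant under $\mazeq{\dep_\rwl}$; the required machine $M$ is a trivial one-pass constant-space check (per relevant thread, has the immediate $\tr$-predecessor of $e_i$ been seen but not $e_i$ itself?), and no appeal to star-connectedness or to \cite{ang2023predictive} is needed.

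Your ``append the witnessing events and match a pattern'' encoding breaks for deadlocks. By the definition of a deadlock pattern, $\lk_i\in\texttt{HeldLks}(e_{i\%k+1})$, so in any $\rho$ where the tuple is enabled, thread $t_{i+1}$ currently holds $\lk_i$; appending $e_i=\acq(\lk_i)$ therefore yields an ill-formed string, whence $\rho\cdot e_1\cdots e_k\notin L_\wf$ and hence $\notin\rfdcl{\tr}\cap L_\wf=\confp{\tr}$, \emph{precisely} when the deadlock is real. The intersection you test is thus empty on every genuine witness, and your ``static pre-checks'' cannot help because the failure is in the dynamic well-formedness of the appended string, not in the pattern structure. (Even for races the append is not innocuous: you would need $\rho\cdot e_1\cdot e_2\in\confp{\tr}$, which can fail when some write of $\rho$ lies $\tr$-conflict-ordered strictly between $e_1$ and $e_2$ --- for instance $\tr=\ev{t_1}{\wt(x)}\,\ev{t_3}{\wt(x)}\,\ev{t_3}{\wt(y)}\,\ev{t_2}{\rd(y)}\,\ev{t_2}{\wt(x)}$ has $(e_1,e_5)$ enabled in $\rho=e_2e_3e_4\in\confp{\tr}$ but never adjacent in any $\rho'\in\confp{\tr}$; so ``enabled'' and ``adjacent'' do not coincide under $\confp{\tr}$, and you would owe a separate argument that the existential over all pairs still comes out right.) The remedy is simply to take $L$ to be the ``tuple enabled'' language directly, as the paper does.
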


Even though, in the context of data races and deadlocks, it suffices to look at conflict preserving correct reorderings, in general, the class of synchronization preserving reorderings is much more expressive. 
As a consequence, when one goes beyond data races and deadlocks to a slightly different class of specifications, 
the predictive monitoring question under synchronization preserving reorderings quickly becomes hard. 
In particular, we demonstrate this in the context of predicting if two events can be reordered in a certain order.
Under Mazurkiewicz's trace equivalence, this problem can be decided in linear time and constant space, and thus also for strong trace prefixes (\thmref{nondet}).
However, in the context of synchronization preserving reorderings, we show that this problem cannot be solved in linear time and constant space.

\begin{restatable}{theorem}{SyncPLinearSpace}
    \thmlabel{linearspace}
    Let $\tr\in\alphabet_\rwl^*$ be an execution, and $e_1, e_2\in \events{\tr}$ be two events.
    Any streaming algorithm that checks if there is an execution $\rho\in\syncp{\tr}$ such that $e_1\trord{\rho}e_2$ uses linear space.
\end{restatable}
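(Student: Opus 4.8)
The plan is to prove a streaming space lower bound via a reduction from a communication-complexity problem, most naturally \textsc{Set-Disjointness} (or \textsc{Index}). First I would set up the standard streaming-to-communication correspondence: split the input execution $\tr$ into a prefix $\tr^{(1)}$ (Alice's portion, encoding a set $A \subseteq [n]$) and a suffix $\tr^{(2)}$ (Bob's portion, encoding a set $B \subseteq [n]$), so that a streaming algorithm using space $s$ yields a one-way communication protocol of cost $s$; since \textsc{Set-Disjointness} on universe $[n]$ requires $\Omega(n)$ communication (even randomized, two-way), any streaming algorithm solving the induced problem uses $\Omega(n) = \Omega(|\tr|)$ space. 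The heart of the argument is therefore the \emph{gadget construction}: I must design, from an instance $(A,B)$, an execution $\tr_{A,B} \in \alphabet_\rwl^*$ together with designated events $e_1, e_2$ such that there exists $\rho \in \syncp{\tr_{A,B}}$ with $e_1 \trord{\rho} e_2$ \emph{if and only if} $A \cap B \neq \emptyset$ (or some similar clean predicate).

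The key design idea exploits exactly the feature that distinguishes $\syncp{\cdot}$ from conflict-preserving (hence strong reads-from) reorderings, namely that synchronization-preserving reorderings may reorder conflicting read/write events freely as long as the reads-from relation and lock order are respected. The plan is to use one lock $\lk_i$ per index $i \in [n]$: in Alice's block, for each $i \in A$ install a critical section on $\lk_i$ (appropriately entangled with the thread carrying $e_1$ so that ``moving $e_1$ early'' forces dropping or committing to a particular lock ordering), and symmetrically in Bob's block for each $i \in B$, arranged so that the lock-order constraint $a_1 \trord{\rho} a_2 \iff a_1 \trord{\tr} a_2$ is violated precisely when $A$ and $B$ share no common index (or the opposite polarity). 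One would route the ``reordering $e_1$ before $e_2$'' requirement through a chain of reads-from edges and program-order edges whose satisfiability collapses to a single index being ``selected'' on both sides. I would then verify the two directions: (i) if $i^\star \in A \cap B$, exhibit an explicit $\rho$ witnessing $e_1 \trord{\rho} e_2$, checking it is a correct reordering (well-formed, program-order downward closed, reads-from preserved) and synchronization preserving; (ii) conversely, if $A \cap B = \emptyset$, show any candidate $\rho$ with $e_1 \trord{\rho} e_2$ must violate either well-formedness, a reads-from constraint, or the same-lock order condition, by a case analysis on which of the index-gadgets' critical sections survive in $\rho$.

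The main obstacle I anticipate is the gadget design itself: making the ``$e_1$ before $e_2$'' demand interact with the per-index lock gadgets so that feasibility is \emph{exactly} non-disjointness, while keeping the construction size $O(n)$ and ensuring that the ability to drop whole critical sections (allowed by correct reorderings) does not give the algorithm an unintended ``escape'' that makes every instance feasible. In particular I must ensure that dropping a critical section on $\lk_i$ forces dropping some event that is program-order-before $e_2$ or that $e_1$ reads-from, so that wholesale deletion cannot trivially satisfy the ordering requirement; this is where careful placement of reads-from edges between the two blocks is essential. A secondary subtlety is that the bound claimed is for the exact (non-promise) decision problem, so I would either use a deterministic/exact communication lower bound for \textsc{Disjointness} or the $\Omega(\log n)$-gap-free \textsc{Index} lower bound if a one-way model suffices; since the statement only asserts ``uses linear space'' for any streaming algorithm, reducing from \textsc{Index} (one-way, $\Omega(n)$) is the cleanest route and avoids two-way protocol bookkeeping.
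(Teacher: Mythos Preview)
Your high-level framework (streaming lower bound via communication complexity) is sound and is also what underlies the paper's argument, but the route you sketch diverges substantially from the paper's and misses a key simplification.

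The paper does not build any lock gadgets at all. Instead it observes that when an execution contains \emph{no} lock events, the synchronization-preservation constraint becomes vacuous, so $\syncp{\tr}$ coincides with the full set $\creorder{\tr}$ of correct reorderings. It then simply invokes the existing linear-space lower bound for the ordering question under correct reorderings from~\cite{grain2024} (reduced there from the equality language $L_n = \{\vec a\,\#\,\vec b \mid \vec a = \vec b \in \{0,1\}^n\}$), whose construction is lock-free. The only additional care is to place $e_1, e_2$ as the last events of their respective threads so that any reordering witnessing $e_1 \trord{\rho} e_2$ must retain all events, matching the setting of~\cite{grain2024}. Thus the paper's proof is essentially a two-line observation on top of a cited result.

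Your plan, by contrast, proposes a fresh per-index lock gadget tied to \textsc{Disjointness}/\textsc{Index}. This could in principle work, but it is considerably harder: you yourself flag the main obstacle, namely preventing the adversary from ``escaping'' by dropping entire critical sections (which correct reorderings, and hence $\syncp{\cdot}$, freely allow). Getting this right requires threading reads-from edges so that dropping any $\lk_i$-critical-section forces dropping something program-order-before $e_1$ or $e_2$; you have not supplied such a construction, and it is exactly the delicate part. More importantly, the lock machinery is unnecessary: the whole point is that sync-preservation adds constraints only in the presence of locks, so the cleanest hardness instance is one with no locks at all, where you inherit hardness directly from $\creorder{\cdot}$. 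Your plan is not wrong in spirit, but it overlooks this shortcut and leaves the hard gadget-design step open.
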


Indeed, the above problem (checking if two events can be flipped) is an example of a pattern language~\cite{ang2023predictive}, whose predictive monitoring can be solved in linear time and constant space under Mazurkiewicz traces, and thus also under strong prefixes.
% However, \thmref{linearspace} indicates that any streaming algorithm deciding this problem under synchronization-preserving reorderings has a linear-space lower bound.
We therefore remark that, strong prefixes lie at the horizon of tractability in the context of predictive monitoring. 
%!TEX root=./main.tex

\section{Experimental Evaluation}

We evaluate the effectiveness of strong prefixes and strong reads-from prefixes for the purpose of predictive monitoring of executions (over $\alphabet_\rwl^*$) of shared memory multi-threaded programs.
The goal of our evaluation is two-folds.
First, we want to empirically 
gauge the enhanced predictive power of strong reads-from prefixes
over prediction based on trace equivalence.
We demonstrate this using prediction against pattern languages proposed in~\cite{ang2023predictive}.
For data races, strong reads-from prefixes can capture sync-preserving races, 
which have already been shown to have more empirical predictive power over trace-based reasoning~\cite{SyncP2021}.
Second, we want to evaluate how our, not-so-customized but constant space, algorithm
for synchronization-preserving data races and deadlocks (\thmref{syncp-constant})
performs against the linear space algorithm due to~\cite{SyncP2021,tuncc2023sound}.
% We show that
% data races and deadlocks detected under strong reads-from prefixes is exactly the same as those are detected by the existing algorithms SyncP and SyncPD, under synchronization-preserving reorderings.

\myparagraph{Implementation and Setup}{
We implemented our predictive monitoring algorithms
for data races, deadlocks and pattern languages,
% in a dynamic analysis framework RAPID~\cite{rapid},
in Java, obtained by the determinizing the
non-deterministic monitors obtained from \thmref{reads-from-nondet}.
We evaluate against benchmarks derived from~\cite{ang2023predictive,SyncP2021,tuncc2023sound},
consisting of concurrent programs from a variety of suites:
\begin{enumerate*}[label=(\roman*)]
    \item the IBM Contest suite~\cite{FarchiNU03},
    \item the DaCapo suite~\cite{Blackburn06},
    \item the Java Grande suite~\cite{Smith01},
    \item the Software Infrastructure Repository suite~\cite{DoER05}, and
    \item others~\cite{Legunsen2016,cai2021sound,joshi2009randomized,Kalhauge2018,JulaTZC08}.
\end{enumerate*}
For each benchmark program, we generated an execution log using RV-Predict~\cite{RV2003} and 
evaluate all competing algorithms on the same execution.
Our experiments were conducted on a 64-bit Linux machine with Java 19 and 400GB heap space.
Throughout, we set a timeout of $3$ hours for each run of every algorithm.
We present brief summary of our results here, and the details can be found
in \appref{exp}.
}
\vspace{-0.15in}
%!TEX root=./main.tex

%!TEX root=../main.tex

\begin{table}[t]
\caption{
\tablabel{pattern}
Predictive monitoring against pattern languages, grouped by pattern length.
Column 3 (Column 6) reports the total time taken under
trace equivalence (string reads-from prefixes).
Column 2 (Column 4) reports the number of successful matches
under trace equivalence (strong reads-from prefixes).
Column 5 reports the number of times prediction based on strong reads-from prefixes
reports an earlier match.}
\centering
\setlength{\tabcolsep}{0.1em}
\renewcommand{\arraystretch}{0.9}
\footnotesize
\scalebox{0.9}{
    \begin{tabular}{|c||c|c||c|c|c|}
        \hline
            1 & 2 & 3 & 4 & 5 & 6\\ \hline
            Length of & \#Match in & Time  & \#Match in & \#Earlier Match in & Time \\
            Pattern & Maz. & Maz. & Strong RF. & Strong RF. & Strong RF.\\\hline
            3 & 230 & 10h 23m & 249 & 7 & 10h 55m \\ 
        5 & 197 & 9h 6m & 209 & 12 & 10h 17m \\ \hline
        Total & 427 & 19h 29m & 458 & 19 & 21h 12m \\ \hline
        \end{tabular}
}
\vspace{-0.2in}
\end{table}

\subsection{Enhanced Predictive Power of Strong Prefixes}

We demonstrate the enhanced predictive power due to our proposed formalism 
in the context of predictive monitoring against pattern languages~\cite{ang2023predictive}.
Pattern language specifications take the form
$\patt{a}{d}$, and thus include all executions that contain $\seq{a}{d}$
as a sub-sequence.
Predictive monitoring against pattern languages can be performed in constant space
and linear time under trace equivalence~\cite{ang2023predictive}.
% and thus, also under strong reads-from prefixes (\thmref{reads-from-nondet}).
% , i.e., a pattern can be detected in the strong reads-from prefixes downward closure but cannot be exposed under Mazurkiewicz equivalence

\myparagraph{Implementation and Methodology}{
% Our implementation conforms to \thmref{nondet}.
$M$ (under trace equivalence) proposed in~\cite{ang2023predictive}.
To perform predictive monitoring under strong reads-from prefixes,
our algorithm $M'$ guesses an appropriate prefix and invokes 
the predictive monitor $M$ (under trace equivalence) proposed in~\cite{ang2023predictive}, .
Since $M$ consumes constant space, in theory,
simulating $M'$ also requires constant space (see also \thmref{reads-from-nondet}).
The resulting space complexity, however, can be prohibitive in practice.
For scalability, we employ randomization to select a subset of prefixes, and only inspect these. 
Our results show that, despite this compromise,
the predictive power under strong reads-from prefixes is higher than under
trace equivalence.
% }
% 
% \myparagraph{Selecting Patterns}{
% \noindent
We use $30$ benchmark executions, and for each execution, we isolate $20$ patterns
(of size $3$ and $5$), from randomly chosen sub-executions of length $5000$,
following~\cite{ang2023predictive}.
For each pair of benchmark and pattern,
we run the two streaming algorithms  (trace equivalence v/s strong reads-from prefixes),
on the sub-execution from which the pattern is extracted,
allowing us to optimize memory usage.
Both algorithms terminate as soon as the pattern is matched, 
otherwise the algorithms process the entire sub-execution.
We use the publicly available implementation of~\cite{ang2023predictive}.
% (not match). 
% We also report the length of the processed prefix of the execution.
}

\myparagraph{Evaluation Results}{
Our results are summarized in~\tabref{pattern}.
First, all matches reported under trace equivalence
were also reported under strong reads-from prefixes,
as expected based on \thmref{soundness} .
Second, out of the $30\times 20$ combinations of executions and patterns,
trace equivalence based prediction reports $33$ fewer 
matches as compared to prediction based on 
strong reads-from prefixes ($466$ vs $499$).
The enhancement in prediction power spans patterns
of both sizes --- $15$ extra matches were found for patterns
of size $3$ and $18$ extra matches were found for patterns of size $5$.
Third, we also collect more fine-grained information ---
amongst the $466$ combinations reported by both,
$18$ were reported earlier (in a shorter execution prefix) under strong reads-from prefixes,
and thus are new violations.
Finally, the total time taken for prediction under reads-from prefixes
is higher, as expected, but only by $6\%$.
In summary, strong reads-from prefixes offer higher prediction power
in practice, with moderate additional overhead.
% effectively claims the enhanced predictive power of strong prefixes model over Mazurkiewicz traces.
}
\vspace{-0.1in}
%!TEX root=./main.tex

\subsection{Strong Reads-From Prefixes v/s Sync-Preservation}

%!TEX root=../main.tex

\begin{table}[t]
    \caption{
    \tablabel{race}
    Synchronization-preserving v/s conflict-preserving data races.
    $\mathcal{N}$ and $\threads$ denote the number of events and threads in the executions.}
    \setlength{\tabcolsep}{0.5em}
    \renewcommand{\arraystretch}{0.9}
    \centering
    \footnotesize
    \scalebox{0.9}{
        \begin{tabular}{|c|c|c||c|c||c|c|}
            \hline
                1 & 2 & 3 & 4 & 5  & 6 & 7 \\ \hline
                Benchmark & $\mathcal{N}$ & $\threads$ & \multicolumn{2}{c||}{$\texttt{SyncP}$}  & \multicolumn{2}{c|}{$\texttt{ConfP}$} \\ \hline
                & & & Race & Time & Race & Time \\ \hline
                lang & 6K & 8 & 400 & 0.25s & 400 & 0.96s \\ 
                readerswriters & 11K & 6 & 199 & 0.70s & 199 & 1.10s \\ 
                raytracer & 15K & 4 & 8 & 0.16s & 8 & 0.27s \\ 
                ftpserver & 49K & 12 & 85 & 5.70s & 85 & 7984s \\ 
                moldyn & 200K & 4 & 103 & 0.65s & 103 & 3.24s \\ 
                derby & 1M & 5 & 29 & 12.99s & 29 & 29.3s \\ 
                jigsaw & 3M & 12 & 6 & 1.16s & 6 & 2.76s \\ 
                xalan & 122M & 9 & 37 & 149.39s & 37 & 1075 \\ 
                lufact & 134M & 5 & 21951 & 45.80s & 21951 & 29.78s \\ 
                batik & 157M & 7 & 10 & 0.11s & 10 & 0.37s \\ 
                lusearch & 217M & 8 & 232 & 7.57s & 232 & 2685s \\
                tsp & 307M & 10 & 143 & 115.8s & 143 & 631.8s \\ 
                luindex & 397M & 3 & 15 & 0.73s & 15 & 0.70s \\ \hline
                Mean & 33M & ~ & ~ & 2.82s & ~ & >22.82s \\ \hline
            \end{tabular}
    }
    \vspace{-0.2in}
\end{table}

We implemented the constant space linear time algorithm 
for sync-preserving data races and deadlocks (\thmref{syncp-constant})
and compare it against the linear space algorithms due to~\cite{SyncP2021,tuncc2023sound},
solving the same problem.

\myparagraph{Implementation and Methodology}{
The algorithm to detect conflict-preserving data races and deadlocks 
guesses strong reads-from prefixes and checks whether the enabled events in them constitute a
data race or a deadlock.
Following~\cite{SyncP2021,tuncc2023sound},
we filtered out thread-local events to reduce the space usage of all algorithms.

%\myparagraph{Evaluation Methods}
\noindent
% We compare our predictive monitoring algorithm under strong reads-from prefixes (conflict-preserving reorderings) with SyncP~\cite{SyncP2021} and SyncPD~\cite{tuncc2023sound} under synchronization-preserving prefixes.
We use the publicly available implementation of~\cite{tuncc2023sound,SyncP2021}.
We run all algorithms on the entire executions.
For the case of data races, we report 
the number of events $e_2$ for which there is an earlier event $e_1$ such that $(e_1, e_2)$ is a 
sync-preserving data race.
For the case of deadlocks, we report the number of
tuples of program locations corresponding to events reported to be in deadlock.
}

% \ucomment{Table captions need to be more elaborate. For example, what is N? }

\myparagraph{Evaluation Results}
We present our results in \tabref{race} and \tabref{deadlock}.
First, observe that the precision of 
data race and deadlock prediction based on strong reads-from prefixes 
is exactly the same as with prediction based on sync-preservation
(compare columns $4$ and $6$ in both tables).
Next, we observe that our implementations (even though constant-space)
is slower than the optimized algorithms proposed~\cite{SyncP2021}.
We conjecture this is because the constants appearing after determinization, 
are large (of the order of $O(2^{\textsf{poly}(|\vars| + |\threads| + |\locks|)}$),
also resulting in out-of-memory exceptions on some large benchmarks.

%!TEX root=../main.tex

\begin{table}[t]
\caption{
\tablabel{deadlock}
Synchronization-preserving v/s conflict-preserving deadlocks.
    $\mathcal{N}$ and $\threads$ denote the number of events and threads in the executions.}
\centering
\setlength{\tabcolsep}{0.5em}
\renewcommand{\arraystretch}{0.9}
\footnotesize
\scalebox{0.9}{
    \begin{tabular}{|c|c|c||c|c||c|c|}
        \hline
            1 & 2 & 3 & 4 & 5  & 6 & 7 \\ \hline
            Benchmark & $\mathcal{N}$ & $\threads$ & \multicolumn{2}{c||}{$\texttt{SyncP}$}  & \multicolumn{2}{c|}{$\texttt{ConfP}$} \\ \hline
            & & & Deadlock & Time & Deadlock & Time \\ \hline
            JDBCMySQL-1 & 442K & 3 & 2 & 0.48s & 2 & 4.45s \\ 
            JDBCMySQL-2 & 442K & 3 & 1 & 0.43s & 1 & 0.35s \\ 
            ArrayList & 2.63M & 801 & 3 & 2.47s & - & TO \\ 
            IdentityHashMap & 2.71M & 801 & 1 & 1.95s & 1 & 3.79s \\ 
            Stack & 2.93M & 801 & 3 & 3.77s & 3 & 261s \\ 
            LinkedList & 3.40M & 801 & 3 & 3.02s & 3 & 1174s \\ 
            HashMap & 3.43M & 801 & 2 & 2.34s & - & TO \\ 
            WeakHashMap & 3.48M & 801 & 2 & 2.60s & - & TO \\ 
            Vector & 3.80M & 3 & 1 & 2.58s & 1 & 7.59s \\ 
            LinkedHashMap & 4.20M & 801 & 2 & 2.42s & - & TO \\ 
            TreeMap & 9.03M & 801 & 2 & 2.44s & 2 & 1564s \\ \hline \hline
            Mean & 2.53M & ~ & ~ & 1.88s & ~ & >230.84s \\ \hline
        \end{tabular}
}
\vspace{-0.2in}
\end{table}
%!TEX root = main.tex
\vspace{-0.1in}
\section{Related Work and Conclusions}

Our work is inspired from runtime predictive analysis for testing concurrent programs,
where the task is to enhance the coverage of naive dynamic analysis techniques to a larger space of correct reorderings~\cite{MaximalCausalModel2013,Said11,RVPredict2014}.
A key focus here is to improve the scalability of prediction techniques for concurrency bugs such as data races~\cite{cp2012,wcp2017,SHB2018,Pavlogiannis2019,OSR2024,SyncP2021}, deadlocks~\cite{Kalhauge2018,tuncc2023sound}, atomicity violations~\cite{mathur2020atomicity,biswas2014doublechecker,farzan2008monitoring} and more general properties~\cite{huang2015gpredict6,ang2023predictive} for an otherwise intractable problem~\cite{Mathur2020}.
The theme of our work is to develop efficient algorithms 
for predictive concurrency bug detection.
We start with the setting of trace theory~\cite{Mazurkiewicz1987}, where
questions such as checking whether two events can be flipped,
which are intractable in general~\cite{grain2024,Mathur2020,kulkarni2021dynamic}, can be answered 
in constant space.
% More generally, the class of star-connected languages~\cite{Ochmanski85} precisely characterizes those languages that can be monitored efficiently under trace equivalence. 
The problem of relaxing Mazurkiewicz traces has been studied~\cite{Diekert94}.
Recent work~\cite{grain2024} has focused on reasoning about commutativity of \emph{grains} of events. 

In contrast, our work takes an orthogonal angle and proposes that, 
for co-safety properties, one might be able to 
perform relaxations by 
exploiting semantic properties of programming constructs in multithreaded shared-memory programs. 
To this end, strong trace prefixes and its extension on a concrete alphabet $\alphabet_\rwl$, strong reads-from prefixes,
where the commutativity between events can be stratified into strong and weak dependencies.
This simple relaxation allows us to capture a larger class of concurrency bugs, 
while still retaining the algorithmic simplicity that event based commutativity offers.
% Predictive monitoring under strong prefixes provides enhanced predictive power,
% by relaxing conservative dependence in Mazurkiewicz equivalence,
% while consumes similar algorithmic complexity compared with the problem under Mazurkiewicz equivalence,
% due to its simple framework which can be systematically monitored.
We also show connections between prior algorithms for (sync-preserving) 
data race and deadlock prediction and
our formalism, and arrive at asymptotically faster algorithms for them.
% We identify that our strong prefixes is an important notion,
% which balances between predictive power and algorithmic tractability.
We envision that combining commutativity based on
groups of events~\cite{grain2024} and prefix based prediction
may be an interesting avenue for future research.

\newpage

\bibliographystyle{splncs04}
\bibliography{refs}

\begin{thebibliography}{10}
\providecommand{\url}[1]{\texttt{#1}}
\providecommand{\urlprefix}{URL }
\providecommand{\doi}[1]{https://doi.org/#1}

\bibitem{ang2023predictive}
Ang, Z., Mathur, U.: Predictive monitoring against pattern regular languages.
  Proc. ACM Program. Lang.  \textbf{8}(POPL) (jan 2024). \doi{10.1145/3632915},
  \url{https://doi.org/10.1145/3632915}

\bibitem{Bertoni1989}
Bertoni, A., Mauri, G., Sabadini, N.: Membership problems for regular and
  context-free trace languages. Information and Computation  \textbf{82}(2),
  135--150 (1989). \doi{https://doi.org/10.1016/0890-5401(89)90051-5},
  \url{https://www.sciencedirect.com/science/article/pii/0890540189900515}

\bibitem{biswas2014doublechecker}
Biswas, S., Huang, J., Sengupta, A., Bond, M.D.: Doublechecker: Efficient sound
  and precise atomicity checking. ACM SIGPLAN Notices  \textbf{49}(6),  28--39
  (2014)

\bibitem{Blackburn06}
Blackburn, S.M., Garner, R., Hoffmann, C., Khang, A.M., McKinley, K.S.,
  Bentzur, R., Diwan, A., Feinberg, D., Frampton, D., Guyer, S.Z., Hirzel, M.,
  Hosking, A., Jump, M., Lee, H., Moss, J.E.B., Phansalkar, A., Stefanovi\'{c},
  D., VanDrunen, T., von Dincklage, D., Wiedermann, B.: The dacapo benchmarks:
  Java benchmarking development and analysis. In: OOPSLA (2006)

\bibitem{cai2021sound}
Cai, Y., Yun, H., Wang, J., Qiao, L., Palsberg, J.: Sound and efficient
  concurrency bug prediction. In: Proceedings of the 29th ACM Joint Meeting on
  European Software Engineering Conference and Symposium on the Foundations of
  Software Engineering. pp. 255--267 (2021)

\bibitem{Diekert94}
Diekert, V.: A partial trace semantics for petri nets. Theor. Comput. Sci.
  \textbf{134}(1),  87--105 (1994). \doi{10.1016/0304-3975(94)90280-1},
  \url{https://doi.org/10.1016/0304-3975(94)90280-1}

\bibitem{DoER05}
Do, H., Elbaum, S.G., Rothermel, G.: Supporting controlled experimentation with
  testing techniques: An infrastructure and its potential impact. Empir. Softw.
  Eng.  \textbf{10}(4),  405--435 (2005). \doi{10.1007/S10664-005-3861-2},
  \url{https://doi.org/10.1007/s10664-005-3861-2}

\bibitem{goldilocks2007}
Elmas, T., Qadeer, S., Tasiran, S.: Goldilocks: A race and transaction-aware
  java runtime. In: Proceedings of the 28th ACM SIGPLAN Conference on
  Programming Language Design and Implementation. p. 245–255. PLDI '07,
  Association for Computing Machinery, New York, NY, USA (2007).
  \doi{10.1145/1250734.1250762}, \url{https://doi.org/10.1145/1250734.1250762}

\bibitem{FarchiNU03}
Farchi, E., Nir, Y., Ur, S.: Concurrent bug patterns and how to test them. In:
  17th International Parallel and Distributed Processing Symposium {(IPDPS}
  2003), 22-26 April 2003, Nice, France, CD-ROM/Abstracts Proceedings. p.~286.
  {IEEE} Computer Society (2003). \doi{10.1109/IPDPS.2003.1213511},
  \url{https://doi.org/10.1109/IPDPS.2003.1213511}

\bibitem{farzan2008monitoring}
Farzan, A., Madhusudan, P.: Monitoring atomicity in concurrent programs. In:
  Computer Aided Verification: 20th International Conference, CAV 2008
  Princeton, NJ, USA, July 7-14, 2008 Proceedings 20. pp. 52--65. Springer
  (2008)

\bibitem{grain2023techreport}
Farzan, A., Mathur, U.: Coarser equivalences for causal concurrency (2023)

\bibitem{grain2024}
Farzan, A., Mathur, U.: Coarser equivalences for causal concurrency. Proc. ACM
  Program. Lang.  \textbf{8}(POPL) (jan 2024). \doi{0.1145/3632873},
  \url{https://doi.org/10.1145/3632873}

\bibitem{fasttrack2009}
Flanagan, C., Freund, S.N.: Fasttrack: Efficient and precise dynamic race
  detection. In: Proceedings of the 30th ACM SIGPLAN Conference on Programming
  Language Design and Implementation. p. 121–133. PLDI '09, Association for
  Computing Machinery, New York, NY, USA (2009). \doi{10.1145/1542476.1542490},
  \url{https://doi.org/10.1145/1542476.1542490}

\bibitem{huang2015gpredict6}
Huang, J., Luo, Q., Rosu, G.: Gpredict: Generic predictive concurrency
  analysis. In: Proceedings of the 37th International Conference on Software
  Engineering - Volume 1. p. 847–857. ICSE '15, IEEE Press (2015)

\bibitem{RVPredict2014}
Huang, J., Meredith, P.O., Rosu, G.: Maximal sound predictive race detection
  with control flow abstraction. In: Proceedings of the 35th ACM SIGPLAN
  Conference on Programming Language Design and Implementation. p. 337–348.
  PLDI '14, Association for Computing Machinery, New York, NY, USA (2014).
  \doi{10.1145/2594291.2594315}, \url{https://doi.org/10.1145/2594291.2594315}

\bibitem{Djit1999}
Itzkovitz, A., Schuster, A., Zeev-Ben-Mordehai, O.: Toward integration of data
  race detection in dsm systems. Journal of Parallel and Distributed Computing
  \textbf{59}(2),  180--203 (1999).
  \doi{https://doi.org/10.1006/jpdc.1999.1574},
  \url{https://www.sciencedirect.com/science/article/pii/S0743731599915745}

\bibitem{jannesari2009helgrind+}
Jannesari, A., Bao, K., Pankratius, V., Tichy, W.F.: Helgrind+: An efficient
  dynamic race detector. In: 2009 IEEE International Symposium on Parallel \&
  Distributed Processing. pp. 1--13. IEEE (2009)

\bibitem{joshi2009randomized}
Joshi, P., Park, C.S., Sen, K., Naik, M.: A randomized dynamic program analysis
  technique for detecting real deadlocks. ACM Sigplan Notices  \textbf{44}(6),
  110--120 (2009)

\bibitem{JulaTZC08}
Jula, H., Tralamazza, D.M., Zamfir, C., Candea, G.: Deadlock immunity: Enabling
  systems to defend against deadlocks. In: Draves, R., van Renesse, R. (eds.)
  8th {USENIX} Symposium on Operating Systems Design and Implementation, {OSDI}
  2008, December 8-10, 2008, San Diego, California, USA, Proceedings. pp.
  295--308. {USENIX} Association (2008),
  \url{http://www.usenix.org/events/osdi08/tech/full\_papers/jula/jula.pdf}

\bibitem{Kalhauge2018}
Kalhauge, C.G., Palsberg, J.: Sound deadlock prediction. Proc. ACM Program.
  Lang.  \textbf{2}(OOPSLA) (Oct 2018). \doi{10.1145/3276516},
  \url{https://doi.org/10.1145/3276516}

\bibitem{wcp2017}
Kini, D., Mathur, U., Viswanathan, M.: Dynamic race prediction in linear-time.
  In: Proceedings of the 38th ACM SIGPLAN Conference on Programming Language
  Design and Implementation. p. 157–170. PLDI 2017, Association for Computing
  Machinery, New York, NY, USA (2017). \doi{10.1145/3062341.3062374},
  \url{https://doi.org/10.1145/3062341.3062374}

\bibitem{kulkarni2021dynamic}
Kulkarni, R., Mathur, U., Pavlogiannis, A.: Dynamic data-race detection through
  the fine-grained lens. arXiv preprint arXiv:2107.03569  (2021)

\bibitem{Legunsen2016}
Legunsen, O., Hassan, W.U., Xu, X., Roşu, G., Marinov, D.: How good are the
  specs? a study of the bug-finding effectiveness of existing java api
  specifications. In: 2016 31st IEEE/ACM International Conference on Automated
  Software Engineering (ASE). pp. 602--613 (2016)

\bibitem{SHB2018}
Mathur, U., Kini, D., Viswanathan, M.: What happens-after the first race?
  enhancing the predictive power of happens-before based dynamic race
  detection. Proc. ACM Program. Lang.  \textbf{2}(OOPSLA) (oct 2018).
  \doi{10.1145/3276515}, \url{https://doi.org/10.1145/3276515}

\bibitem{Mathur2020}
Mathur, U., Pavlogiannis, A., Viswanathan, M.: The complexity of dynamic data
  race prediction. In: Proceedings of the 35th Annual ACM/IEEE Symposium on
  Logic in Computer Science. p. 713–727. LICS '20, Association for Computing
  Machinery, New York, NY, USA (2020). \doi{10.1145/3373718.3394783},
  \url{https://doi.org/10.1145/3373718.3394783}

\bibitem{SyncP2021}
Mathur, U., Pavlogiannis, A., Viswanathan, M.: Optimal prediction of
  synchronization-preserving races. Proc. ACM Program. Lang.  \textbf{5}(POPL)
  (jan 2021). \doi{10.1145/3434317}, \url{https://doi.org/10.1145/3434317}

\bibitem{mathur2020atomicity}
Mathur, U., Viswanathan, M.: Atomicity checking in linear time using vector
  clocks. In: Proceedings of the Twenty-Fifth International Conference on
  Architectural Support for Programming Languages and Operating Systems. pp.
  183--199 (2020)

\bibitem{Mazurkiewicz1987}
Mazurkiewicz, A.: Trace theory. In: Brauer, W., Reisig, W., Rozenberg, G.
  (eds.) Petri Nets: Applications and Relationships to Other Models of
  Concurrency. pp. 278--324. Springer Berlin Heidelberg, Berlin, Heidelberg
  (1987)

\bibitem{muehlenfeld2007fault}
M{\"u}ehlenfeld, A., Wotawa, F.: Fault detection in multi-threaded c++ server
  applications. In: Proceedings of the 12th ACM SIGPLAN symposium on Principles
  and practice of parallel programming. pp. 142--143 (2007)

\bibitem{Ochmanski85}
Ochma{\'n}ski, E.: Regular behaviour of concurrent systems. Bull. {EATCS}
  \textbf{27},  56--67 (1985)

\bibitem{Pavlogiannis2019}
Pavlogiannis, A.: Fast, sound, and effectively complete dynamic race
  prediction. Proc. ACM Program. Lang.  \textbf{4}(POPL) (dec 2019).
  \doi{10.1145/3371085}, \url{https://doi.org/10.1145/3371085}

\bibitem{RV2003}
Ro{\c{s}}u, G., Viswanathan, M.: Testing extended regular language membership
  incrementally by rewriting. In: Nieuwenhuis, R. (ed.) Rewriting Techniques
  and Applications. pp. 499--514. Springer Berlin Heidelberg, Berlin,
  Heidelberg (2003)

\bibitem{Said11}
Said, M., Wang, C., Yang, Z., Sakallah, K.: Generating data race witnesses by
  an smt-based analysis. In: Proceedings of the Third International Conference
  on NASA Formal Methods. pp. 313--327. NFM'11, Springer-Verlag, Berlin,
  Heidelberg (2011), \url{http://dl.acm.org/citation.cfm?id=1986308.1986334}

\bibitem{MaximalCausalModel2013}
{\c{S}}erb{\u{a}}nu{\c{t}}{\u{a}}, T.F., Chen, F., Ro{\c{s}}u, G.: Maximal
  causal models for sequentially consistent systems. In: Qadeer, S., Tasiran,
  S. (eds.) Runtime Verification. pp. 136--150. Springer Berlin Heidelberg,
  Berlin, Heidelberg (2013)

\bibitem{threadsanitizer}
Serebryany, K., Iskhodzhanov, T.: Threadsanitizer: Data race detection in
  practice. WBIA '09 (2009)

\bibitem{OSR2024}
Shi, Z., Mathur, U., Pavlogiannis, A.: Optimistic prediction of
  synchronization-reversal data races. CoRR  \textbf{abs/2401.05642} (2024).
  \doi{10.1145/3597503.3639099},
  \url{https://doi.org/10.48550/arXiv.2401.05642}

\bibitem{cp2012}
Smaragdakis, Y., Evans, J., Sadowski, C., Yi, J., Flanagan, C.: Sound
  predictive race detection in polynomial time. In: Proceedings of the 39th
  Annual ACM SIGPLAN-SIGACT Symposium on Principles of Programming Languages.
  p. 387–400. POPL '12, Association for Computing Machinery, New York, NY,
  USA (2012). \doi{10.1145/2103656.2103702},
  \url{https://doi.org/10.1145/2103656.2103702}

\bibitem{Smith01}
Smith, L.A., Bull, J.M., Obdrz\'{a}lek, J.: A parallel java grande benchmark
  suite. In: Proceedings of the 2001 ACM/IEEE Conference on Supercomputing.
  pp.~8--8. SC '01, ACM, New York, NY, USA (2001). \doi{10.1145/582034.582042},
  \url{http://doi.acm.org/10.1145/582034.582042}

\bibitem{tuncc2023sound}
Tun{\c{c}}, H.C., Mathur, U., Pavlogiannis, A., Viswanathan, M.: Sound dynamic
  deadlock prediction in linear time. Proceedings of the ACM on Programming
  Languages  \textbf{7}(PLDI),  1733--1758 (2023)

\end{thebibliography}

\newpage
\appendix
%!TEX root=./main.tex

\section{Proofs from \secref{def-prefix}}

\soundnessStrongPrefix*

\begin{proof}[Sketch]
Our proofs rely on the following equivalent, but inductive definitions of the corresponding closures: 

\myparagraph{Trace Equivalence}{
$\eqcl{\tr}{\dstrong_\rwl\cup\dweak_\rwl}$ is the fixpoint of the following sequence of relations:
\begin{align*}
\begin{array}{l}
T_0 = \set{\tr} \\
T_{i+1} = T_i \cup \setpred{w_1\cdot b \cdot a \cdot w_2}{w_1\cdot a\cdot b \cdot w2 \in T_i, (a, b) \not\in \dstrong_\rwl\cup\dweak_\rwl}
\end{array}
\end{align*}
}

\myparagraph{Trace Ideal}{
$\idealdcl{\tr}{\dstrong_\rwl\cup\dweak_\rwl}$ is the fixpoint of the following sequence of relations:
\begin{align*}
\begin{array}{l}
I_0 = \set{\tr} \\
I_{2i+1} = I_{2i} \cup \setpred{w_1\cdot b \cdot a \cdot w_2}{w_1\cdot a\cdot b \cdot w2 \in I_i, (a, b) \not\in \dstrong_\rwl\cup\dweak_\rwl} \\
I_{2i+2} = I_{2i+1} \cup \setpred{w_1}{\exists a \in \alphabet_\rwl, w_1 \cdot a \in I_{2i+1}}
\end{array}
\end{align*}
}

\myparagraph{Strong Prefix Ideal}{
$\strongdcl{\tr}{\dstrong_\rwl}{\dweak_\rwl}$ is the fixpoint of the following sequence of relations:
\begin{align*}
\begin{array}{l}
S_0 = \set{\tr} \\
S_{2i+1} = S_{2i} \cup \setpred{w_1\cdot b \cdot a \cdot w_2}{w_1\cdot a\cdot b \cdot w2 \in S_{2i}, (a, b) \not\in \dstrong_\rwl\cup\dweak_\rwl} \\
S_{2i+2} = S_{2i+1} \cup \setpred{w_1\cdot w_2 \in L_\wf}{\exists a \in \alphabet_\rwl, w_1 \cdot a \cdot w_2 \in S_{2i+1}, a \text{ is not dependent with any event in } w_2}
\end{array}
\end{align*}
}

Now, soundness of the above three closures is straightforward, since each
step in the fixpoint preserves the thread-order, reads-from and well-formedness.
Also, the subset relations are also obvious since trace equivalence one has the least rules,
followed by trace ideal and followed by strong prefix ideal.
\end{proof}

\maximality*

\begin{proof}[Sketch]
Follows straightforwardly from \defref{strong-trace-prefix} and \equref{rwl}
\end{proof}
%!TEX root=./main.tex

\section{Proofs from \secref{complexity}}

\nondeterminism*

\begin{proof}[Sketch]
    We propose the construction of $M'$ here.
    In general, $M'$ first perform a nondeterministic streaming process over input $\tr=a_1\dots a_n$ to generate $\tr'$ such that $\tr'\strongqo{\dstrong}{\dweak} \tr$,
    then simulate $M$ on $\tr'$.

    The strong trace prefix generation algorithm will read the input tape in one pass and write down $\tr'$ on a scratch tape.
    
    \begin{quote}
        Strong trace prefix generation $G(\tr)$:
        \begin{enumerate}
            \item $\tr' = \epsilon$
            \item $\texttt{Drop} = \emptyset$,
            \item For each $a_i\in \tr$:
            \item \quad If $a_i \in \texttt{Drop}$:
            \item \quad \quad $\texttt{Drop} = \texttt{Drop}\cup a_i$
            \item \quad else nondetermistically execute either of the following two:
            \begin{enumerate}[label=(\alph*)]
                \item $\tr' = \tr'\cdot a_i$
                \item $\texttt{Drop} = \texttt{Drop}\cup a_i$
            \end{enumerate}
            \item return $\tr'$.
        \end{enumerate}
    \end{quote}

    Now we give the construction of $M'$, which reads $\tr'$ generated by $G(\tr)$ and simulates $M$ on $\tr'$.

    \begin{quote}
        $M'(\tr)$:
        \begin{enumerate}
            \item $\tr' = G(\tr)$
            \item return $M(\tr')$.
        \end{enumerate}
    \end{quote}

    \myparagraph{Complexity}
    Overhead of time and space complexity comes from prefix generation procedure $G(\tr)$, which takes linear time and linear space.
    Observe that the size of $\texttt{Drop}$ is bounded by the size of $\alphabet$.
    Moreover, if $M$ works in a streaming way, then $G(\tr)$ does not need to write down $\tr'$, but simply simulate $M$ on the selected $\tr'$.

    \myparagraph{Correctness}
    Let us denote all possible strings generated by $G(\tr)$ as $[G(\tr)]$.
    The correctness relies on the observation that $\eqcl{\dstrong\cup \dweak}{[G(\tr)]} = \strongdcl{\tr}{\dstrong}{\dweak}$.
    
    \noindent($\eqcl{\dstrong\cup \dweak}{[G(\tr)]} \subseteq \strongdcl{\tr}{\dstrong}{\dweak}$)
    Notice that $[G(\tr)]$ is a set of subsequence of $\tr$.
    Take any $\tr'\in[G(\tr)]$.
    For every dropped element $a_i$ from $\tr$, the latter elements which are included in $\tr'$ cannot be strong dependent with $a_i$, due to $\texttt{if}$ checking.

    \noindent($\strongdcl{\tr}{\dstrong}{\dweak} \subseteq \eqcl{\dstrong\cup \dweak}{[G(\tr)]}$) 
    Observe that for any $\tr'' \in \strongdcl{\tr}{\dstrong}{\dweak}$, it can be reordered under Mazurkiewicz equivalence as $\tr'$ such that $\tr'$ is a subsequence of $\tr$.
    Let us prove that $\tr' \in [G(\tr)]$.
    For any element $a_i$ in $\tr'$ there cannot be any earlier dropped element $a_j$ in $\tr$ such that $(a_i, a_j) \in\dstrong$, otherwise dropping $a_j$ violates the definition of strong trace prefixes.
    Thus, there is a nondeterministic option to add $a_i$ into $\tr'$.
    For any dropped element $a'_i$, $G(\tr)$ can always drop it.
\end{proof}

\detalpha*

\begin{proof}[Sketch]
    We give the construction of $M'$ here:
    \begin{quote}
        $M'(\tr)$:
        \begin{enumerate}
            \item Let $\tr = e_1e_2\dots e_n$
            \item For each tuple $\tau = (e_1, e_2, \dots, e_{\alpha_\dstrong})$:
            \item \quad Let $X$ be the $\dstrong$-downward closure of $\tau$
            \item \quad $\tr' = X$ projected to $\tr$
            \item \quad If $\tr'$ is a strong trace prefix of $\tr$:
            \item \quad \quad If $M(\tr')$ is true, return 1
            \item return 0
        \end{enumerate}
    \end{quote}

    The process of checking validity of strong trace prefix is similar to generation. 
    It ensures that the generation procedure can reach line $6(a)$ when an element is presented in $\tr'$.
    
    \myparagraph{Complexity}
    The time complexity overhead comes from enumerate all $\alpha_\dstrong$ tuples.
    We require $\alpha_\dstrong$ pointer to incrementally enumerate all tuples.
    Furthermore, we need linear space to construct $X$.
    The total space overhead is $O(\alpha_\dstrong\log n) + O(n) = O(n)$.

    \myparagraph{Correctness}
    Observe that for any set of events from $\tr$, at most $\alpha_\dstrong$ events are $\dstrong$ independent with each other.
    Thus, $M'$ enumerates all possible `linearizations' based on $\dstrong$.
    The remaining correctness follows from the previous proof.
\end{proof}

\wellformedness*

\begin{proof}[Sketch]
    The construction of $M'$ and $M''$ is similar to the proof of \thmref{nondet} and \thmref{detalpha}.
    The only modification is to add additional well-formedness checking a strong prefix $\tr'$, which is a regular property and can be checked by a DFA.
\end{proof}
%!TEX root=./main.tex

\section{Proofs from \secref{rf}}

\soundnessRF*

\begin{proof}[Sketch]
Proof follows from \thmref{soundness} and \defref{strong-reads-from-prefix}.
\end{proof}

\readsfrom*

\begin{proof}[Sketch]
    The construction of $M'$ and $M''$ is similar to the proof of \thmref{nondet} and \thmref{detalpha}.
    The only modification is on checking the validity of strong reads-from prefixes, which can be checked in a streaming fashion by maintaining, for every memory location $x$, whether the last write on $x$ has been dropped; if so, all the subsequent reads on $x$ until the next write on $x$ must also be dropped.
\end{proof}
%!TEX root=./main.tex

\section{Proofs from \secref{syncp}}

\confpSyncpProp*

\begin{proof}[Sketch]
Follows straightforwardly from the definitions of sync-preserving reorderings and conflict-preserving reorderings (\defref{confp-reordering}) since the former also enforce that the order on conflicting events be preserved.
\end{proof}

\syncpConfpEnabledEvents*

\newcommand{\spideal}[2]{\mathsf{SyncPIdeal}_{#1}(#2)}

\begin{proof}[Sketch] 
The proof relies on the observation that the events of any sync-preserving reordering
can be arranged in the order of the original execution $\tr$, and the resulting
execution will still be a sync-preserving correct reordering.
\begin{claim}
Let $\pi$ be a sync-preserving correct reordering of $\tr$.
Then, $\rho = \proj{\tr}{\events{\pi}}$ a conflict-preserving correct reordering of $\tr$.
\end{claim}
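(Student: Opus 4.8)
The plan is to check directly that $\rho := \proj{\tr}{\events{\pi}}$ satisfies the three clauses of \defref{confp-reordering}, recalling that $\proj{\tr}{S}$ denotes the subsequence of $\tr$ on the event set $S$. The single fact I will lean on throughout is that projection preserves order: $\events{\rho} = \events{\pi}$, and for all $e, e' \in \events{\rho}$ we have $e \trord{\rho} e' \iff e \trord{\tr} e'$. Given this, clauses (b) and (c) of \defref{confp-reordering} --- that the order of same-lock acquires and of conflicting events in $\rho$ agrees with their order in $\tr$ --- are immediate, since ``$\iff e \trord{\tr} e'$'' is exactly the defining property of the projection.

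The substance is clause (a): that $\rho$ is a correct reordering of $\tr$, i.e. that it meets the four conditions defining $\creorder{\tr}$. Two of them are inherited verbatim from $\pi$, since they depend only on the event set $\events{\rho} = \events{\pi}$: namely $\events{\rho} \subseteq \events{\tr}$, and downward-closure with respect to program order. For the reads-from condition, if $e_\rd \in \events{\rho}$ reads from $e_\wt$ in $\tr$, then $e_\wt \in \events{\pi} = \events{\rho}$ because $\pi$ is a correct reordering; and since $\rho$ carries $\tr$'s order on $\events{\rho}$, passing to the subset $\events{\rho}$ only deletes writes and never inserts one between $e_\wt$ and $e_\rd$, so $e_\wt$ remains the last write to its location before $e_\rd$ in $\rho$ --- the reads-from relation is preserved.

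I expect the one genuinely non-routine step to be well-formedness, $\rho \in L_\wf$, and this is exactly where the sync-preservation hypothesis on $\pi$ is consumed. ``Every release has a matching acquire'' is short: the matching acquire of a release $r \in \rho$ is program-earlier on the same thread, hence in $\events{\rho}$ by downward-closure, and since $\tr$ (being well-formed) has no $\acq(\lk)$ or $\rel(\lk)$ event strictly between a matching pair, and program order within a thread is preserved by projection, the pairing carries over to $\rho$. For ``no two critical sections on the same lock overlap'', take two such critical sections in $\rho$ on lock $\lk$ with acquires $a_1 \trord{\rho} a_2$; projection gives $a_1 \trord{\tr} a_2$, and sync-preservation of $\pi$ (with $a_1, a_2 \in \events{\pi}$) gives $a_1 \trord{\pi} a_2$. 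Well-formedness of $\pi$ then forces $a_1$'s matching release $r_1$ to exist in $\pi$ with $r_1 \trord{\pi} a_2$ --- an open critical section of $a_1$, or one closing after $a_2$, would overlap that of $a_2$ --- so $r_1 \in \events{\pi} = \events{\rho}$; and well-formedness of $\tr$ forces $r_1 \trord{\tr} a_2$, since otherwise $a_2$ would sit inside $a_1$'s critical section in $\tr$. Hence $a_1 \trord{\rho} r_1 \trord{\rho} a_2$, so the two critical sections are disjoint in $\rho$. Collecting everything, $\rho$ is a correct reordering satisfying (b) and (c), so $\rho \in \confp{\tr}$, which is the claim; and $\lemref{syncp-confp}$ then follows because $\events{\rho} = \events{\pi}$, so exactly the same events are $\tr$-enabled in $\rho$ as in $\pi$. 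The only step that uses the hypothesis (rather than being projection bookkeeping) is the no-overlap argument, so that is where I expect the real work --- and the potential for edge cases, e.g. reentrancy or critical sections left open at the end of $\rho$ --- to lie.
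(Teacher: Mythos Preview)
Your proposal is correct and follows the same overall approach as the paper's proof: verify directly that $\rho=\proj{\tr}{\events{\pi}}$ satisfies the clauses of \defref{confp-reordering}, with the only non-trivial step being well-formedness of $\rho$. Your argument for the no-overlap condition (using sync-preservation to get $a_1\trord{\pi}a_2$, then well-formedness of $\pi$ to force the matching release $r_1$ into $\events{\pi}$, and finally well-formedness of $\tr$ to place $r_1$ before $a_2$ in $\rho$) is exactly the right mechanism and is in fact more precise than the paper's sketch: the paper asserts that ``if an acquire on $\lk$ is in $\pi$, then all previous acquires must also be in $\pi$,'' which is not literally true (take $\tr=\acq(\lk)^{t_1}\rel(\lk)^{t_1}\acq(\lk)^{t_2}\rel(\lk)^{t_2}$ and $\pi=\acq(\lk)^{t_2}\rel(\lk)^{t_2}$), whereas your direct no-overlap argument sidesteps this and is sound.
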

First, observe that $\pi$ is a correct reordering
and thus $\events{\pi}$ are downward closed with respect to thread-order and reads-from.
Further, for any lock $\lk$, if an acquire on $\lk$ is in $\pi$,
then all previous acquires must also be in $\pi$.
Finally, since the order of events is the same between $\rho$ and $\tr$,
it follows that $\rho$ is well-formed, and observesthe same thread-order and reads-from as $\tr$.
This finishes the proof since preserving $\tr$ also preserves the order of conflicting events.
\end{proof}

\confPRF*

\begin{proof}[Sketch] 
Follows straightforwardly from \defref{confp-reordering} and \defref{strong-reads-from-prefix}.
\end{proof}

\syncPRaceDeadlockConstSpace*

\begin{proof}[Sketch] 
Follows \lemref{confPRF}, \thmref{reads-from-nondet} and the fact that the property `check if $e_1, \ldots, e_k$ are enabled in a prefix' can be monitored in constant space.
\end{proof}

\SyncPLinearSpace*

\begin{proof}[Sketch]
Our construction is based on the construction for the linear space hardness result of~\cite[Theorem~3.1]{grain2024}
from the parametrixed language $L_n = \setpred{\vec{a}\#\vec{b}}{\vec{a} =  \vec{b} \in \set{0,1}^n}$,
for which any machine that recognizes it must use $\Omega(n)$ space.
Instead of repeating the construction here, we refer the readers to the technical report~\cite{grain2023techreport} corresponding to~\cite{grain2024} (see pages 31 and 32).

Now, our result follows and because the construction of~\cite{grain2024} does not use any locks, 
and because in absence of locks, all sync-preserving reorderings are simply correct reorderings!
Of course, care must be taken since sync-preserving reorderings may have fewer events
than executions that are reads-from equivalent a la~\cite{grain2024} which require all events to be present.
This can be remedies by asking the ordering questions between the two read events
$\rd(u)$ at the end of each thread $t_1$ and $t_2$, for which the answer is equivalent
to that between the $\rd(u)$ of $t_1$ and $\wt(u)$ of $t_2$.
Asking the question on the last events of both threads ensures that the sync-preserving reorderings we 
are interested in also include all events of the original execution.
\end{proof}
\section{Detailed Experimental Results}
\applabel{exp}

%!TEX root=../main.tex

\begin{table}[t]
    \caption{
    \tablabel{pattern-full}
    Predictive monitoring against pattern languages of pattern length 3.
    $\threads$ denote the number of threads in the executions.
    Column 4 (Column 7) reports the total time taken under
    trace equivalence (string reads-from prefixes).
    Column 3 (Column 5) reports the number of successful matches
    under trace equivalence (strong reads-from prefixes).
    Column 6 reports the number of times prediction based on strong reads-from prefixes
    reports an earlier match.}
    \centering
    \setlength{\tabcolsep}{0.1em}
    \renewcommand{\arraystretch}{1.0}
    \footnotesize
    \scalebox{0.9}{
    \begin{tabular}{|c|c||c|c|c|c|c|}
        \hline
        1 & 2 & 3 & 4 & 5 & 6 & 7 \\ \hline
        & & \multicolumn{5}{c|}{Pattern of Len. 3}\\\hline
        Benchmark & $\threads$ & \#Match in & Time  & \#Match in & \#Earlier Match in & Time \\
        & & Maz. & Maz. & Strong RF. & Strong RF. & Strong RF. \\\hline
        sor & 4 & 10 & 49m 52s & 10 & 0 & 49m 31s \\ 
        avrora & 8 & 10 & 1h 20m & 10 & 0 & 1h 26m \\ 
        metafacture & 3 & 3 & 8s & 3 & 0 & 53s \\ 
        lusearch & 4 & 9 & 1h 51m & 9 & 0 & 1h 25m \\ 
        zookeeper-1 & 10 & 8 & 4s & 9 & 1 & 25m 50s \\ 
        zookeeper-2 & 18 & 7 & 19s & 10 & 0 & 12m 30s \\
        raytracer & 4 & 10 & 1h 21m & 10 & 0 & 40m 27s \\ 
        crawler-1 & 2 & 2 & 1s & 5 & 0 & 17s \\ 
        crawler-2 & 2 & 2 & 1s & 3 & 0 & 12s \\ 
        logstash-1 & 3 & 10 & 7s & 10 & 1 & 10s \\ 
        logstash-2 & 11 & 8 & 5s & 10 & 0 & 2m 23s \\ 
        exp4j-1 & 11 & 10 & 16s & 10 & 0 & 14s \\ 
        exp4j-2 & 2 & 5 & 0s & 5 & 0 & 0s \\ 
        batik & 4 & 3 & 12m 21s & 6 & 0 & 12m 28s \\ 
        moldyn & 4 & 10 & 26m 23s & 10 & 0 & 27m 15s \\ 
        xalan & 3 & 7 & 50m 32s & 7 & 0 & 1h 23m \\ 
        series & 4 & 10 & 54m 12s & 10 & 0 & 49m 9s \\ 
        montecarlo & 4 & 9 & 58m 17s & 10 & 0 & 1h 6m \\ 
        junit4-1 & 101 & 9 & 0s & 10 & 2 & 34s \\ 
        junit4-2 & 101 & 9 & 0s & 10 & 2 & 13s \\ 
        junit4-3 & 501 & 10 & 2s & 10 & 0 & 58s \\ 
        junit4-4 & 3 & 9 & 1s & 10 & 1 & 14s \\ 
        tomcat & 11 & 5 & 23s & 5 & 0 & 33s \\ 
        cassandra-1 & 3 & 8 & 11s & 9 & 0 & 2m 36s \\ 
        cassandra-2 & 3 & 8 & 13s & 9 & 0 & 1m 12s \\ 
        cassandra-3 & 3 & 9 & 6s & 9 & 0 & 27s \\ 
        cassandra-4 & 4 & 9 & 10s & 10 & 0 & 27s \\ 
        cassandra-5 & 4 & 10 & 11s & 10 & 0 & 1m 33s \\ \hline
        ~ & ~ & 230 & 10h 23m & 249 & 7 & 10h 55m \\ \hline
    \end{tabular}
}
\vspace{-0.1in}
\end{table}

%!TEX root=../main.tex

\begin{table}[t]
    \caption{
    \tablabel{pattern-full}
    Predictive monitoring against pattern languages of pattern length 5.
    $\threads$ denote the number of threads in the executions.
    Column 4 (Column 7) reports the total time taken under
    trace equivalence (string reads-from prefixes).
    Column 3 (Column 5) reports the number of successful matches
    under trace equivalence (strong reads-from prefixes).
    Column 6 reports the number of times prediction based on strong reads-from prefixes
    reports an earlier match.}
    \centering
    \setlength{\tabcolsep}{0.1em}
    \renewcommand{\arraystretch}{1.0}
    \footnotesize
    \scalebox{0.9}{
    \begin{tabular}{|c|c||c|c|c|c|c|}
        \hline
        1 & 2 & 3 & 4 & 5 & 6 & 7 \\ \hline
        & & \multicolumn{5}{c|}{Pattern of Len. 5}\\\hline
        Benchmark & $\threads$ & \#Match in & Time  & \#Match in & \#Earlier Match in & Time \\
        & & Maz. & Maz. & Strong RF. & Strong RF. & Strong RF. \\\hline
        sor & 4 & 9 & 32m 36s & 9 & 0 & 32m 16s \\ 
        avrora & 8 & 10 & 1h 39m & 8 & 2 & 1h 41m \\ 
        metafacture & 3 & 0 & 8s & 1 & 0 & 28s \\ 
        lusearch & 4 & 8 & 1h 11m & 8 & 0 & 1h 59m \\ 
        zookeeper-1 & 10 & 5 & 4s & 5 & 1 & 16m 11s \\ 
        zookeeper-2 & 18 & 8 & 14s & 10 & 3 & 3m 44s \\ =
        raytracer & 4 & 10 & 55m 54s & 10 & 0 & 47m 23s \\ 
        crawler-1 & 2 & 0 & 1s & 1 & 0 & 16s \\ 
        crawler-2 & 2 & 0 & 1s & 1 & 0 & 12s \\ 
        logstash-1 & 3 & 10 & 8s & 10 & 2 & 9s \\ 
        logstash-2 & 11 & 2 & 8s & 3 & 0 & 13s \\ 
        exp4j-1 & 11 & 10 & 13s & 10 & 0 & 11s \\ 
        exp4j-2 & 2 & 5 & 0s & 5 & 0 & 0s \\ 
        batik & 4 & 0 & 13m 28s & 0 & 0 & 13m 3s \\ 
        moldyn & 4 & 10 & 24m 52s & 10 & 0 & 25m 2s \\ 
        xalan & 3 & 5 & 1h 7m & 5 & 0 & 1h 11m \\ 
        series & 4 & 9 & 1h 3m & 9 & 0 & 58m 17s \\ 
        montecarlo & 4 & 9 & 36m 51s & 9 & 0 & 42m 17s \\ 
        junit4-1 & 101 & 10 & 0s & 10 & 1 & 16s \\ 
        junit4-2 & 101 & 9 & 0s & 10 & 0 & 15s \\ 
        junit4-3 & 501 & 10 & 1s & 10 & 2 & 1m 23s \\ 
        junit4-4 & 3 & 6 & 1s & 10 & 1 & 10s \\ 
        tomcat & 11 & 1 & 24s & 1 & 0 & 23s \\  
        cassandra-1 & 3 & 7 & 10s & 7 & 0 & 14s \\ 
        cassandra-2 & 3 & 8 & 11s & 8 & 0 & 15s \\ 
        cassandra-3 & 3 & 8 & 7s & 10 & 0 & 49s \\ 
        cassandra-4 & 4 & 10 & 4s & 10 & 0 & 23s \\ 
        cassandra-5 & 4 & 8 & 10s & 9 & 0 & 14s \\ \hline 
        ~ & ~ & 197 & 9h 6m & 290 & 12 & 10h 17m \\ \hline
    \end{tabular}
}
\vspace{-0.1in}
\end{table}
%!TEX root=../main.tex

\begin{table}[t]
    \caption{
    \tablabel{race}
    Synchronization-preserving v/s conflict-preserving data races.
    $\mathcal{N}$ and $\threads$ denote the number of events and threads in the executions.}
    \setlength{\tabcolsep}{0.5em}
    \renewcommand{\arraystretch}{0.9}
    \centering
    \footnotesize
    \scalebox{0.9}{
        \begin{tabular}{|c|c|c||c|c||c|c|}
            \hline
                1 & 2 & 3 & 4 & 5  & 6 & 7 \\ \hline
                Benchmark & $\mathcal{N}$ & $\threads$ & \multicolumn{2}{c||}{$\texttt{SyncP}$}  & \multicolumn{2}{c|}{$\texttt{ConfP}$} \\ \hline
                & & & Race & Time & Race & Time \\ \hline
                array & 51 & 4 & 0 & 0.07s & 0 & 0.18s \\ 
        critical & 59 & 5 & 3 & 0.07s & 3 & 0.04s \\ 
        account & 134 & 5 & 3 & 0.07s & 3 & 0.04s \\ 
        airlinetickets & 140 & 5 & 8 & 0.28s & 8 & 0.15s \\ 
        pingpong & 151 & 7 & 8 & 0.08s & 8 & 0.42s \\ 
        twostage & 193 & 13 & 4 & 0.14s & 4 & 0.28s \\ 
        wronglock & 246 & 23 & 25 & 0.20s & 25 & 0.17s \\ 
        boundedbuffer & 332 & 3 & 3 & 0.07s & 3 & 0.09s \\ 
        producerconsumer & 658 & 9 & 1 & 0.14s & 1 & 0.16s \\ 
        clean & 1.0K & 10 & 60 & 0.22s & 60 & 0.50s \\ 
        mergesort & 3.0K & 6 & 3 & 0.09s & 3 & 0.47s \\ 
        bubblesort & 4.0K & 13 & 269 & 0.56s & 269 & 58.41s \\ 
        lang & 6.0K & 8 & 400 & 0.25s & 400 & 0.96s \\ 
        readerswriters & 11K & 6 & 199 & 0.70s & 199 & 1.10s \\ 
        raytracer & 15K & 4 & 8 & 0.16s & 8 & 0.27s \\ 
        bufwriter & 22K & 7 & 8 & 0.66s & 8 & 5.97s \\ 
        ftpserver & 49K & 12 & 85 & 5.70s & 85 & 7984s \\ 
        moldyn & 200K & 4 & 103 & 0.65s & 103 & 3.24s \\ 
        derby & 1.0M & 5 & 29 & 12.99s & 29 & 29.30s \\ 
        linkedlist & 1.0M & 13 & 7095 & 118.08s & - & TO \\ 
        jigsaw & 3.0M & 12 & 6 & 1.16s & 6 & 2.77s \\ 
        sunflow & 11M & 17 & 119 & 2.18s & 119 & 169s \\ 
        cryptorsa & 58M & 9 & 35 & 115.54s & 35 & 258s \\ 
        xalan & 122M & 9 & 37 & 149.39s & 37 & 1075 \\ 
        lufact & 134M & 5 & 21951 & 45.80s & 21951 & 29.78s \\ 
        batik & 157M & 7 & 10 & 0.11s & 10 & 0.37s \\ 
        lusearch & 217M & 8 & 232 & 7.57s & 232 & 2685s \\ 
        tsp & 307M & 10 & 143 & 115.83s & 143 & 631.84s \\ 
        luindex & 397M & 3 & 15 & 0.73s & 15 & 0.70s \\ 
        sor & 606M & 5 & 0 & 10.45s & 0 & 27s \\\hline 
        Mean & ~ & ~ & ~ & 1.04s & ~ & >8.44s \\ \hline
            \end{tabular}
    }
    \vspace{-0.2in}
\end{table}
%!TEX root=../main.tex

\begin{table}[t]
\caption{
\tablabel{deadlock}
Synchronization-preserving v/s conflict-preserving deadlocks.
    $\mathcal{N}$ and $\threads$ denote the number of events and threads in the executions.}
\centering
\setlength{\tabcolsep}{0.5em}
\renewcommand{\arraystretch}{0.9}
\footnotesize
\scalebox{0.9}{
    \begin{tabular}{|c|c|c||c|c||c|c|}
        \hline
            1 & 2 & 3 & 4 & 5  & 6 & 7 \\ \hline
            Benchmark & $\mathcal{N}$ & $\threads$ & \multicolumn{2}{c||}{$\texttt{SyncP}$}  & \multicolumn{2}{c|}{$\texttt{ConfP}$} \\ \hline
            & & & Deadlock & Time & Deadlock & Time \\ \hline
        Picklock & 66 & 3 & 1 & 0.08s & 1 & 0.08s \\ 
        Bensalem & 68 & 4 & 1 & 0.08s & 1 & 0.08s \\ 
        Test-Dimminux & 73 & 3 & 2 & 0.09s & 2 & 0.10s \\ 
        StringBuffer & 74 & 3 & 2 & 0.09s & 2 & 0.08s \\ 
        Test-Calfuzzer & 168 & 5 & 1 & 0.08s & 1 & 0.21s \\ 
        DiningPhil & 277 & 6 & 1 & 0.09s & 1 & 0.30s \\ 
        HashTable & 318 & 3 & 2 & 0.10s & 2 & 0.20s \\ 
        Log4j2 & 1.49K & 4 & 1 & 0.35s & 1 & 0.56s \\ 
        Dbcp1 & 2.16K & 3 & 2 & 0.12s & 2 & 0.12s \\ 
        Derby2 & 2.55K & 3 & 1 & 0.09s & 1 & 0.09s \\ 
        jigsaw & 143K & 21 & 1 & 2.02s & - & TO \\ 
        JDBCMySQL-1 & 442K & 3 & 2 & 0.48s & 2 & 4.45s \\ 
        JDBCMySQL-2 & 442K & 3 & 1 & 0.43s & 1 & 0.35s \\ 
        JDBCMySQL-3 & 443K & 3 & 1 & 0.49s & 1 & 4.76s \\ 
        JDBCMySQL-4 & 443K & 3 & 2 & 1.12s & 2 & 5.14s \\ 
        ArrayList & 2.63M & 801 & 3 & 2.47s & - & TO \\ 
        IdentityHashMap & 2.71M & 801 & 1 & 1.95s & 1 & 3.79s \\ 
        Stack & 2.93M & 801 & 3 & 3.77s & 3 & 261s \\ 
        LinkedList & 3.40M & 801 & 3 & 3.02s & 3 & 1174s \\ 
        HashMap & 3.43M & 801 & 2 & 2.34s & - & TO \\ 
        WeakHashMap & 3.48M & 801 & 2 & 2.60s & - & TO \\ 
        Vector & 3.80M & 3 & 1 & 2.58s & 1 & 7.59s \\ 
        LinkedHashMap & 4.20M & 801 & 2 & 2.42s & - & TO \\ 
        TreeMap & 9.03M & 801 & 2 & 2.44s & 2 & 1564s \\ \hline
        Mean & ~ & ~ & ~ & 0.52s & ~ & >9.95s \\ \hline
        \end{tabular}
}
\vspace{-0.2in}
\end{table}

\end{document}